\tikzset{ modal/.style={>=stealth,shorten >=1pt,shorten <=1pt,auto,node distance=0.7cm, semithick}, world/.style={circle,draw,minimum size=0.2cm}, point/.style={circle,draw,inner sep=0.5mm,fill=black},point1/.style={circle,draw,inner sep=0.5mm}, reflexive above/.style={->,loop,looseness=7,in=120,out=60}, reflexive below/.style={->,loop,looseness=7,in=240,out=300}, reflexive left/.style={->,loop,looseness=7,in=150,out=210}, reflexive right/.style={->,loop,looseness=7,in=30,out=330}, itria/.style={draw,isosceles triangle,shape border rotate=270,yshift=-1.45cm, minimum height = 15mm}, empty/.style={inner sep=0.0mm } }
\renewcommand{\emptyset}{\varnothing}
\renewcommand{\phi}{\varphi}
\newcommand{\impl}{\rightarrow}
\renewcommand{\epsilon}{\varepsilon}
\renewcommand{\L}{{\sf L}}
\newcommand{\IPC}{{\sf IPC}}
\newcommand{\K}{{\sf K}}
\newcommand{\D}{{\sf D}}
\newcommand{\T}{{\sf T}}
\newcommand{\GL}{{\sf GL}}
\newcommand{\Kfour}{{\sf K4}}
\newcommand{\Sfour}{{\sf S4}}
\newcommand{\Sfive}{{\sf S5}}
\newcommand{\NK}{{\sf NK}}
\newcommand{\ND}{{\sf ND}}
\newcommand{\NT}{{\sf NT}}
\newcommand{\nestedK}{{\sf NK}}
\newcommand{\logic}{{\sf L}}
\newcommand{\imp}{\to}
\newcommand{\vlfill}[1]{\{#1\}}
\renewcommand{\vlhole}{\vlfill{\:}}
\newcommand{\idTnd}{\mathsf{id_P}}
\newcommand{\idTop}{\mathsf{id_{\top}}}
\newcommand{\ruled}{{\sf d}}
\newcommand{\rulet}{{\sf t}}
\newcommand{\ce}{\colonequals}
\newcommand{\cce}{\coloncolonequals}
\newcommand{\calM}{\mathcal{M}}
\newcommand{\calN}{\mathcal{N}}
\newcommand{\calI}{\mathcal{I}}
\newcommand{\calJ}{\mathcal{J}}
\newcommand{\calL}{\mathcal{L}}
\newcommand{\calG}{\mathcal{G}}
\newcommand{\calH}{\mathcal{H}}
\newcommand{\spdisj}{\mathbin{\varovee}}
\newcommand{\spconj}{\mathbin{\varowedge}}
\newcommand{\bigspdisj}{\mathop{\mathlarger{\mathlarger{\mathlarger{\mathlarger{\varovee}}}}}\limits}
\newcommand{\bigspconj}{\mathop{\mathlarger{\mathlarger{\mathlarger{\mathlarger{\varowedge}}}}}\limits}
\newcommand{\form}{{\sf form}}
\newcommand{\Var}{\mathit{Var}}
\newcommand{\Icomment}[1]{ }
 \title{Uniform interpolation via nested sequents \ifarxiv and hypersequents\fi}
  \author{Iris van der Giessen}{Utrecht University, Netherlands \and \url{https://www.uu.nl/staff/IvanderGiessen/}}{i.vandergiessen@uu.nl}{}{}
  \author{Raheleh Jalali}{Utrecht University, Netherlands \and \url{https://www.uu.nl/medewerkers/RJalaliKeshavarz}}{r.jalalikeshavarz@uu.nl}{}{}
  \author{Roman Kuznets}{TU Wien, Austria \and \url{https://sites.google.com/site/kuznets/}}{roman@logic.at}{}{}
  \authorrunning{I. van der Giessen, R. Jalali, and R. Kuznets}
\keywords{uniform interpolation, nested sequents, hypersequents, modal logic}
\begin{document}

\maketitle

\begin{abstract}
A modular proof-theoretic framework was recently developed to prove Craig interpolation for normal modal logics based on generalizations of sequent calculi (e.g.,~nested sequents, hypersequents, and labelled sequents). In this paper, we turn to  uniform interpolation, which is stronger than Craig interpolation. We develop a constructive method for proving uniform interpolation via nested sequents and apply it to reprove the uniform interpolation property for normal modal logics~$\K$,~$\D$,~and~$\T$. 
\ifarxiv We then use the know-how developed for nested sequents to apply the same method to hypersequents and obtain the first direct proof of uniform interpolation for~$\Sfive$ via a cut-free sequent-like calculus. \fi
While our method is proof-theoretic, the definition of uniform interpolation for nested sequents and hypersequents also uses  semantic notions, including bisimulation modulo an atomic proposition.
\end{abstract}

\section{Introduction}
Uniform interpolation is stronger than Craig interpolation and provides a simulation of quantifiers in a logic. Similar to Craig interpolation, uniform interpolation is useful in computer science, for example, in quantifier elimination procedures~\cite{GabSchmiSzas08} or in knowledge representation to perform tasks such as forgetting irrelevant information in descriptive logics~\cite{Koopmann15PhD}. This shows the practical value of uniform interpolation. The goal of this paper is to expand the reach of proof-theoretic method of proving uniform interpolation.

A propositional (modal) logic $\L$ admits the Craig interpolation property (CIP) if for any formulas~$\phi$~and~$\psi$ such that $\vdash_\L \phi \impl \psi$,  there is an interpolant $\theta$ containing only atomic propositions that occur in both $\phi$ and $\psi$ such that $\vdash_\L \phi \impl \theta$ and $\vdash_\L \theta \impl \psi$.    One could say that the purpose of the interpolant is to state the reason $\psi$ follows from $\phi$ by using the common language of the two. Logic $\L$ has the uniform interpolation property (UIP) if for each formula~$\phi$ and each atomic proposition~$p$ there are uniform interpolants~$\exists p \phi$~and~$\forall p \phi$ containing only atomic propositions occurring in $\varphi$ except for~$p$ such that for all formulas~$\psi$ not containing~$p$:\looseness=-1
\[
\vdash_\L \phi \impl \psi \ \Leftrightarrow \ \vdash_\L \exists p \phi \impl \psi \ \ \quad\text{ and }\quad \ \
\vdash_\L \psi \impl \phi \ \Leftrightarrow \ \vdash_\L \psi \impl \forall p \phi.
\]
It is well known that this property is stronger than Craig interpolation. Indeed, by computing uniform interpolants consecutively, it is possible to remove a given set of  atomic propositions and construct a formula that would uniformly serve as a Craig interpolant for a fixed $\phi$ and  all $\psi$ with a given common language. 

Analytic sequent calculi can be used to prove the CIP constructively. For the~UIP, terminating cut-free sequent calculi play a similar role. Whereas for the CIP the syntactic proofs are often straightforward, the case of the UIP is much more complicated. Pitts provided a first syntactic proof of this kind, establishing the UIP for $\IPC$~\cite{Pitts92JSL}. B\'{i}lkov\'{a} successfully adjusted the method to (re)prove the UIP for several modal logics including~$\K, \T$, and $\GL$~\cite{Bilkova06PhD}. Iemhoff  provided a modular method for (intuitionistic) modal logics and intermediate logics with sequent calculi consisting of so-called focused rules, among others  establishing the UIP for $\D$~\cite{Iemhoff19a, Iemhoff19b}.

There are also algebraic and model-theoretic methods. The UIP for~$\GL$ and $\K$ is due to Shavrukov~\cite{Shavrukov93}  and Ghilardi~\cite{Ghilardi95} respectively. Interestingly, modal logics~$\Sfour$~and~$\Kfour$ do not enjoy the UIP~\cite{Bilkova06PhD,GhiZaw95} despite enjoying the CIP. Visser provided purely semantic proofs for~$\K$,~$\GL$,~and~$\IPC$ based on bounded bisimulation up to atomic propositions~\cite{Visser96}.  This method was later applied to prove the stronger Lyndon UIP for a wide range of modal logics~\cite{Kur20}. The semantic interpretation of uniform interpolation is called bisimulation quantifiers, see~\cite{dAgostino07} for an extended explanation. Bisimulations will also play a role in the current paper. \looseness=-1

The proof-theoretic approach has two advantages. First, it enables one to find  interpolants constructively rather than merely prove their existence.\footnote{More precisely, the method enables one to find interpolants efficiently rather than by an exhaustive search of all formulas, the search that terminates due to the proven existence of an interpolant.} Second, it can turn  uniform interpolation into a powerful tool in the study of existence of proof systems. Negative results are obtained in \cite{Iemhoff19a, Iemhoff19b} stating that logics without the UIP cannot have certain natural sequent calculi. As a consequence, $\Kfour$ and $\Sfour$ do not possess such proof systems. Similar negative results are obtained for modal and substructural logics in~\cite{semianalytic1}~and~\cite{semianalytic2} using the~CIP and UIP. These methods are based on calculi with regular sequents. 

The goal of this paper is to extend the same line of research to multisequent formalisms such as hypersequents and nested sequents\footnote{Nested sequents are also known as tree-hypersequents~\cite{Pog09TL} or deep sequents~\cite{Brue09AML} in the literature.}. Such forms of sequent calculi have recently been adapted to prove the CIP of modal logics via nested sequents~\cite{FitKuz15APAL} and hypersequents~\cite{Kuz16Proof}. A modular proof-theoretic framework encompassing these and also labelled sequents was provided in~\cite{Kuz18APAL}. The same ideas were extended to multisequent calculi for intermediate logics~\cite{KuzLell18}. The method combines syntactic and semantic reasoning. Generalized Craig interpolants are constructed using the calculus in a purely syntactic manner, but the method's correctness uses semantic notions from Kripke models of the underlying logic.

This paper extends this method providing proof-theoretic proofs for the UIP based on nested sequents for $\K$,~$\D$,~and~$\T$ \ifarxiv and on hypersequents for~$\Sfive$\fi.  The UIP for these logics has been known, but we provide a new method that can hopefully be extended to other logics. Similar to \cite{Kuz18APAL}, we combine syntactic and semantic reasoning. We use the terminating calculi to define the uniform interpolants and then provide model modifications and use bisimulations to prove the correctness of these interpolants.

B\'{i}lkov\'{a} \cite{Bilkova11} also provided a syntactic method for uniform interpolation for $\K$ based on nested sequents. She presented proofs based on two nested calculi for $\K$: one with a standard modality and another that is based on a different modal language with a cover modality~$\nabla$. B\'{i}lkov\'{a}'s method for nested sequents is closely related to her work based on regular sequents in~\cite{Bilkova06PhD}. The main difference with our method is that we exploit the treelike structure of nested sequents reflecting the treelike models for $\K$ by incorporating  semantic arguments while the algorithm for the interpolant computation remains fully syntactic.  We intend  our method to form a good basis for generalizing to other logics with multisequent calculi.

The paper is organized as follows. In Sect.~\ref{sec:preliminaries}, we introduce the nested sequent calculi for~$\K$,~ $\T$,~and~$\D$, as well as model modifications invariant under bisimulation. In Sect.~\ref{sec:uip}, we present our method to prove uniform interpolation for~$\K$,~$\T$,~and~$\D$. \ifarxiv  In Sect.~\ref{sec:uipS5}, we show how the method can be adjusted to hypersequents for~$\Sfive$. \fi Section~\ref{sect:concl} concludes the paper and maps the immediate next steps.

\section{Preliminaries}\label{sec:preliminaries}
\begin{definition}
\emph{Modal formulas} in \emph{negation normal form} are defined by the following grammar $
\varphi \cce \bot \mid \top \mid p \mid \overline{p} \mid (\varphi \land \varphi) \mid (\varphi \lor \varphi) \mid \Box \varphi \mid \Diamond \varphi$
where\/ $\bot$~and\/~$\top$~are \emph{Boolean constants}, $p$~is an \emph{atomic proposition} from a countable set\/ {\sf Prop}, and $\overline{p}$~is the negation of~$p$ for each $p \in {\sf Prop}$. The set\/~{\sf Lit}  of \emph{literals} consists of all atomic propositions and their negations, with $\ell$~used to denote its elements. Literals and Boolean constants are \emph{atomic formulas}.
\end{definition}

We define~$\overline{\phi}$ (or~$\neg \phi$) recursively as usual using De~Morgan's laws to push the negation inwards. $\phi \to \psi\ce\overline{\phi} \lor \psi$.

\begin{definition}
\label{def:NestedSequents}
\emph{Nested sequents~$\Gamma$} are recursively defined in the following form: 
\[
\phi_1, \dots, \phi_n, [\Gamma_1], \dots, [\Gamma_m]
\]
is a nested sequent where $\phi_1, \dots, \phi_n$ are modal formulas for $n \geq 0$ and\/ $\Gamma_1,\dots,\Gamma_m$ are nested sequents for $m \geq 0$.
We call brackets\/~$[\enspace]$ a \emph{structural box}. The \emph{formula interpretation~$\iota$} of a nested sequent is defined recursively~by 
\[
\iota(\phi_1, \dots, \phi_n, [\Gamma_1], \dots, [\Gamma_m]) \ce \phi_1 \lor \dots \lor \phi_n \lor \Box\iota(\Gamma_1) \lor \dots \lor \Box \iota(\Gamma_m).
\]
\end{definition}

One way of looking at a nested sequent is to consider a tree of ordinary (one-sided) sequents, i.e.,~of multisets of formulas. Each structural box in the nested sequent creates a child in the tree. In order to be able to reason about formulas in a particular tree node, we introduce labels. A \emph{label} is a finite sequence of natural numbers. We denote labels by~$\sigma, \tau, \dots$; a label~$\sigma \ast n$ (or simply~$\sigma n$) denotes the label~$\sigma$ extended by the natural number~$n$.

\begin{definition}[Labeling]
For a nested sequent\/~$\Gamma$ and label~$\sigma$ we define a \emph{labeling function~$l_\sigma$} to recursively label structural boxes in nested sequents as follows: 
\[
l_\sigma(\phi_1, \dots, \phi_n, [\Gamma_1], \dots, [\Gamma_m]) \ce \phi_1, \dots, \phi_n, [l_{\sigma*1}(\Gamma_1)]_{\sigma*1}, \dots, [l_{\sigma*m}(\Gamma_m)]_{\sigma*m}.
\]
Let $\calL_\sigma(\Gamma)$~be the set of labels occurring in~$l_\sigma(\Gamma)$ plus label~$\sigma$ (for formulas outside all structural boxes). Define the labeled nested sequent $l(\Gamma) \ce l_1(\Gamma)$, and let $\calL(\Gamma)\ce\calL_1(\Gamma)$.\footnote{Labeled nested sequents are closely related to labelled sequents from~\cite{NegvPla11} but retain the nested notation.}

Formulas in a nested sequent\/~$\Gamma$ are \emph{labeled} according to the labeling of the structural boxes containing them. We write\/ $1 : \phi \in \Gamma$ if{f} the formula~$\phi$ occurs in\/~$\Gamma$ outside all structural boxes. Otherwise, $\sigma : \phi \in \Gamma$ whenever $\phi$ occurs in~$l(\Gamma)$ within a structural box labeled~$\sigma$. 
\end{definition}

The set~$\calL(\Gamma)$ can be considered as the set of nodes of the corresponding tree of~$\Gamma$, with $1$~being the root of this tree. Often, we do not distinguish between a nested sequent~$\Gamma$ and its labeled sequent~$l(\Gamma)$. For example, we say that $\sigma \in \Gamma$ if $\sigma \in \calL(\Gamma)$. 

\ifarxiv 
\begin{example}\label{ex:nestedsequent}
Consider a nested sequent $\Gamma =  \varphi, [p, \psi ], \bigl[\overline{p}, \phi,  [\chi]\bigr]$. The corresponding labeled nested sequent is 
$
l(\Gamma) = \varphi, [p, \psi ]_{11}, \left[\overline{p}, \phi,  [\chi]_{121}\right]_{12}
$
with $\calL(\Gamma)=\{1, 11, 12, 121\}$.
The corresponding tree is pictured as follows, where each node is labeled on the left and marked by its formulas on the right (in particular, here $1 : \phi \in \Gamma$ and $121: \chi \in \Gamma$, but  $12: \chi \notin \Gamma$):
\begin{center}
\begin{tikzpicture}[node distance=1cm,sibling distance=10em,
  point/.style={circle,draw,inner sep=0.5mm,fill=black},]]
  \node[point] (w) [label=left:{1}, label=right:{$\varphi$}] {};
  \node[point] (v1) [above left= of w, label=left:{11}, label=right:{$p, \psi$}] {};
  \node[point] (v2) [above right= of w, label=left:{12}, label=right:{$\overline{p}, \phi$ }] {};
  \node[point] (x) [above = of v2, label=left:{121}, label=right:{$\chi$}] {};
\path[-] (w) edge (v1);
\path[-] (w) edge (v2);
\path[-] (v2) edge (x);
\end{tikzpicture}
\end{center}
\end{example}

Following~\cite{Brue09AML}, we will work with contexts in rules to signify that the rules can be applied in an arbitrary node of the nested sequent. We will work with \emph{unary contexts} which are nested sequents with exactly one \emph{hole}, denoted by the symbol~$\vlhole$. Such contexts are denoted by~$\Gamma \vlhole$. The insertion~$\Gamma \{ \Delta \}$ of a nested sequent~$\Delta$ into a context~$\Gamma\vlhole$ is obtained by replacing the occurrence~$\vlhole$ with~$\Delta$. The hole~$\vlhole$ can be labeled the same way as  formulas. We write~$\Gamma \vlhole_\sigma$ to denote the label of the hole.

\begin{example}
$\Gamma'\vlhole =  \varphi, [p, \psi ], [\overline{p}, \vlhole]$ is  a context. Its labeled context is $\Gamma'\vlhole_{12} =  \varphi, [p, \psi ]_{11}, [\overline{p}, \vlhole]_{12}$. Let $\Delta = \phi, [\chi]$. Then $\Gamma'\{ \Delta \}$~equals~$\Gamma$ from Example \ref{ex:nestedsequent}. 
\end{example}
\fi

\begin{definition}[Variables]
Whether $X$~is a formula, or a sequence/set/multiset of formulas, or a nested sequent/context, or some other formula-based object, we denote by $\Var(X) \subseteq {\sf Prop}$ the set of atomic propositions occurring in~$X$ (note that $p$~may also occur in the form of~$\overline{p}$).
\end{definition}

\Icomment{
\begin{figure}[h!]
  \centering
  \[
    \xymatrix@R-3ex{ 
    {} & 
    *{\circ} 
   \save
      []+<-1.5ex,+1.5ex>*\txt{\scriptsize \sf S4}
    \restore 
    \ar@{-}[rrrr] \ar@{-}[dd] \ar@{-}[ld] & 
    {} & 
    {} & 
    {} & 
    *{\circ}
    \save 
      []+<1.5ex,+1.5ex>*\txt{\scriptsize \sf S5}
    \restore 
    \ar@{-}[ld] \ar@{-}[ddddd] 
    \\
    *{\circ} 
    \save 
      []+<-1.5ex,+1.5ex>*\txt{\scriptsize \sf T}
    \restore
    \ar@{-}[rrrr] \ar@{-}[ddd] & 
    {} & 
    {} & 
    {} &
    *{\circ} 
    \save 
      []+<-1.5ex,+1.5ex>*\txt{\scriptsize \sf TB}
    \restore 
    \ar@{-}[ddd] & 
    {} 
    \\
    {} & 
    *{\circ} 
    \save 
      []+<-1.5ex,+1.5ex>*\txt{\scriptsize \sf D4}
    \restore 
    \ar@{-}[ddd] \ar@{-}[rr] & 
    {} & 
    *{\circ} 
    \save
      []+<2ex,-1.5ex>*\txt{\scriptsize \sf D45}
    \restore
    \ar@{-}@/_/[uurr] & 
    {} & 
    {} 
    \\
    {} & 
    {} & 
    *{\circ} 
    \save 
      []+<2ex,-1.5ex>*\txt{\scriptsize \sf D5}
    \restore
    \ar@{-}[ur] & 
    {} & 
    {} & 
    {} 
    \\
    *{\circ} 
    \save 
      []+<-1.5ex,0ex>*\txt{\scriptsize \sf D}
    \restore
    \ar@{-}[ddd] \ar@{-}[ruu] \ar@{-}[urr] \ar@{-}[rrrr] & 
    {} & 
    {} & 
    {} & 
    *{\circ} 
    \save 
      []+<2.2ex,0ex>*\txt{\scriptsize \sf DB}
    \restore
    \ar@{-}[ddd] & 
    {} 
    \\
    {} & 
    *{\circ} 
    \save 
      []+<-1.5ex,+1.5ex>*\txt{\scriptsize \sf K4}
    \restore 
    \ar@{-}[ldd] \ar@{-}[rr] & 
    {} & 
    *{\circ} 
    \save
      []+<1.5ex,-1.5ex>*\txt{\scriptsize \sf K45}
    \restore 
    \ar@{-}[uuu] \ar@{-}[rr] & 
    {} &
    *{\circ} 
    \save 
      []+<2.5ex,-1ex>*\txt{\scriptsize \sf KB5}
    \restore 
    \ar@{-}[ldd] 
    \\
    {} & 
    {} & 
    *{\circ} 
    \save 
      []+<+1.5ex,-1.5ex>*\txt{\scriptsize \sf K5}
    \restore 
    \ar@{-}[ru] \ar@{-}[uuu] & 
    {} & 
    {} & 
    {} 
    \\
    *{\circ} 
    \save 
      []+<-1.5ex,-1.5ex>*\txt{\scriptsize \sf K}
    \restore
    \ar@{-}[rrrr] \ar@{-}[rru] & 
    {} & 
    {} & 
    {} & 
    *{\circ} 
    \save 
      []+<1.5ex,-1.5ex>*\txt{\scriptsize \sf KB}
    \restore 
    & {} 
    \\
    }
  \]
\caption{The \emph{modal cube}}
\label{fig:modalCube}
\end{figure}
}

In this paper we use nested sequent calculi for classical modal logics~$\K$, $\D$,~and~$\T$ from~\cite{Brue09AML}. Recall that $\K$~consists of all classical tautologies, the ${\sf k}$-axiom $\Box(\phi \impl \psi) \impl (\Box \phi \impl \Box \psi)$ and is closed under \emph{modus ponens} (from $\phi \impl \psi$ and~$\phi$, infer~$\psi$) and \emph{necessitation} (from~$\phi$, infer~$\Box \phi$).  
Further,  $\D \ce \K + \Box \phi \impl \Diamond \phi$ and $\T \ce \K + \Box \phi \impl \phi$. 
We now introduce  nested sequent calculi and then Kripke semantics for these logics.

The terminating nested sequent calculus~\nestedK{} for the modal logic~\K{} consists of all rules in the first two rows in  Fig.~\ref{fig:rules2} plus the rule~{\sf k}. 
This calculus is an extension of the multiset-based version from~\cite{Brue09AML}
to the language with  Boolean constants~$\bot$~and~$\top$, necessitating an addition 
of the rule~$\idTop$ for handling these (cf. also the treatment of Boolean constants in~\cite{FitKuz15APAL}). 
The nested calculus~{\sf ND}~({\sf NT}) for the modal logic~{\sf D}~({\sf T}) is obtained by adding to~\nestedK{} the rule~$\ruled$~($\rulet$). As shown in~\cite{Brue09AML}, the nested sequent calculi~$\NK$, $\ND$,~and~$\NT$ are sound and complete for modal logics~$\K$, $\D$,~and~$\T$ respectively, i.e.,~a nested sequent~$\Gamma$ is derivable in~$\NK$~($\ND$,~$\NT$) if and only if its formula interpretation~$\iota(\Gamma)$ is a theorem of~$\K$~($\D$,~$\T$).
\begin{figure}[h]
\centering
  \fbox{\parbox{.7\textwidth}{
    \begin{center}
      $\vlinf{\idTnd}{}{\Gamma\vlfill{p,\overline p}}{}$
      \qquad
      $\vlinf{\idTop}{}{\Gamma\vlfill{\top}}{}$
      \qquad
      $\vlinf{\lor}{}{\Gamma\vlfill{\varphi \lor \psi}}{\Gamma\vlfill{\varphi \lor \psi,  \varphi,\psi}}$
      \end{center}  
      \begin{center}
      \quad
      $\vliinf{\land}{}{\Gamma\vlfill{\varphi \land \psi}}{\Gamma\vlfill{\varphi \land \psi,\varphi}}{\Gamma\vlfill{\varphi \land \psi,\psi}}$
      \qquad
      $\vlinf{\Box}{}{\Gamma\vlfill{\Box \varphi}}{\Gamma\vlfill{\Box \varphi,[\varphi]}}$
      \end{center}
        \begin{center}
      $\vlinf{\mathsf{k}}{}{\Gamma\vlfill{\Diamond \varphi,[\Delta]}}{\Gamma\vlfill{ \Diamond \varphi,[\Delta,\varphi]}}$
    \qquad
      $\vlinf{\ruled}{}{\Gamma\vlfill{\Diamond \varphi}}{\Gamma\vlfill{\Diamond \varphi,[\varphi]}}$
      \qquad
      $\vlinf{\rulet}{}{\Gamma\vlfill{\Diamond \varphi}}{\Gamma\vlfill{\Diamond \varphi,\varphi}}$
    \end{center}
  }}
\caption{Terminating nested rules: the principal formula is not saturated.}
\label{fig:rules2}
\end{figure}

\begin{definition}[Saturation] \label{def:saturated}
Consider a sequent\/ $\Gamma=\Gamma'\{\theta \}_\sigma$, i.e.,~$\sigma : \theta \in \Gamma$. The formula~$\theta$ is \emph{$\K$-saturated in~$\Gamma$} if the following conditions hold depending on the form of~$\theta$:\looseness=-1
\begin{itemize}
\item $\theta$~is  an atomic formula;
\item if $\theta = \varphi \lor \psi$, then both $\sigma : \varphi\in \Gamma$ and $\sigma : \psi\in \Gamma$;
\item if $\theta = \varphi \land \psi$, then  either $\sigma : \varphi\in \Gamma$ or $\sigma : \psi\in \Gamma$; 
\item if $\theta = \Box \varphi$, then there is a label $\sigma \ast n \in \calL(\Gamma)$ such that $\sigma \ast n : \varphi\in \Gamma$.
\end{itemize}
The formula~$\theta$ of the form~$\Diamond \varphi$ is 
\begin{itemize}
\item \emph{$\K$-saturated in~$\Gamma$  w.r.t.~$\sigma\ast n\in \calL(\Gamma)$} if
$\sigma \ast n : \varphi\in \Gamma$;
\item
\emph{$\D$-saturated in~$\Gamma$} if 
 there is a label $\sigma \ast n  \in \calL(\Gamma)$;
\item
\emph{$\T$-saturated in~$\Gamma$} if 
$\sigma : \varphi\in \Gamma$.
\end{itemize}
A nested sequent\/~$\Gamma$ is\/ $\K$-\emph{saturated}  if\/ $(1)$~it is neither of the form\/ $\Gamma'\{p,\overline{p} \}$ for some atomic proposition $p \in {\sf Prop}$ nor of the form\/ $\Gamma'\{\top\}$;\/ $(2)$~all its formulas $\sigma:\Diamond \varphi$ are\/ $\K$-saturated w.r.t.~every child of~$\sigma$; and\/ $(3)$~all its other formulas are\/  $\K$-saturated in\/~$\Gamma$. A nested sequent is\/ $\D$-saturated\/~$(\T$-saturated$)$ if it is\/ $\K$-saturated and all its formulas $\sigma:\Diamond \varphi$ are\/ $\D$-saturated\/~$(\T$-saturated$)$ in\/~$\Gamma$. \looseness=-1
\end{definition}

\ifarxiv 
\begin{example} 
The sequent~$\Gamma=[\Diamond \varphi]$ is $\K$-saturated  but neither $\D$-saturated  nor $\T$-saturated. Indeed, for the logic~$\D$ we would need~$1*1*n:\varphi$ to be present for some~$n$ and for the logic~$\T$ we would need to have~$1*1:\varphi$ in order to saturate $1*1 : \Diamond \varphi\in\Gamma$.
\end{example}

The rules from Fig.~\ref{fig:rules2} with embedded contraction are sometimes called Kleene'd rules. Following~\cite{Brue09AML}, in order to ensure finite proof search, we only apply a rule when the principal formula in the conclusion is not saturated w.r.t.~this rule, i.e.,~$\varphi \lor \psi$, $\varphi \land \psi$, and~$\Box \varphi$ are not $\K$-saturated, $\Diamond \varphi$~in the rule~{\sf k} is not $\K$-saturated w.r.t.~the label of the bracket containing~$\Delta$,  $\Diamond \varphi$~in the rule~{\sf d} is not $\D$-saturated, and  $\Diamond \varphi$~in the rule~{\sf t} is not $\T$-saturated. Since for Kleene'd rules principal formulas are preserved in the premises, the number of applications of each of the rules~$\mathsf{k}$, $\mathsf{d}$,~and~$\mathsf{t}$ is bounded. The way to think of a saturated sequent is that in a bottom-up proof search when we reach a saturated sequent, it does not make sense to apply more rules as these would only lead to duplications. 
\fi

\begin{theorem}[Br\"{u}nnler~\cite{Brue09AML}]\label{lem:terminationK}
The calculi for\/~$\K, \D$,~and\/~$\T$ in Fig.~\ref{fig:rules2} are terminating.
\end{theorem}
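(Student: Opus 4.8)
The plan is to prove termination by exhibiting a well-founded measure on nested sequents that strictly decreases with each bottom-up rule application. The key observation is that the Kleene'd formulation, combined with the saturation-based side conditions, guarantees that every rule both preserves its principal formula and either introduces strictly new labeled formulas (a formula at a label where it did not previously occur) or creates a new structural box. First I would fix, for a given end-sequent, the finite set of subformulas of all formulas occurring in it; since no rule in Fig.~\ref{fig:rules2} ever produces a formula that is not a subformula of something already present, the set of formulas that can ever appear at any label is bounded by this finite subformula set. I would then bound the number of labels (tree nodes) that can be generated: the only rules creating a new structural box are~$\Box$ and~$\ruled$, and each is constrained by its saturation side condition to fire at most once per principal~$\Box\varphi$ or~$\Diamond\varphi$ per node, so the branching and depth of the tree are both bounded in terms of the subformula set.

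Next I would assign to each labeled nested sequent~$\Gamma$ a measure counting, over all labels~$\sigma\in\calL(\Gamma)$, the number of \emph{unsaturated} obligations: for each non-atomic formula~$\sigma:\theta\in\Gamma$, the amount of saturation data still missing (e.g.~for $\sigma:\varphi\lor\psi$ whether $\sigma:\varphi$ and $\sigma:\psi$ are present, for $\sigma:\Box\varphi$ whether some child witness~$\sigma\ast n:\varphi$ exists, and for $\sigma:\Diamond\varphi$ whether each relevant child carries~$\varphi$). The crux of the argument is that each rule application, read bottom-up, strictly reduces this measure: the side conditions ensure a rule fires only when its principal formula is \emph{not} saturated w.r.t.~that rule, and the premises differ from the conclusion precisely by adding the missing witnesses, thereby saturating the principal formula (or advancing it toward saturation) without un-saturating anything already saturated. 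Because principal formulas are retained in the premises, no rule can be undone and no formula's saturation status regresses.

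The main obstacle will be the~$\Box$ and~$\ruled$ rules, which enlarge the underlying tree rather than merely adding formulas at existing labels, so the measure must account for newly created nodes. I would handle this by organizing the measure lexicographically, or equivalently as a multiset over tree depth: a freshly created child node starts with its own saturation obligations, but the total number of nodes is a priori bounded (by the previous paragraph), and within each fixed finite tree the formula-saturation measure is strictly decreasing and bounded below by zero. Thus I would argue that along any branch of the proof-search tree the pair \emph{(number of labels that can still be created, total remaining saturation obligations)} decreases in a well-founded order, so no infinite branch exists. Finally, since the calculus is finitely branching (finitely many rule instances apply to any sequent), K\"onig's lemma yields that the entire proof-search tree is finite, establishing termination for each of~$\K$,~$\D$,~and~$\T$; the three cases differ only in which $\Diamond$-saturation condition and which extra rule (none,~$\ruled$,~or~$\rulet$) is used, and the measure argument is uniform across them.
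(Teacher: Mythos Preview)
The paper does not supply its own proof of this theorem: it is stated with attribution to Br\"unnler~\cite{Brue09AML} and left unproved, with only the informal remark (in the preceding paragraph) that Kleene'd rules preserve principal formulas so each rule fires boundedly often. There is thus nothing to compare your argument against in the paper itself.

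Your sketch is a correct reconstruction of the standard argument and would go through. The essential ingredients---the subformula property, the a~priori bound on the number of nodes (depth bounded by modal depth, branching bounded by the number of distinct $\Box$-subformulas plus at most one child from~$\ruled$), and the observation that each rule application under the saturation side conditions strictly adds a fresh labeled formula or a fresh label---are all present and correctly identified. One simplification you might consider: rather than a lexicographic measure on ``nodes still creatable'' versus ``remaining obligations,'' it suffices to observe that the set of pairs $(\sigma,\varphi)$ that can ever appear is finite (bounded labels times bounded subformulas), and every rule application strictly enlarges the multiset of such pairs in the premise; monotone growth in a finite ambient set terminates. This avoids having to reason about obligations temporarily increasing when a new node is created. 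Either way, finite branching plus no infinite branch gives termination via K\"onig's lemma, exactly as you conclude.
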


Intuitively, nested sequents capture the tree structure of Kripke models for modal logics. 
We define truth for nested sequents in Kripke models and then  recall relevant facts about bisimulations and introduce model modifications that we use in the proof of uniform interpolation.\looseness=-1

\begin{definition} \label{def: Kripke Frames and Models}
A \emph{Kripke model} is a triple $\calM = (W,R,V)$, where 
$W\ne \emptyset$\ifarxiv{} is a  set of \emph{worlds} or \emph{nodes}\fi, 
$R\subseteq W \times W$, and 
$V : {\sf Prop} \to 2^W$ is a \emph{valuation function}\ifarxiv{} mapping each atomic proposition $p \in {\sf Prop}$ to a set~$V(p)$ of worlds from~$W$\fi. 
\ifarxiv{}If~$vRw$, we say that $w$~is \emph{accessible from}~$v$, or that $v$~is a \emph{parent} of~$w$, or that $w$~is a \emph{child} of~$v$. \fi
We define  $\calM, w \models \phi$ by induction on the construction of~$\phi$ as usual:
 $\calM, w \models \top$ and $\calM, w \not \models \bot$; 
 for $p \in {\sf Prop}$, we have $\calM, w \models p$ if{f}  $w \in V(p)$ and $\calM, w \models \overline{p}$ if{f} $w \notin V(p)$;  
 we have $\calM, w \models \varphi \wedge \psi$ $(\calM, w \models \varphi \vee \psi)$ if{f} $\calM, w \models \varphi$ and (or) $\calM, w \models \psi$; 
 finally, $\calM, w \models \Box \varphi$ if{f} $\calM, v \models \varphi$ whenever~$wRv$ and 
 $\calM, w \models \Diamond \varphi$ if{f} $\calM, v \models \varphi$ for some~$wRv$.
A formula~$\phi$ is \emph{valid in~$\calM$}, denoted $\calM \models \phi$, when $\calM, w \models \phi$ for all $w \in W$.  

A model $\calM'=(W',R',V')$ is a \emph{submodel} of $\calM=(W,R,V)$ when $W'\subseteq W$, $R' = R \cap (W' \times W')$, and $V'(p) = V(p)\cap W'$ for each $p \in {\sf Prop}$. A \emph{submodel generated by $w \in W$}, denoted~$\calM_w=(W_w,R_w,V_w)$, is the smallest submodel $\calM'=(W',R',V')$ of $\calM$  such that $(1)$~$w \in W'$ and $(2)$~$v \in W'$ whenever~$xRv$ and $x \in W'$. 
\end{definition}

We will use models based on  finite intransitive directed trees,  usually denoting the \emph{root}~$\rho$. For~$\T$, the accessibility relation~$R$ is required to be reflexive, i.e.,~$\forall w \in W w R w$. For~$\D$, the accessibility relation~$R$ must be serial, i.e.,~$\forall w \in W \exists v \in W w R v$. Note that such seriality implies reflexivity of the leaves of the tree. Finally, we assume~$R$ to be irreflexive  for~$\K$. From now on we call these models $\T$-models, $\D$-models, and $\K$-models respectively. 

\begin{theorem}
If\/ $\L \in \{\K, \D, \T\}$, then\/
$\vdash_\L \phi$ if{f} $\calM \models \phi$ for each\/ $\L$-model~$\calM$.
\end{theorem}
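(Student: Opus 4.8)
The plan is to prove the two directions separately, using the terminating nested calculi of \cref{fig:rules2} as the main tool.

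\emph{Soundness} ($\vdash_\L \phi \Rightarrow \calM \models \phi$ for every $\L$-model $\calM$). The $\L$-models are a subclass of the Kripke frames for which $\L$ is sound in the usual Hilbert-style sense: every frame validates the classical tautologies and the ${\sf k}$-axiom and is closed under \emph{modus ponens} and \emph{necessitation}, while serial (reflexive) frames additionally validate $\Box\phi \impl \Diamond\phi$ (respectively $\Box\phi\impl\phi$). Since $\K$-models are frames, $\D$-models are serial, and $\T$-models are reflexive, each theorem of $\L$ is valid on all $\L$-models. (Equivalently, one checks directly that every rule of the calculus preserves validity of the formula interpretation $\iota$ on $\L$-models and invokes Br\"unnler's completeness of the calculus for $\L$.)

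\emph{Completeness} ($\calM\models\phi$ for all $\L$-models $\Rightarrow \vdash_\L\phi$). I would argue contrapositively and build a finite tree countermodel from a failed proof search. Suppose $\not\vdash_\L\phi$; then the single-formula sequent $\phi$ has no derivation, so by \cref{lem:terminationK} backward proof search terminates in a branch ending in an $\L$-saturated sequent $\Gamma$ that is neither of the form $\Gamma'\{p,\overline p\}$ nor $\Gamma'\{\top\}$ (otherwise the branch would close via $\idTnd$ or $\idTop$). Using the labels as worlds (cf.\ \cref{def: Kripke Frames and Models}), I take $W \ce \calL(\Gamma)$ with root $\rho\ce 1$, declare $\sigma R \sigma\ast n$ exactly when $\sigma\ast n\in\calL(\Gamma)$, and set $\sigma\in V(p)$ iff $\sigma:\overline p\in\Gamma$. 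Non-axiomaticity guarantees that no world carries both $p$ and $\overline p$, so $V$ is well defined, and that $\top$ never occurs; the underlying graph is by construction a finite intransitive directed tree. To realize the frame conditions I adjust $R$: for $\T$ I add a reflexive loop at every world, and for $\D$ I add one at every leaf (internal nodes are already serial).

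\emph{Truth lemma and main obstacle.} I would then prove by induction on $\psi$ that $\sigma:\psi\in\Gamma$ implies $\calM,\sigma\not\models\psi$, which applied to the root yields $\calM,\rho\not\models\phi$. The atomic and Boolean cases hold by the choice of $V$ and non-axiomaticity; the $\lor$, $\land$, and $\Box$ cases follow from the matching clauses of \cref{def:saturated} and the induction hypothesis (for $\Box\varphi$, $\K$-saturation supplies a child $\sigma\ast n:\varphi$ falsifying the box at $\sigma$). The delicate case is $\Diamond\varphi$ in the presence of the frame conditions: here I must verify that the tailored saturation conditions of \cref{def:saturated} match exactly the successors created by seriality and reflexivity. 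For $\K$, $\K$-saturation w.r.t.\ every child makes $\varphi$ false at all (and only) successors; for $\T$, the extra clause $\sigma:\varphi\in\Gamma$ additionally falsifies $\varphi$ at the reflexive point; for $\D$, the reflexive loops are added only at leaves, and $\D$-saturation forbids $\Diamond$- and $\Box$-formulas there, so leaves carry only atomic formulas on which the loops have no effect. Checking that these added loops generate no new, unsatisfied truth obligations is the crux of the argument, and it is precisely what the logic-specific saturation conditions are designed to secure.
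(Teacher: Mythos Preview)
The paper does not prove this theorem; it is stated without proof as a standard completeness result for the restricted classes of finite tree-shaped models introduced just before it. Your argument is essentially the saturation/failed-proof-search construction of a countermodel from the terminating nested calculus (the method of Br\"unnler, whom the paper cites for \cref{lem:terminationK}), and it is correct in outline.

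One small imprecision: in the $\D$ case you write that ``leaves carry only atomic formulas.'' That is not quite right---a leaf may well carry propositional compounds such as $\varphi\lor\psi$ or $\varphi\land\psi$ together with their saturation witnesses. What is true, and what your argument actually uses, is that leaves carry no \emph{modal} formulas: $\K$-saturation of any $\Box\varphi$ and $\D$-saturation of any $\Diamond\varphi$ each force the existence of a child, so neither can sit at a leaf. Since the added reflexive loop at a leaf only affects the semantics of modal operators, and there are none in $\Gamma$ at that node, the truth lemma goes through unchanged. With this correction your proof is fine.
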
 

Following~\cite{Kuz18APAL}, we now extend the definitions of truth and validity to nested sequents.

\begin{definition}\label{def: Multiworld Interpretation2}\label{def:truth_seq}
A \emph{(treelike) multiworld interpretation of a nested sequent\/~$\Gamma$ into a model $\calM=(W,R,V)$} is a function $\calI : \calL(\Gamma) \impl W$ from labels in\/~$\Gamma$ to worlds of~$\calM$ such that $\calI(\sigma)R\calI(\sigma \ast n)$ whenever $\{\sigma, \sigma \ast n\} \subseteq \calL(\Gamma)$. 
Then 
\[
\calM, \calI \models \Gamma 
\qquad \Longleftrightarrow \qquad
\calM, \calI(\sigma) \models \varphi \text{ for some }  \sigma : \varphi \in \Gamma.
\]
$\Gamma$~is \emph{valid in~$\calM$}, denoted $\calM \models \Gamma$, means that $\calM, \calI \models \Gamma$ for all multiworld interpretations~$\calI$ of\/~$\Gamma$ into~$\calM$. 
\end{definition}

The following lemma, which can be easily proved by induction on the structure of~$\Gamma$, implies completeness for validity of nested sequents.

\begin{lemma}\label{lem:completenessnested}
For a nested sequent\/~$\Gamma$  and a model~$\calM$, we have $\calM \models \Gamma$ if{f} $\calM \models \iota(\Gamma).$
\end{lemma}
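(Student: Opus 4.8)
The plan is to prove both directions of the biconditional, with the left-to-right direction being the essentially new content and the right-to-left direction being a routine consequence of the definitions. Throughout I would fix a nested sequent~$\Gamma$ and a model~$\calM = (W,R,V)$, and proceed by induction on the structure of~$\Gamma$, since the recursive definitions of both~$\iota$ (Definition~\ref{def:NestedSequents}) and of the multiworld interpretation (Definition~\ref{def:truth_seq}) are set up for exactly this kind of structural recursion. Write $\Gamma = \phi_1, \dots, \phi_n, [\Gamma_1], \dots, [\Gamma_m]$, so that $\iota(\Gamma) = \phi_1 \lor \dots \lor \phi_n \lor \Box\iota(\Gamma_1) \lor \dots \lor \Box\iota(\Gamma_m)$.

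For the direction $\calM \models \iota(\Gamma) \Rightarrow \calM \models \Gamma$, I would take an arbitrary multiworld interpretation~$\calI$ of~$\Gamma$ into~$\calM$ and show $\calM, \calI \models \Gamma$. Evaluating the disjunction $\iota(\Gamma)$ at the world~$\calI(1)$ (the image of the root label), the hypothesis $\calM \models \iota(\Gamma)$ gives $\calM, \calI(1) \models \iota(\Gamma)$, so at least one disjunct is true there. If it is some~$\phi_i$, then since $1 : \phi_i \in \Gamma$ we are done immediately by the definition of $\calM, \calI \models \Gamma$. If instead some disjunct $\Box\iota(\Gamma_j)$ holds at~$\calI(1)$, then because $\calI(1) R \calI(1 \ast j)$ (this is exactly the accessibility condition built into the definition of a multiworld interpretation), the semantics of~$\Box$ yields $\calM, \calI(1\ast j) \models \iota(\Gamma_j)$. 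Here I would apply the induction hypothesis to the subsequent~$\Gamma_j$, using the restricted interpretation that~$\calI$ induces on the labels of~$\Gamma_j$; this produces a labeled formula of~$\Gamma_j$ true under the induced interpretation, which reattaches to a labeled formula $\sigma : \psi \in \Gamma$ witnessing $\calM, \calI \models \Gamma$.

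For the converse $\calM \models \Gamma \Rightarrow \calM \models \iota(\Gamma)$, I would show $\calM, w \models \iota(\Gamma)$ for an arbitrary world~$w \in W$. The natural move is to build a multiworld interpretation~$\calI$ with $\calI(1) = w$ and then invoke $\calM \models \Gamma$; the subtlety is that such an~$\calI$ need not exist for every~$w$, since a child label demands an $R$-successor and~$w$ may be a dead end. To handle this cleanly I would again argue by induction on~$\Gamma$ and directly unfold the disjunction: it suffices to show that if none of the~$\phi_i$ holds at~$w$ and none of the~$\Box\iota(\Gamma_j)$ holds at~$w$, we reach a contradiction with $\calM \models \Gamma$. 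If some $\Box\iota(\Gamma_j)$ fails at~$w$, there is a successor $w R v$ with $\calM, v \not\models \iota(\Gamma_j)$, and by the induction hypothesis (contrapositive) there is an interpretation of~$\Gamma_j$ into~$\calM$ based at~$v$ falsifying~$\Gamma_j$; combined with~$w$ at the root, this assembles a multiworld interpretation of all of~$\Gamma$ that falsifies it, contradicting $\calM \models \Gamma$. When every $\Box\iota(\Gamma_j)$ holds at~$w$ but no~$\phi_i$ does, any interpretation based at~$w$ already makes $\iota(\Gamma)$ true, so I would instead argue that the failure of all disjuncts lets me pick falsifying successor-interpretations node by node to contradict validity.

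I expect the main obstacle to be bookkeeping rather than conceptual: namely, the careful passage between the interpretation~$\calI$ of the whole sequent~$\Gamma$ and the induced interpretations of its immediate subsequents~$\Gamma_j$, keeping the label-shift (the child of the root labeled $1 \ast j$ corresponds to the root~$1$ of~$\Gamma_j$) correct, and making sure in the converse direction that successor worlds witnessing the failure of a~$\Box$ can always be chosen so that the child-interpretations glue into a single well-defined multiworld interpretation of~$\Gamma$. As the excerpt itself notes, this lemma \emph{can be easily proved by induction on the structure of~$\Gamma$}, so none of the individual steps is deep; the only place demanding care is ensuring the accessibility constraints of a multiworld interpretation are respected in both the construction (converse direction) and the use (forward direction) of these interpretations.
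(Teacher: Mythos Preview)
Your approach is essentially the paper's, but you have not formulated the induction hypothesis correctly, and this is why your converse direction becomes tangled. The lemma is stated at the level of validity (quantifying over all interpretations and all worlds), but the induction must be carried out on two \emph{pointwise} strengthenings: (a)~if $\calM,\calI \not\models \Gamma$ then $\calM,\calI(1)\not\models\iota(\Gamma)$, and (b)~if $\calM,w\not\models\iota(\Gamma)$ then there exists~$\calI$ with $\calI(1)=w$ and $\calM,\calI\not\models\Gamma$. These are exactly what the paper proves. Your forward direction implicitly uses~(a) in contrapositive form when you ``apply the induction hypothesis to~$\Gamma_j$ using the restricted interpretation,'' but the IH as you stated it (validity-level) does not give you that.

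In your converse direction the case analysis is incoherent: you assume none of the $\Box\iota(\Gamma_j)$ holds at~$w$, then later write ``when every $\Box\iota(\Gamma_j)$ holds at~$w$.'' Once you adopt the pointwise formulation~(b), the argument is clean: from $\calM,w\not\models\iota(\Gamma)$, \emph{every} disjunct fails, so for \emph{each}~$j$ you get a successor~$v_j$ with $\calM,v_j\not\models\iota(\Gamma_j)$; apply the IH to each~$\Gamma_j$ to obtain falsifying interpretations~$\calI_j$ with $\calI_j(1)=v_j$, and glue them together with $\calI(1)\ce w$. There is no separate case where some boxes hold and some do not. The dead-end worry you raise (``such an~$\calI$ need not exist'') evaporates too: if some $\Box\iota(\Gamma_j)$ fails at~$w$, a successor exists automatically, and if $m=0$ there is nothing to extend.
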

\ifarxiv\begin{proof}
By induction on the structure of~$\Gamma$, 
we prove  that $\calM, \calI \not \models \Gamma$ implies $\calM,\calI(1) \not \models \iota(\Gamma)$ for one direction and that  $\calM, w \not \models \iota(\Gamma)$ implies $\calM,\calI \not \models \Gamma$ for some~$\calI$ such that $\calI(1)=w$ for the other direction. Let $\Gamma$~be of the form $\phi_1, \dots, \phi_n, [\Gamma_1], \dots, [\Gamma_m]$. 

First suppose $\calM, \calI \not \models \Gamma$. Then for  all $\sigma:\psi \in \Gamma$ we have $\calM, \calI(\sigma) \not \models \psi$, in particular,  $\calM, \calI(1) \not \models \phi_i$ for all~$i$. In addition, we show that $\calM, \calI(1) \not \models \Box \iota(\Gamma_j)$ for all~$j$. To prove this, we define~$\calI_j$ as follows: $\calI_j(1 * \sigma') \ce \calI(1*j*\sigma')$ for each $1 * \sigma' \in \calL(\Gamma_j)$; in particular, $\calI_j(1) \ce \calI(1*j)$. 
It is easy to see that $\calI_j$~is a multiworld interpretation of~$\Gamma_j$ into~$\calM$  and that  $\calM, \calI_j \not \models \Gamma_j$. Thus, by induction hypothesis,  $\calM, \calI_j(1) \not \models \iota(\Gamma_j)$, i.e.,~$\calM, \calI(1*j) \not \models \iota(\Gamma_j)$. Since $\calI(1)R\calI(1*j)$, it follows that $\calM, \calI(1) \not \models \Box \iota(\Gamma_j)$. We conclude that $\calM, \calI(1) \not \models \iota(\Gamma)$.

Now suppose $\calM, w \not \models \iota(\Gamma)$. For each~$j$, there is a world~$v_j$ such that~$wRv_j$ and $\calM, v_j \not \models \iota(\Gamma_j)$. By induction hypothesis, there exists a multiworld interpretation~$\calI_j$ of~$\Gamma_j$ into~$\calM$ such that $\calI_j(1) = v_j$ and $\calM, \calI_j \not \models \Gamma_j$. Define~$\calI$ as follows: $\calI(1)\ce w$ and $\calI(1*j*\sigma) \ce \calI_j(1*\sigma)$. We immediately have $\calM, \calI \not \models \Gamma$.
\end{proof}\fi

We now define bisimulations modulo an atomic proposition~$p$, similar to the ones from~\cite{dAgostino07,Visser96}, where uniform interpolation is studied on the basis of bisimulation quantifiers. While those papers focus on purely semantic methods, we embed the semantic tool of bisimulation into our constructive proof-theoretic approach in Sect.~\ref{sec:uip}.
Our bisimulations  behave largely like standard bisimulations except they do not have to preserve the truth of formulas with occurrences of~$p$. 
\begin{definition}[Bisimilarity] \label{def: Bisimilarity} 
A \emph{bisimulation  up to an atomic proposition~$p$ between models $\calM=(W,R,V)$ and $\calM'=(W',R',V')$} is a non-empty binary relation $Z \subseteq W \times W'$ such that the following conditions hold for all $w \in W$ and $w' \in W'$ with~$wZ w'$:
\begin{description}
\item[\textup{atoms}$_p$.] 
$w \in V(q)$ if{f} $w' \in V'(q)$ for all $q \in {\sf Prop} \setminus \{p\}$;
\item[\textup{forth}.]
if~$w R v$, then there exists $v'\in W'$ such that $v Z v'$~and~$w' R' v'$; and
\item[\textup{back}.]
 if~$w' R' v'$, then there exists $v \in W$ such that $v Z v'$~and~$w R v$.
\end{description}
When~$w Z w'$, we write $(\calM, w) \sim_p (\calM', w')$. Further, we write
 $(\calM, \calI) \sim_p (\calM',\calI')$ for functions $\calI : X \to W$ and $\calI' : X \to W'$ with a common domain~$X$ if there is a bisimulation~$Z$ up to~$p$ between~$\calM$~and~$\calM'$  such that $ \calI(\sigma) Z \calI'(\sigma)$ for each $\sigma \in X$.
\end{definition} 

The main property of bisimulations is truth preservation for modal formulas. The following theorem is proved the same way as  \cite[Theorem 2.20]{blackburn}.

\begin{theorem}\label{thm:invariantform}
If $(\calM,w) \sim_p (\calM',w')$, then for all formulas~$\phi$ with $p \notin \Var(\phi) $, we have $\calM,w \models \phi$ if{f} $\calM',w' \models \phi$. 
\end{theorem}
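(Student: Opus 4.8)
The plan is to prove Theorem~\ref{thm:invariantform} by induction on the structure of the formula~$\phi$, restricting attention to formulas with $p \notin \Var(\phi)$. Fix a bisimulation~$Z$ up to~$p$ witnessing $(\calM,w) \sim_p (\calM',w')$, so that $wZw'$. Rather than proving the single instance for the pair $(w,w')$, I would strengthen the induction hypothesis to a universally quantified claim: \emph{for every formula~$\phi$ with $p \notin \Var(\phi)$ and every pair of worlds $v \in W$, $v' \in W'$ with $vZv'$, we have $\calM,v \models \phi$ if{f} $\calM',v' \models \phi$.} This strengthening is essential because the modal cases pass from $w$ to accessible worlds, which are related by~$Z$ but are not the originally fixed pair.

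First I would handle the atomic and Boolean base cases. For $\phi = \top$ and $\phi = \bot$ the equivalence is immediate since their truth value is world-independent. For $\phi = q$ a literal's atom with $q \neq p$ (which is guaranteed since $p \notin \Var(\phi)$), the condition \textup{atoms}$_p$ applied to the pair $vZv'$ gives $v \in V(q)$ if{f} $v' \in V'(q)$, i.e.~$\calM,v \models q$ if{f} $\calM',v' \models q$; the case $\phi = \overline{q}$ follows by negating this. The Boolean connective cases $\phi = \varphi \land \psi$ and $\phi = \varphi \lor \psi$ are routine: since $p \notin \Var(\varphi)$ and $p \notin \Var(\psi)$, the induction hypothesis applies to each conjunct/disjunct at the pair $vZv'$, and the equivalence lifts through $\land$ and $\lor$.

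The modal cases are where the bisimulation conditions do the real work, and they are the crux of the argument. Take $\phi = \Box \varphi$ with $p \notin \Var(\varphi)$, and suppose $\calM,v \models \Box\varphi$; I must show $\calM',v' \models \Box\varphi$. Let $v'$ have an $R'$-successor~$u'$. By \textup{back} applied to $vZv'$ and $v'R'u'$, there is $u \in W$ with $uZu'$ and $vRu$; from $\calM,v \models \Box\varphi$ we get $\calM,u \models \varphi$, and the induction hypothesis at the pair $uZu'$ yields $\calM',u' \models \varphi$. As $u'$ was arbitrary, $\calM',v' \models \Box\varphi$. The converse direction uses \textup{forth} symmetrically, and the case $\phi = \Diamond\varphi$ is dual, using \textup{forth} for the forward implication and \textup{back} for the converse. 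The main obstacle, as usual in modal invariance proofs, is precisely the matching of successors: one must invoke the correct direction (\textup{back} versus \textup{forth}) for each half of each modal equivalence and be careful that the successor worlds produced are genuinely $Z$-related so that the induction hypothesis is applicable. Since the bisimulation is only required to agree on propositions other than~$p$, the restriction $p \notin \Var(\phi)$ is used exactly once per formula, in the atomic base case, and propagates automatically to subformulas because $\Var(\varphi) \subseteq \Var(\phi)$ for every subformula~$\varphi$ of~$\phi$.
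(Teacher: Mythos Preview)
Your proposal is correct and is precisely the standard structural-induction argument the paper has in mind; the paper itself does not spell out a proof but simply cites the textbook result \cite[Theorem~2.20]{blackburn}, whose proof is exactly the strengthened induction over all $Z$-related pairs that you describe.
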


We are interested in manipulations of  treelike models that preserve bisimulations up to~$p$, in particular, in duplicating a part of a model or replacing it with a bisimilar model.

\begin{definition}[Model transformations]\label{def:copymodels}
Let $\calM = (W,R,V)$ be an intransitive tree (possibly with some reflexive worlds), $\calM_w = (W_w, R_w, V_w)$ be its subtree with root~$w \in W$, and  $\calN=(W_N,R_N,V_N)$ be another tree  with root $\rho_N \in W_N$. 
A model $\calM'\ce(W', R', V')$ is the result of \emph{replacing the subtree~$\calM_w$  with $\calN$ in $\calM$} if 
\ifarxiv 
\begin{align*}
W' &\ce (W \setminus W_w) \sqcup W_N,
\\  
R'&\ce \bigl(R \cap (W\setminus W_w)^2 \bigr)\sqcup R_N \sqcup \bigl\{(v,\rho_N) \mid vRw\bigr\},
\\
V'(q) &\ce \bigl(V(q) \setminus W_w\bigr) \sqcup V_N(q) \text{ for all  $q\in {\sf Prop}$}.
\end{align*}
\else 
$W' \ce (W \setminus W_w) \sqcup W_N$,  $V'(q) \ce (V(q) \setminus W_w) \sqcup V_N(q)$ for all  $q\in {\sf Prop}$, and $
R'\ce (R \cap (W\setminus W_w)^2 )\sqcup R_N \sqcup \{(v,\rho_N) \mid vRw\}$.
\fi

A model $\calM''\ce(W'',R'',V'')$ is the result of \emph{duplicating} $($\emph{cloning}$)$ $\calM_w$ in $\calM$  if another copy\/\footnote{Here $v^c\ce(v,c)$,  $W_w^c \ce \{v^c \mid v \in W_w\}$, $R_w^c \ce \{(v^c, u^c) \mid (v,u) \in R_w\}$, and $V_w^c(q) \ce \{v^c\mid v \in V_w(q)\}$.} $\calM^c_w\ce(W_w^c,R^c_w,V^c_w)$ of $\calM_w$ is inserted alongside\/  $($as a subtree of\/$)$~$\calM_w$, i.e., if 
\ifarxiv 
\begin{gather*}
W'' \ce W \sqcup W^c_w,
\\
R'' \ce R \sqcup  R^c_w  \sqcup \bigl\{(v,w^c) \mid vRw\bigr\} 
\text{$($duplicating\/$)$}
\quad\text{or}\quad
R'' \ce R \sqcup  R^c_w  \sqcup \bigl\{(w,w^c)\bigr\}
\text{$($cloning\/$)$},
\\
V''(q) \ce V(q) \sqcup V^c_w(q) \text{ for all $q\in {\sf Prop}$}.
\end{gather*}
\else
$W'' \ce W \sqcup W^c_w $, $V''(q) \ce V(q) \sqcup V^c_w(q) $ for all $q\in {\sf Prop}$, 
and, in case of duplicating, $R'' \ce R \sqcup  R^c_w  \sqcup \{(v,w^c) \mid vRw\}$  $($in case of cloning, $R'' \ce R \sqcup  R^c_w  \sqcup \{(w,w^c)\})$. \fi
\end{definition}

\begin{lemma}
\label{lem: Bisimilarity of Submodels} 
In the setup from Def.~\ref{def:copymodels}, let  $Z \subseteq W_N \times W_w$ be a bisimulation demonstrating that\/ $(\calN,\rho_N) \sim_p (\calM_w,w)$.  
Then, for~$\calM'$ obtained  by replacing~$\calM_w$ with~$\calN$ in~$\calM$ we have that\/ $(\calM', v) \sim_p (\calM,v)$ for all $v \in W\setminus W_w$ and  that\/ $(\calM',u_N) \sim_p (\calM,u)$ whenever~$u_N Z u$. Moreover, if both~$\calM$~and~$\calN$ are\/ $\K$-models\/~$(\D$-models,\/~$\T$-models\/$)$, then so is~$\calM'$.

For~$\calM''$ obtained by duplicating~$\calM_w$ in~$\calM$, we have\/ $(\calM'',v) \sim_p (\calM,v)$ for all $v \in W$ and, in addition,\/ $(\calM'', u^c) \sim_p (\calM, u)$ for all $u \in W_w$.  If $\calM$~is  a\/ $\K$-model~$(\D$-model,\/~$\T$-model\/$)$  not rooted at~$w$, so is~$\calM''$. 

The same holds for cloning if~$w R w$, except that cloning does not preserve\/ $\D$-models.
\end{lemma}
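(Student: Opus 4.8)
The plan is to construct explicit bisimulations up to~$p$ for each of the three transformations and then verify that the structural constraints defining $\K$-, $\D$-, and $\T$-models are preserved. The heart of the matter is that both operations glue together pieces that are already bisimilar, so the bisimulation relation on the new model can be assembled from the identity on the untouched part together with the given relation~$Z$ on the replaced/duplicated part.

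For the replacement case, I would define $Z' \subseteq W' \times W$ by taking the identity pairs $\{(v,v) \mid v \in W \setminus W_w\}$ together with all pairs in~$Z$ (viewing $Z \subseteq W_N \times W_w \subseteq W' \times W$). To show $Z'$ is a bisimulation up to~$p$, I check \textbf{atoms}$_p$, \textbf{forth}, and \textbf{back} on each kind of pair. For identity pairs, \textbf{atoms}$_p$ is immediate from the definition of~$V'$ on $W \setminus W_w$, and \textbf{forth}/\textbf{back} are routine \emph{except} at a parent~$v$ of~$w$: here the old edge $vRw$ is replaced by $vR'\rho_N$, so I match the $R$-successor~$w$ with the $R'$-successor~$\rho_N$ using the fact that $\rho_N\, Z\, w$ by hypothesis, and conversely. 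For pairs coming from~$Z$, all three conditions transfer directly from~$Z$ being a bisimulation between~$\calN$ and~$\calM_w$, since $R'$ restricted to~$W_N$ is exactly~$R_N$ and $R$ restricted to~$W_w$ is exactly~$R_w$. This yields $(\calM',v) \sim_p (\calM,v)$ for $v \in W \setminus W_w$ and $(\calM',u_N)\sim_p(\calM,u)$ whenever $u_N Z u$.

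For duplicating and cloning, I would define $Z'' \subseteq W'' \times W$ as the identity on~$W$ together with $\{(u^c, u) \mid u \in W_w\}$, which maps each clone back to its original. Condition \textbf{atoms}$_p$ holds because $V^c_w(q)$ copies $V_w(q)$. The \textbf{forth}/\textbf{back} conditions are straightforward on~$W$ and inside the copy; the only new edges to account for are those into~$w^c$ (from every parent~$v$ of~$w$ in the duplicating case, or from~$w$ itself in the cloning case), and these are matched against the corresponding edges into~$w$, again via the identification $w^c \mapsto w$. This gives the stated bisimilarities.

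The main obstacle is the bookkeeping for \emph{preservation of the frame class}, which is where the side conditions in the statement come from. Reflexivity ($\T$) and seriality ($\D$) are local conditions on successors, so I must check that no world loses a required successor. In the replacement case this is clear since~$\calN$ is assumed to be an $\L$-model and the parents of~$w$ keep a successor ($\rho_N$). For duplicating, the clone $w^c$ inherits~$R^c_w$, so reflexivity and seriality of the copy follow from those of~$\calM_w$, and no original world loses successors; the hypothesis that~$\calM$ is \emph{not rooted at~$w$} guarantees $w$ has a parent so that the tree shape is maintained. For cloning the delicate point is exactly why $\D$-models fail: the single new edge $(w,w^c)$ makes $w^c$ a child of~$w$, but if $w^c$ is a leaf it may violate seriality, which is why the statement excludes $\D$ for cloning and requires $wRw$ (so that $w$ itself stays reflexive and the clone $w^c$ is reflexive via~$R^c_w$, keeping $\T$-models intact). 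I would conclude by verifying these frame conditions case by case against the explicit definitions of $R'$ and $R''$.
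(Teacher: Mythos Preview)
Your approach is essentially identical to the paper's: the paper defines exactly the same bisimulations $Z' = \{(v,v) \mid v \in W \setminus W_w\} \sqcup Z$ and $Z'' = \{(v,v) \mid v \in W\} \sqcup \{(u^c,u) \mid u \in W_w\}$ and leaves the verification as ``easy to see.'' Your one inaccuracy concerns the reason cloning fails for $\D$-models: since $wRw$ is assumed, the clone $w^c$ \emph{is} reflexive via $R^c_w$ and hence serial, so seriality at $w^c$ is not the issue; the actual obstruction is that $w$ itself becomes a non-leaf (it acquires the child $w^c$) while remaining reflexive, violating the $\D$-model requirement that non-leaves be irreflexive.
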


\begin{proof}
It is easy to see that one bisimulation witnesses all the stated bisimilarities in each case:
$Z' \ce \{(v,v) \mid v \in W \setminus W_w \} \sqcup Z$ for replacing or  
$Z'' \ce \{(v,v) \mid v \in W \} \sqcup \{(u^c,u) \mid u \in W_w\}$ for duplicating and cloning. Both the tree structure and reflexivity of worlds are preserved by all operations. Leaves are preserved by replacement and duplication, whereas cloning turns a leaf~$w$ into a non-leaf without removing its reflexivity as required in $\D$-models.
\end{proof}

\section{Uniform interpolation for nested sequents}\label{sec:uip}
In this section we prove the uniform interpolation theorem for~$\K$, $\T$,~and~$\D$ via their  nested sequent calculi~$\NK, \NT$,~and~$\ND$ respectively. We define  a new notion of uniform interpolation for nested sequents in Def.~\ref{def:Nested uip 2} that involves  Kripke semantics. We then prove in  Lemma~\ref{lem:uip equiv} that this implies  the standard definition of uniform interpolation.

\begin{definition}[Uniform interpolation property] \label{def:UniformInterpolation}
A logic\/~$\logic$  in a language containing an implication\/~$\imp$ and Boolean constants\/~$\bot$~and\/~$\top$ (primary or defined) has the \emph{uniform interpolation property}, or~\emph{UIP}, if for every formula~$\varphi$ in the logic and atomic proposition~$p$, there exist formulas~$\forall p \varphi$~and~$\exists p \varphi$ 
such that 
\begin{enumerate}[(i)]
\item\label{UIP:1} $ \Var(\exists p \varphi) \subseteq  \Var(\varphi) \setminus \{ p \}$ and $ \Var(\forall p \varphi) \subseteq  \Var(\varphi) \setminus \{ p \}$,
\item\label{UIP:2} $\vdash_\logic \varphi \imp \exists p \varphi \text{ and } \vdash_\logic \forall p \varphi \imp  \varphi,$ and
\item\label{UIP:3} for each formula~$\psi$ with  $p \notin \Var(\psi)$: \[\vdash_\logic \varphi \imp \psi \ \Rightarrow \ \vdash_\logic \exists p \varphi \imp \psi \qquad \vdash_\logic \psi \imp \varphi \ \Rightarrow \ \vdash_\logic \psi \imp \forall p \varphi.\]
\end{enumerate}
\end{definition}

For classical-based  logics, the existence of left-interpolants ensures the existence of right-interpolants, and vice versa. Assuming $\forall p \varphi$~is defined for each formula~$\varphi$, we can define $\exists p \varphi := \neg\forall p \overline{\varphi}$. Thus, from now on, we focus on~$\forall p \phi$.

We import some of the  notation from~\cite{Kuz18APAL} in order  to formulate the uniform interpolation property for nested sequents.

\begin{definition}
\label{def:multiformula}
\emph{Multiformulas} are defined by the grammar
\ifarxiv 
\[
\mho \cce \sigma : \varphi \mid (\mho \spconj \mho) \mid (\mho \spdisj \mho),
\]
\else
$\mho \cce \sigma : \varphi \mid (\mho \spconj \mho) \mid (\mho \spdisj \mho)$,
\fi
where $\sigma$~is a label and $\varphi$~is a formula. We write~$\calL(\mho)$ for the set of labels occurring in~$\mho$.
\end{definition}

\ifarxiv
\begin{remark} 
The symbol~$\mho$ is pronounced  `mho', which is the reverse of `ohm' the same way as $\mho$~is the reverse of~$\Omega$, the symbol for ohm in physics.
\end{remark}
\fi

\begin{definition}[Suitability]\label{def:interpretationmultiformula}
A multiworld interpretation~$\calI$ of a sequent\/~$\Gamma$ is \emph{suitable for a multiformula\/~$\mho$} if  $\calL(\mho) \subseteq \calL(\Gamma)$, in which case we call it  a \emph{multiworld interpretation of~$\mho$ into~$\calM$}.\looseness=-1
\end{definition}

\begin{definition}[Truth for multiformulas]
\label{def:truth_multif}
Let $\calI$~be a multiworld interpretation of a multiformula\/~$\mho$ into a model~$\calM $. We define $\calM, \calI \models \mho$ recursively as follows:

\begin{listliketab} 
    \storestyleof{enumerate} 
        \begin{tabular}{lll}
             &$\calM,\calI \models \sigma : \varphi$& if{f} \; $\calM, \calI(\sigma) \models \varphi$,\\
             &$\calM, \calI \models \mho_1 \spconj \mho_2$ & if{f} \; $\calM, \calI \models \mho_i$ for both $i=1,2$,\\
             &$\calM, \calI \models \mho_1 \spdisj \mho_2$& if{f} \; $\calM, \calI \models \mho_i$ for at least one $i=1,2$.
        \end{tabular} 
\end{listliketab}

\noindent
Note that $\calL(\mho_i) \subseteq \calL(\mho)$, meaning that $\calI$~is also a multiworld interpretation of each\/~$\mho_i$ into~$\calM$.\looseness=-1	
\end{definition}

We define the label-erasing function $\form$ from multiformulas to formulas, as well as multiformula equivalence, and list some of the latter's easily provable properties.

\begin{definition}
The \emph{function $\form$} from multiformulas to formulas is defined as follows:
\ifarxiv
\begin{align*}
\form(\sigma : \varphi) & \ce \varphi,
\\
\form(\mho_1 \spconj \mho_2) & \ce \form(\mho_1) \land \form(\mho_2),
\\ 
\form(\mho_1 \spdisj \mho_2) &\ce \form(\mho_1) \lor \form(\mho_2).
\end{align*}
\else
$
\form(\sigma : \varphi)\!\ce\! \varphi$, \!\!   $\form(\mho_1 \spconj \mho_2)\!\ce \!\form(\mho_1) \land \form(\mho_2)$, \!\! \mbox{$\form(\mho_1 \spdisj \mho_2)\!\ce\! \form(\mho_1) \lor \form(\mho_2)$}.
\fi
\end{definition}

\begin{definition}[Multiformula equivalence]
Multiformulas\/~$\mho_1$~and\/~$\mho_2$ are \emph{equivalent},  denoted\/ $\mho_1 \equiv \mho_2$, if{f} $\calL(\mho_1)=\calL(\mho_2)$ and $\calM, \calI \vDash \mho_1\,\Leftrightarrow\,\calM, \calI \vDash \mho_2$ for any multiworld interpretation~$\calI$ of\/~$\mho_1$ into a model~$\calM$.
\end{definition}

\begin{lemma}[Equivalence property] \label{Lemma: Multiformula properties}
For any multiformula\/~$\mho$, label~$\sigma$, and  formulas~$\varphi$~and~$\psi$,\looseness=-1
\ifarxiv
\begin{itemize}
    \item $\mho \spconj \mho \equiv \mho \spdisj \mho \equiv \mho$,    
    \item
    $\sigma : \varphi \spconj \sigma : \psi \equiv \sigma : (\phi \wedge \psi)$, and
    \item 
    $\sigma : \varphi \spdisj \sigma : \psi \equiv \sigma : (\phi \vee \psi)$.
   \end{itemize}
\else
\\
    $\mho \spconj \mho \equiv \mho \spdisj \mho \equiv \mho$,    and 
    $\sigma : \varphi \spconj \sigma : \psi \equiv \sigma : (\phi \wedge \psi)$, and  $\sigma : \varphi \spdisj \sigma : \psi \equiv \sigma : (\phi \vee \psi)$.
\fi
\end{lemma}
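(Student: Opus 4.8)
The plan is to prove each of the three equivalences directly from the definition of multiformula equivalence (Def.~referenced above), namely by showing that the label sets on the two sides coincide and that the two sides are satisfied by exactly the same multiworld interpretations~$\calI$ into the same model~$\calM$. Since equivalence is defined purely in terms of the recursive truth conditions of Def.~\ref{def:truth_multif}, each claim reduces to an immediate unfolding of those clauses; no induction is needed because every multiformula appearing is built from at most two atomic multiformulas.

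First I would verify the label-set conditions, which are trivial: for $\mho \spconj \mho$ and $\mho \spdisj \mho$ we have $\calL(\mho \spconj \mho) = \calL(\mho) \cup \calL(\mho) = \calL(\mho)$, and similarly for~$\spdisj$; for the other two, $\calL(\sigma : \varphi \spconj \sigma : \psi) = \{\sigma\} = \calL(\sigma : (\varphi \land \psi))$, and likewise for~$\spdisj$. Next I would check the truth conditions. For idempotence, $\calM, \calI \models \mho \spconj \mho$ iff $\calM, \calI \models \mho$ \emph{and} $\calM, \calI \models \mho$, which is just $\calM, \calI \models \mho$; the $\spdisj$ case is the same with ``or''. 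For the distribution clauses, I would unfold both sides: $\calM, \calI \models \sigma : \varphi \spconj \sigma : \psi$ holds iff $\calM, \calI(\sigma) \models \varphi$ and $\calM, \calI(\sigma) \models \psi$, which by the Boolean truth clause for~$\land$ in a Kripke model is exactly $\calM, \calI(\sigma) \models \varphi \land \psi$, i.e.~$\calM, \calI \models \sigma : (\varphi \land \psi)$; the $\spdisj$/$\lor$ case is identical with ``or'' in place of ``and''.

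The only genuine content is the observation that the structural connectives $\spconj$ and $\spdisj$ on multiformulas collapse onto the Boolean connectives $\land$ and $\lor$ precisely when the two conjuncts (disjuncts) share a single label~$\sigma$, because the truth of $\sigma : \varphi$ is evaluated at the single world $\calI(\sigma)$. There is no real obstacle here: the entire lemma is a bookkeeping exercise confirming that the semantics of Def.~\ref{def:truth_multif} behaves as the notation suggests. If anything, the one point deserving a word of care is that these equivalences hold for \emph{every} interpretation and model, so each biconditional must be argued at the level of an arbitrary fixed~$\calI$ and~$\calM$, after which universal quantification over $\calI, \calM$ delivers~$\equiv$ in each case.
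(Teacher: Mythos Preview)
Your proposal is correct and is the natural direct verification from Defs.~\ref{def:truth_multif} and the definition of multiformula equivalence. The paper does not actually supply a proof for this lemma (it is introduced as one of several ``easily provable properties''), so your unfolding of the label-set equalities and the truth clauses is exactly what is intended.
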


\begin{lemma}[Normal forms]
\label{lem:sdnf}
For each multiformula\/~$\mho$, there exists an equivalent multiformula\/~$\mho^d$~$(\mho^c)$ in~SDNF\/~$($SCNF\/$)$  such that\/ $\mho^d$~$(\mho^c)$~is a\/ $\spdisj$-disjunction\/~$(\spconj$-conjunction\/$)$ of\/ $\spconj$-conjunctions\/~$(\spdisj$-disjunctions\/$)$ of labeled formulas~$\sigma : \varphi$ such that each disjunct\/~$($conjunct\/$)$ contains exactly one occurrence of each label $\sigma \in \calL(\mho)$.
\end{lemma}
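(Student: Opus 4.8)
The plan is to argue by induction on the structure of the multiformula~$\mho$, exploiting the fact that, by Def.~\ref{def:truth_multif}, the connectives~$\spconj$~and~$\spdisj$ are interpreted by the metalevel conjunction and disjunction of the truth values of the labeled formulas~$\sigma:\varphi$. Consequently, $\spconj$~and~$\spdisj$ inherit all the usual propositional equivalences---commutativity, associativity, idempotence (already recorded in Lemma~\ref{Lemma: Multiformula properties}), and distributivity of~$\spconj$ over~$\spdisj$ and vice versa---so that the existence of an SDNF is essentially the classical disjunctive-normal-form theorem applied to the ``atoms''~$\sigma:\varphi$. The only feature specific to multiformulas is the requirement that every disjunct mention each label of~$\calL(\mho)$ exactly once; to meet it I would use that $\sigma:\top$~is a neutral element for~$\spconj$ and $\sigma:\bot$~a neutral element for~$\spdisj$ over any interpretation whose domain contains~$\sigma$.

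For the SDNF, the base case $\mho = \sigma:\varphi$ is immediate: it is already a single one-element $\spconj$-conjunction mentioning the single label $\sigma \in \calL(\mho)$. In the step $\mho = \mho_1 \spdisj \mho_2$, I would take the SDNFs~$\mho_1^d$~and~$\mho_2^d$ supplied by the induction hypothesis and pad each of their disjuncts: for a disjunct~$C$ of~$\mho_1^d$, which by hypothesis mentions exactly the labels of~$\calL(\mho_1)$, I append~$\tau:\top$ for every $\tau \in \calL(\mho_2)\setminus\calL(\mho_1)$, and symmetrically for the disjuncts of~$\mho_2^d$; the $\spdisj$-disjunction of all padded conjuncts is the desired~$\mho^d$. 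Since each~$\tau:\top$ is true under every interpretation whose domain contains~$\tau$, padding changes neither the truth value of a disjunct nor---jointly with the induction hypotheses $\mho_i^d \equiv \mho_i$---the truth value of the whole, while it raises the label set of every disjunct to $\calL(\mho_1)\cup\calL(\mho_2) = \calL(\mho)$, so that $\calL(\mho^d) = \calL(\mho)$ as $\equiv$ demands.

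In the step $\mho = \mho_1 \spconj \mho_2$, I would again invoke the induction hypothesis to write $\mho_1^d = \bigspdisj_i C_i$ and $\mho_2^d = \bigspdisj_j D_j$, and then distribute~$\spconj$ over~$\spdisj$ to obtain $\bigspdisj_{i,j}(C_i \spconj D_j)$. Each conjunct $C_i \spconj D_j$ now mentions every label of $\calL(\mho_1)\cup\calL(\mho_2)=\calL(\mho)$, but a label in $\calL(\mho_1)\cap\calL(\mho_2)$ may occur twice; I remove the duplication by repeatedly applying $\sigma:\alpha \spconj \sigma:\beta \equiv \sigma:(\alpha\wedge\beta)$ from Lemma~\ref{Lemma: Multiformula properties} until each label occurs exactly once, yielding a genuine SDNF. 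The SCNF case is dual throughout: one distributes~$\spdisj$ over~$\spconj$, pads missing labels with the neutral element~$\sigma:\bot$ of~$\spdisj$, and merges repeated labels via $\sigma:\alpha \spdisj \sigma:\beta \equiv \sigma:(\alpha\vee\beta)$.

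The routine part is the propositional normalization; the step that needs genuine care---and the one I expect to be the main obstacle---is the label bookkeeping that distinguishes this lemma from the classical DNF/CNF theorem, namely guaranteeing that after all manipulations every clause mentions each label of~$\calL(\mho)$ exactly once. This is precisely why the two devices above are needed in tandem: padding with~$\sigma:\top$ (resp.~$\sigma:\bot$) supplies the labels that a subformula omits, while the merging equivalences of Lemma~\ref{Lemma: Multiformula properties} collapse the labels that distribution duplicates. Verifying that both operations preserve~$\equiv$---in particular that they keep the label sets aligned, as equivalence requires---is the technical heart of the argument.
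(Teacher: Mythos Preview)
Your proposal is correct and follows essentially the same approach as the paper: use the classical DNF/CNF transformation (since~$\spconj$~and~$\spdisj$ behave classically), then handle the label bookkeeping by merging duplicate labels via Lemma~\ref{Lemma: Multiformula properties} and padding missing labels with the appropriate neutral element. The paper's proof is a two-sentence sketch of exactly these ingredients, whereas you spell them out via an explicit structural induction; incidentally, your choice of neutral elements---$\sigma:\top$ for padding $\spconj$-conjuncts in SDNF and $\sigma:\bot$ for padding $\spdisj$-disjuncts in SCNF---is the correct one.
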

\begin{proof}
Since $\spdisj$~and~$\spconj$~behave classically, one can employ the standard transformation into the~DNF/CNF. In order to ensure one label per disjunct/conjunct rule, multiple labels can be combined using Lemma~\ref{Lemma: Multiformula properties}, whereas missing labels can be added in the form of~$\sigma : \bot$~($\sigma : \top$).  \looseness=-1
\end{proof}

We now introduce the uniform interpolation property for nested sequents. Here, the uniform interpolants are multiformulas instead of formulas.

\begin{definition}[NUIP]
\label{def:uip nested sequents}
Let a nested sequent calculus\/~${\sf NL}$ be sound and complete w.r.t.~a logic\/~$\L$. We say that\/  {\sf NL}~has the \emph{nested-sequent uniform interpolation property}, or~\emph{NUIP}, if for each nested sequent\/~$\Gamma$ and atomic proposition~$p$ there exists  a multiformula~$A_p(\Gamma)$, called a \emph{nested uniform interpolant}, such that 
\begin{enumerate}[(i)]
\item\label{NUIP:1} $ \Var\bigl(A_p(\Gamma)\bigr) \subseteq  \Var(\Gamma) \setminus \{ p \}$ and $\calL\bigl(A_p(\Gamma)\bigr) \subseteq \calL(\Gamma)$;
\item\label{NUIP:2} for each  multiworld interpretation~$\calI$ of\/~$\Gamma$ into an\/ $\L$-model~$\calM$  
\[
\calM, \calI \models A_p(\Gamma)\quad\text{implies}\quad \calM, \calI \models \Gamma;
\]
\item\label{NUIP:3} for each nested sequent~$\Sigma$ with $p \notin \Var(\Sigma)$ and $\calL(\Sigma)=\calL(\Gamma)$ and for each  multiworld interpretation~$\calI$ of\/~$\Gamma$ into an\/ $\L$-model~$\calM$, 
\[
\calM, \calI \not \models A_p(\Gamma) \text{ and } \calM, \calI \not \models \Sigma
\quad\text{imply}\quad
\calM', \calI' \not \models \Gamma \text{ and } \calM', \calI' \not \models \Sigma
\]
for some  multiworld interpretation~$\calI'$ of\/~$\Gamma$ into some\/ $\L$-model~$\calM'$.
\end{enumerate}
\end{definition}

The condition on labels in~\eqref{NUIP:1} ensures that interpretations of~$\Gamma$ are suitable for~$A_p(\Gamma)$.\looseness=-1

\begin{remark}
B\'{i}lkov\'{a}'s definition  in~\cite{Bilkova11} differs in several ways. Apart from a minor difference in condition~\eqref{UIP:3}, our definition involves semantic notions and uses multiformula interpolants instead of  formulas. 
\end{remark}

\begin{lemma}\label{lem:uip equiv}
If a nested calculus\/~{\sf NL} has the NUIP, then its logic\/~$\L$ has the UIP.  \looseness=-1
\end{lemma}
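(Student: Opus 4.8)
The plan is to instantiate the NUIP for a nested sequent consisting of a single labeled formula and then transfer the three conditions to the formula-level UIP. Since NUIP gives interpolants for arbitrary nested sequents, I would apply it to the one-node sequent $\Gamma = \overline{\varphi}$ (i.e.\ the nested sequent containing just the single formula $\overline{\varphi}$ at the root with $\calL(\Gamma) = \{1\}$). Its formula interpretation is $\iota(\Gamma) = \overline{\varphi}$. The nested uniform interpolant $A_p(\overline{\varphi})$ is a multiformula all of whose labels are~$1$, so I would set $\forall p \varphi \ce \overline{\form\bigl(A_p(\overline{\varphi})\bigr)}$, where $\form$ erases labels. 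Condition~\eqref{UIP:1} of the UIP is then immediate from condition~\eqref{NUIP:1}: $\Var(\forall p \varphi) = \Var\bigl(A_p(\overline{\varphi})\bigr) \subseteq \Var(\overline{\varphi}) \setminus \{p\} = \Var(\varphi) \setminus \{p\}$.

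Next I would prove condition~\eqref{UIP:2}, namely $\vdash_\L \forall p \varphi \imp \varphi$, equivalently $\vdash_\L \overline{\varphi} \lor \overline{\form(A_p(\overline{\varphi}))}$, equivalently $\models_\L \overline{\varphi} \lor \overline{\form(A_p(\overline{\varphi}))}$ by soundness and completeness. The key observation is that for a single-node multiformula, $\calM, \calI \models A_p(\overline{\varphi})$ holds if{f} $\calM, \calI(1) \models \form(A_p(\overline{\varphi}))$, which follows by a routine induction on the multiformula structure using Def.~\ref{def:truth_multif} and the definition of $\form$. Given any $\L$-model~$\calM$ and world~$w$, take the interpretation $\calI$ with $\calI(1) = w$. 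Condition~\eqref{NUIP:2} says $\calM, \calI \models A_p(\overline{\varphi})$ implies $\calM, \calI \models \overline{\varphi}$, i.e.\ $\calM, w \models \form(A_p(\overline{\varphi}))$ implies $\calM, w \models \overline{\varphi}$. Contrapositively, at every world either $\overline{\varphi}$ holds or $\form(A_p(\overline{\varphi}))$ fails, which is exactly validity of $\overline{\varphi} \lor \overline{\form(A_p(\overline{\varphi}))}$.

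The main work is condition~\eqref{UIP:3}, specifically the right half $\vdash_\L \psi \imp \varphi \Rightarrow \vdash_\L \psi \imp \forall p \varphi$ for $\psi$ with $p \notin \Var(\psi)$. By completeness it suffices to show validity of $\overline{\psi} \lor \overline{\form(A_p(\overline{\varphi}))}$, so I argue by contraposition: suppose some $\L$-model~$\calM$ and world~$w$ satisfy $\calM, w \models \psi$ and $\calM, w \not\models \form(A_p(\overline{\varphi}))$. I take $\Sigma$ to be the single-node sequent $\overline{\psi}$ (so $p \notin \Var(\Sigma)$ and $\calL(\Sigma) = \{1\} = \calL(\Gamma)$) and the interpretation $\calI$ with $\calI(1) = w$. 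Then $\calM, \calI \not\models A_p(\overline{\varphi})$ and $\calM, \calI \not\models \Sigma$ (the latter because $\calM, w \models \psi$ means $w \not\models \overline{\psi}$). Condition~\eqref{NUIP:3} then produces an $\L$-model~$\calM'$ and interpretation~$\calI'$ with $\calM', \calI' \not\models \overline{\varphi}$ and $\calM', \calI' \not\models \overline{\psi}$, i.e.\ at the world $\calM'$-world $\calI'(1)$ both $\varphi$ and $\psi$ hold while $\varphi$ fails --- wait, more precisely $\calM', \calI'(1) \models \psi$ and $\calM', \calI'(1) \models \varphi$ would contradict $\vdash_\L \psi \imp \varphi$. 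Concretely, $\calM', \calI' \not\models \overline{\varphi}$ gives $\calM', \calI'(1) \models \varphi$ and $\calM', \calI' \not\models \overline{\psi}$ gives $\calM', \calI'(1) \models \psi$, so $\psi \imp \varphi$ fails at $\calI'(1)$ in the $\L$-model $\calM'$, refuting $\vdash_\L \psi \imp \varphi$ by soundness.

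The remaining half of~\eqref{UIP:3} concerning $\exists p \varphi$ follows from the classical duality $\exists p \varphi \ce \neg \forall p \overline{\varphi}$ already noted in the text after Def.~\ref{def:UniformInterpolation}, so no separate argument is needed. The only genuinely delicate point --- the expected main obstacle --- is bookkeeping the passage between the multiformula semantics $\calM, \calI \models \mho$ and the ordinary formula semantics $\calM, \calI(1) \models \form(\mho)$ for single-label multiformulas, together with the correct placement of the bars in translating the sequents $\overline{\varphi}$, $\overline{\psi}$ back into implications; everything else is a direct unfolding of Definitions~\ref{def:uip nested sequents} and~\ref{def:UniformInterpolation}.
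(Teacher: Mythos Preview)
Your setup has the negations in the wrong place, and as a result the formula you call $\forall p\varphi$ is actually $\exists p\varphi$. Concretely: with $\forall p\varphi \ce \overline{\form(A_p(\overline\varphi))}$, the formula $\forall p\varphi \to \varphi$ unfolds to $\form(A_p(\overline\varphi)) \lor \varphi$, \emph{not} to $\overline\varphi \lor \overline{\form(A_p(\overline\varphi))}$ as you write. What your NUIP\eqref{NUIP:2} argument actually establishes is $\varphi \to \forall p\varphi$, the characteristic property of $\exists p\varphi$. The same confusion propagates to your treatment of~\eqref{UIP:3}: assuming a counterexample to $\psi \to \forall p\varphi$ means $\psi$ holds and $\forall p\varphi$ fails, i.e.\ $\form(A_p(\overline\varphi))$ \emph{holds} at~$w$, so the hypothesis $\calM,\calI \not\models A_p(\overline\varphi)$ of NUIP\eqref{NUIP:3} is not available. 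And at the end, obtaining $\calM',\calI'(1)\models\psi$ together with $\calM',\calI'(1)\models\varphi$ does not refute $\psi\to\varphi$; you noticed this yourself mid-sentence.

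The fix is simply to drop the extra negations: take $\Gamma=\varphi$ (not $\overline\varphi$) and set $\forall p\varphi \ce \form(A_p(\varphi))$. This is exactly what the paper does (using Lemma~\ref{lem:sdnf} to write $A_p(\varphi)\equiv 1:\xi$ and setting $\forall p\varphi\ce\xi$). Then NUIP\eqref{NUIP:2} gives $\forall p\varphi\to\varphi$ directly, and for~\eqref{UIP:3} one assumes $\nvdash\psi\to\xi$, takes $\Sigma=\overline\psi$, and applies NUIP\eqref{NUIP:3} to obtain a world where $\varphi$ fails and $\psi$ holds, refuting $\vdash\psi\to\varphi$. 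Your overall strategy (instantiate NUIP at a one-node sequent, use completeness, read off the UIP clauses) is the right one and matches the paper; only the placement of the bars is wrong.
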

\begin{proof}
To show the existence of~$\forall p\phi$, consider a nested uniform interpolant~$A_p(\varphi)$ of the nested sequent~$\phi$, with $\calL(\phi) =\{1\}$. By Lemma~\ref{lem:sdnf}, w.l.o.g.~we can assume that $A_p(\varphi) = 1 : \xi$. Let $\forall p \varphi \ce\xi$.
We establish the UIP properties based on the corresponding NUIP properties. 

By NUIP\eqref{NUIP:1},  
$ \Var(\forall p \varphi) = \Var (1 : \xi) \subseteq   \Var(\varphi) \setminus \{p\}$ which establishes UIP\eqref{UIP:1} (cf.~Def.~\ref{def:UniformInterpolation}). \looseness=-1

For UIP\eqref{UIP:2} we use a semantic argument. Assume towards a contradiction that $ \nvdash_\L \xi \to \varphi$, in which case by completeness $\calM, w \not \models \xi \to \varphi$ for some $\L$-model  $\calM =(W,R,V)$   and  $w\in W$. 
Consider a multiworld interpretation~$\calI$ of sequent $\phi$ into $\calM$ such that $\calI(1)\ce w$. Then $\calM, \calI \models 1 : \xi$ but $\calM, \calI \not\models \phi$, in contradiction to NUIP\eqref{NUIP:2}. Hence, $ \vdash_\L \forall p \varphi \to \varphi$ as required. \looseness=-1

Finally, for UIP\eqref{UIP:3}, let $p \notin\Var(\psi)$ and  suppose $\nvdash_\logic \psi \imp \xi$. Once again, by completeness, $\calM,w \not\models \psi \to \xi$  for some $\L$-model $\calM=(W,R,V)$ and  $w \in W$. 
Consider the nested sequent~$\overline{\psi}$,  with $\calL(\overline{\psi}) = \calL(\phi) =\{1\}$, and a multiworld interpretation~$\calI$ of sequent~$\phi$ into~$\calM$ with $\calI(1)\ce w$. 
Then $\calM, \calI \not \models 1 : \xi$ and $\calM, \calI \not \models \overline{\psi}$. 
By NUIP\eqref{NUIP:3}, there must exist an $\L$-model~$\calM'$ and a multiworld interpretation~$\calI'$ of sequent~$\phi$ into~$\calM'$ such that  $\calM', \calI' \not \models \phi$ and $\calM', \calI' \not\models \overline{\psi}$. In other words, $\calM', \calI'(1) \not \models \varphi$ and $\calM', \calI'(1) \models \psi$. 
Thus, by soundness of~$\L$, we have $\nvdash_\L \psi \to \varphi$, thus completing the proof of UIP\eqref{UIP:3}. \looseness=-1 
\end{proof}

Since we use bisimulations up to~$p$ to find a model~$\calM'$ in the NUIP\eqref{NUIP:3} condition, we replace it with a (possibly) stronger condition~(iii)$'$:
\begin{definition}[BNUIP]\label{def:Nested uip 2} 
A nested sequent calculus\/~$\mathsf{NL}$ has the \emph{bisimulation nested-sequent uniform interpolation property}, or~\emph{BNUIP}, if, in addition to conditions NUIP\/\eqref{NUIP:1}--\eqref{NUIP:2} from  Def.~\ref{def:uip nested sequents},\looseness=-1
\begin{description}
\item[\textup{(iii)$'$}]
\label{BNUIP:3} 
for each\/ $\L$-model~$\calM$ and multiworld interpretation~$\calI$ of\/~$\Gamma$ into~$\calM$, if 
$
\calM, \calI \not \models A_p(\Gamma)$,
then there are an\/ $\L$-model~$\calM'$ and  multiworld interpretation~$\calI'$ of\/~$\Gamma$ into~$\calM'$ such that $(\calM',\calI') \sim_p (\calM, \calI)$ and 
$
\calM', \calI' \not \models \Gamma$.
\end{description}
\end{definition}

It easily follows from Theorem~\ref{thm:invariantform} that, like formulas, both nested sequents and multiformulas are invariant under bisimulations: 

\begin{lemma}\label{lem:invariantsequent}
Let\/ $\Gamma$~$(\mho)$~be a sequent\/~$($multiformula\/$)$ not containing~$p$ and let $\calI$~and~$\calI'$~be multiworld interpretations of~\/$\Gamma$~$(\mho)$ into~$\calM$~and~$\calM'$ respectively such that\/
$(\calM, \calI) \sim_p (\calM', \calI')$. Then  $\calM, \calI \models \Gamma$ if{f} $\calM', \calI' \models \Gamma$ $(\calM, \calI \models \mho$ if{f} $\calM', \calI' \models \mho)$.
\end{lemma}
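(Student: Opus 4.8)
The plan is to prove Lemma~\ref{lem:invariantsequent} as a direct consequence of Theorem~\ref{thm:invariantform}, handling the sequent case first and then the multiformula case. Recall that, by Definition~\ref{def: Bisimilarity}, writing $(\calM,\calI) \sim_p (\calM',\calI')$ means there is a single bisimulation~$Z$ up to~$p$ with $\calI(\sigma) Z \calI'(\sigma)$ for every $\sigma$ in the common domain, i.e.,~for every $\sigma \in \calL(\Gamma)$ (resp.~$\sigma \in \calL(\mho)$). Combined with Theorem~\ref{thm:invariantform}, this immediately gives the key building block: for every label~$\sigma \in \calL(\Gamma)$ and every formula~$\varphi$ with $p \notin \Var(\varphi)$, we have $\calM, \calI(\sigma) \models \varphi$ iff $\calM', \calI'(\sigma) \models \varphi$, since $(\calM, \calI(\sigma)) \sim_p (\calM', \calI'(\sigma))$ is witnessed by that same~$Z$.

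For the sequent case, I would argue by appealing to the truth definition for nested sequents (Def.~\ref{def: Multiworld Interpretation2}). By definition, $\calM, \calI \models \Gamma$ holds iff $\calM, \calI(\sigma) \models \varphi$ for some $\sigma : \varphi \in \Gamma$. Since $p \notin \Var(\Gamma)$, every such labeled formula has $p \notin \Var(\varphi)$, so each disjunct $\calM, \calI(\sigma) \models \varphi$ is preserved across the bisimulation by the building block above. Hence the existential disjunction over all $\sigma : \varphi \in \Gamma$ holds on the $\calM$ side iff it holds on the $\calM'$ side, giving $\calM, \calI \models \Gamma$ iff $\calM', \calI' \models \Gamma$.

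For the multiformula case, I would proceed by a straightforward induction on the structure of~$\mho$, following the recursive truth definition of Def.~\ref{def:truth_multif}. The base case $\mho = \sigma : \varphi$ is exactly the building block, using $p \notin \Var(\mho)$ to ensure $p \notin \Var(\varphi)$. For the inductive steps $\mho = \mho_1 \spconj \mho_2$ and $\mho = \mho_1 \spdisj \mho_2$, note that $\calL(\mho_i) \subseteq \calL(\mho)$, so $\calI$ and $\calI'$ restrict to suitable interpretations of each $\mho_i$ and the bisimulation~$Z$ still witnesses $(\calM,\calI)\sim_p(\calM',\calI')$ on these smaller domains; the induction hypothesis then applies to $\mho_1$ and $\mho_2$, and the result follows since $\spconj$ and $\spdisj$ are interpreted as conjunction and disjunction of the respective truth conditions.

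I do not expect any genuine obstacle here, since the statement is essentially a packaging of Theorem~\ref{thm:invariantform} through the two truth definitions. The only point requiring mild care is the bookkeeping of the bisimulation: one must observe that a single~$Z$ simultaneously witnesses bisimilarity at every relevant label, so that the same relation services every disjunct of~$\Gamma$ (or every leaf of the parse tree of~$\mho$) and survives restriction to subformula label sets. This is immediate from the definition of $\sim_p$ on interpretations, so the proof is short.
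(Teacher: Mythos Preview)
Your proposal is correct and follows essentially the same approach as the paper: derive pointwise formula invariance at each label from Theorem~\ref{thm:invariantform}, then read off the result from the truth definitions for sequents (Def.~\ref{def:truth_seq}) and multiformulas (Def.~\ref{def:truth_multif}). The paper's proof is just a more compressed version of what you wrote, omitting the explicit structural induction on~$\mho$.
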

\ifarxiv
\begin{proof}
If $(\calM, \calI) \sim_p (\calM', \calI')$, then $(\calM, \calI(\sigma)) \sim_p (\calM, \calI'(\sigma))$ for all labels~$\sigma$ in~$\Gamma$~($\mho$). By Theorem~\ref{thm:invariantform} we have $\calM,\calI(\sigma) \models \phi$ if{f} $\calM',\calI'(\sigma) \models \phi$ for all $\sigma : \phi$ in $\Gamma$ ($\mho$). The statements easily follow from Defs.~\ref{def:truth_seq}~and~\ref{def:truth_multif}.
\end{proof}
\fi

\begin{lemma}\label{lem:condition(iii)$'$}
If\/ $\Gamma$~and~$A_p(\Gamma)$~satisfy\/~\textup{(iii)$'$} of Def.~\ref{def:Nested uip 2}, then they satisfy\/~\eqref{NUIP:3}  of Def.~\ref{def:uip nested sequents}.
\end{lemma}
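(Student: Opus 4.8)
The plan is to feed an arbitrary instance of the hypotheses of NUIP\eqref{NUIP:3} directly into condition~\textup{(iii)$'$} and then recover the missing clause about~$\Sigma$ via bisimulation invariance. So first I would fix a nested sequent~$\Sigma$ with $p \notin \Var(\Sigma)$ and $\calL(\Sigma) = \calL(\Gamma)$, an\/ $\L$-model~$\calM$, and a multiworld interpretation~$\calI$ of~$\Gamma$ into~$\calM$ satisfying both $\calM, \calI \not\models A_p(\Gamma)$ and $\calM, \calI \not\models \Sigma$. Since in particular $\calM, \calI \not\models A_p(\Gamma)$, condition~\textup{(iii)$'$} of Def.~\ref{def:Nested uip 2} directly yields an\/ $\L$-model~$\calM'$ and a multiworld interpretation~$\calI'$ of~$\Gamma$ into~$\calM'$ with $(\calM', \calI') \sim_p (\calM, \calI)$ and $\calM', \calI' \not\models \Gamma$. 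This already delivers $\calM', \calI' \not\models \Gamma$, which is one of the two conjuncts demanded by NUIP\eqref{NUIP:3}.

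It then remains to establish $\calM', \calI' \not\models \Sigma$, and for this I would invoke Lemma~\ref{lem:invariantsequent}. The key preliminary observation is that $\calL(\Sigma) = \calL(\Gamma)$ forces $\Sigma$~and~$\Gamma$ to share the same underlying tree: labels are prefix-ordered sequences, so the label set already determines the parent--child structure. Consequently $\calI$~and~$\calI'$, which are multiworld interpretations of~$\Gamma$, are automatically multiworld interpretations of~$\Sigma$ into~$\calM$~and~$\calM'$ respectively. Since $p \notin \Var(\Sigma)$ and $(\calM, \calI) \sim_p (\calM', \calI')$, Lemma~\ref{lem:invariantsequent} gives $\calM, \calI \models \Sigma$ if{f} $\calM', \calI' \models \Sigma$; from $\calM, \calI \not\models \Sigma$ we obtain $\calM', \calI' \not\models \Sigma$, completing the conclusion.

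I expect no real obstacle here: the argument is a single instantiation of~\textup{(iii)$'$} followed by one application of the invariance lemma. The only two points needing care are bookkeeping points. First, the direction of bisimilarity supplied by~\textup{(iii)$'$}, namely $(\calM', \calI') \sim_p (\calM, \calI)$, must be matched to the orientation required by Lemma~\ref{lem:invariantsequent}; this is harmless because $\sim_p$ is symmetric (the converse of a bisimulation up to~$p$ is again a bisimulation up to~$p$). Second, one must record the remark, made just after Def.~\ref{def:uip nested sequents}, that the condition $\calL(\Sigma) = \calL(\Gamma)$ is precisely what guarantees $\calI'$ is suitable as an interpretation of~$\Sigma$. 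Everything else is direct, so the proof is short.
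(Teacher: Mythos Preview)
Your proposal is correct and follows essentially the same approach as the paper's proof: instantiate condition~\textup{(iii)$'$} on the given data to obtain~$\calM'$~and~$\calI'$ with $\calM',\calI'\not\models\Gamma$, then apply Lemma~\ref{lem:invariantsequent} to transfer $\calM,\calI\not\models\Sigma$ across the bisimulation. Your additional remarks on the symmetry of~$\sim_p$ and on why $\calL(\Sigma)=\calL(\Gamma)$ makes~$\calI'$ a legitimate interpretation of~$\Sigma$ are correct bookkeeping points that the paper leaves implicit.
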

\begin{proof}
Let $\Sigma$~be a nested sequent with $p \notin\Var(\Sigma)$ and $\calL(\Sigma)=\calL(\Gamma)$. 
Let $\calM, \calI \not \models A_p(\Gamma)$ and $\calM, \calI \not \models \Sigma$. 
By BNUIP(iii)$'$ we find an $\L$-model~$\calM'$ and $\calI'$~from~$\Gamma$ into~$\calM'$ such that $(\calM',\calI') \sim_p (\calM, \calI)$ and $\calM', \calI' \not \models \Gamma$. 
By Lemma~\ref{lem:invariantsequent}, we also conclude $\calM', \calI' \not \models \Sigma$. 
\end{proof}

\begin{corollary}\label{lem:bnuip equiv}
 If a nested calculus\/~{\sf NL} has the BNUIP, then its logic\/~$\L$ has the UIP. \looseness=-1
\end{corollary}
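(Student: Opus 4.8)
The plan is to observe that Corollary~\ref{lem:bnuip equiv} follows immediately by chaining the two preceding results, so no genuinely new technical work is required. The strategy is to show that the BNUIP already entails the NUIP, after which Lemma~\ref{lem:uip equiv} yields the UIP for the underlying logic.

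First I would unpack the definitions and check that they line up. By Def.~\ref{def:Nested uip 2}, a calculus with the BNUIP provides, for each nested sequent~$\Gamma$ and atomic proposition~$p$, a multiformula~$A_p(\Gamma)$ satisfying conditions NUIP\eqref{NUIP:1} and NUIP\eqref{NUIP:2} of Def.~\ref{def:uip nested sequents}, together with the (possibly stronger) condition~(iii)$'$. Thus the very same interpolants~$A_p(\Gamma)$ are the candidates for witnessing the NUIP; only the third clause differs between the two properties.

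Next I would invoke Lemma~\ref{lem:condition(iii)$'$}, which states precisely that whenever $\Gamma$ and $A_p(\Gamma)$ satisfy~(iii)$'$ they also satisfy NUIP\eqref{NUIP:3}. Applying this to every~$\Gamma$ and every~$p$ shows that the BNUIP interpolants~$A_p(\Gamma)$ in fact meet all three NUIP conditions, so~{\sf NL} has the NUIP. Finally, Lemma~\ref{lem:uip equiv} converts the NUIP into the UIP for~$\L$, completing the argument.

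Since every substantive step has already been carried out in Lemmas~\ref{lem:condition(iii)$'$} and~\ref{lem:uip equiv}, there is no real obstacle here. The only point that requires a little care is the bookkeeping: one must note that the BNUIP retains conditions NUIP\eqref{NUIP:1}--NUIP\eqref{NUIP:2} verbatim and merely strengthens the third clause, so that the hypotheses of Lemma~\ref{lem:condition(iii)$'$} are satisfied for exactly the same multiformulas that witness the BNUIP.
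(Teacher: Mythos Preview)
Your proposal is correct and matches the paper's intended argument: the corollary is stated without proof precisely because it follows immediately by combining Lemma~\ref{lem:condition(iii)$'$} (which upgrades~(iii)$'$ to NUIP\eqref{NUIP:3}) with Lemma~\ref{lem:uip equiv} (which passes from NUIP to UIP). Your bookkeeping observation that BNUIP retains conditions~\eqref{NUIP:1}--\eqref{NUIP:2} verbatim is exactly the point that makes the chaining work.
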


\subsection{Uniform interpolation for $\K$}

\begin{table}[t]
  \begin{center}
  \renewcommand*{\arraystretch}{1.4}
    \begin{tabular}{| l l |} \hline
    $\Gamma$~matches \quad \quad \quad 
    & 
    $A_p(\Gamma)$~equals
    \\
    \hline
    $\Gamma'\vlfill{\top}_\sigma$ 
    &
    $\sigma : \top$ 
    \\
    $\Gamma'\vlfill{p,\overline p}_\sigma$ 
    &
    $\sigma : \top$
    \\
    $\Gamma'\vlfill{\varphi \lor \psi}$ 
    &
    $A_p\bigl(\Gamma'\vlfill{\varphi \lor \psi, \varphi, \psi}\bigr)$
    \\
    $\Gamma'\vlfill{\varphi \land \psi}$ 
    &
    $A_p\bigl(\Gamma'\vlfill{\varphi \land \psi, \varphi}\bigr) \spconj A_p\bigl(\Gamma'\vlfill{\varphi \land \psi, \psi}\bigr)$
    \\
    $\Gamma'\vlfill{\Box \varphi}_\sigma$
    &
    $\bigspconj_{i=1}^m \left(\sigma : \Box \delta_i \spdisj \bigspdisj_{\tau \neq \sigma \ast n} \tau : \gamma_{i,\tau} \right)$
    \\
    & where $n$~is the smallest integer such that $\sigma \ast n \notin\calL(\Gamma)$ and
    the SCNF  \\
    & of  $A_p\bigl(\Gamma'\vlfill{\Box \varphi, [\varphi]_{\sigma \ast n}}\bigr)$ is $\bigspconj_{i=1}^m \left(\sigma \ast n : \delta_i  \spdisj \bigspdisj_{\tau \neq \sigma \ast n} \tau : \gamma_{i,\tau} \right)$,
    \\
    $\Gamma'\vlfill{ \Diamond \varphi,[\Delta]_{\sigma*n}}$
    &
    $A_p\bigl(\Gamma'\vlfill{ \Diamond \varphi,[\Delta, \varphi]}\bigr)$  \\ \hline
    \end{tabular}
    \end{center}
\caption{Recursive construction of~$A_p(\Gamma)$  for~{\sf NK} for~$\Gamma$ that are not $\K$-saturated.}
\label{table:Ap}
\end{table}

In this section, we present our method of constructing nested uniform interpolants satisfying BNUIP for the calculus~$\NK$. It is based on Pitts's  method~\cite{Pitts92JSL}. 
Interpolants~$A_p(\Gamma)$ are defined recursively on the basis of the terminating calculus from Fig.~\ref{fig:rules2}. If $\Gamma$~is not $\K$-saturated, $A_p(\Gamma)$~is defined recursively in Table~\ref{table:Ap} based on the form of~$\Gamma$. For rows~2--5, we assume that the formula displayed in the left column is not $\K$-saturated in~$\Gamma$, whereas for~$\Diamond \varphi$ in the last row we assume it not to be $\K$-saturated w.r.t.~$\sigma \ast n$ in~$\Gamma$.\footnote{Strictly speaking, this is a non-deterministic algorithm. Since the order does not affect our results, we do not specify it. However, it is more efficient to apply rows~1--2 of Table~\ref{table:Ap} first and row~5 last.} Each row in the table corresponds to a rule in the proof search\ifarxiv, where the left column in the table corresponds to the conclusion of a rule and the right column uses the premise(s) of the rule\fi.

For $\K$-saturated~$\Gamma$, we define~$A_p(\Gamma)$ recursively as follows: 
\begin{equation}
\label{eq:ap_step_sat}
A_p(\Gamma) \ce 
\bigspdisj_{\substack{ {\sigma : \ell} \in \Gamma \\\ell \in {\sf Lit} \setminus \{p,\overline{p}\} }} \sigma : \ell \qquad\spdisj\qquad 
\bigspdisj_{\substack{ \tau \in \calL(\Gamma)\\ (\exists \psi) \tau : \Diamond \psi \in \Gamma}} \tau : \Diamond A_p^\form\left(\bigvee\nolimits_{\tau : \Diamond \psi \in \Gamma} \psi\right),
\end{equation}
where $A_p^\form(\Gamma) \ce \form\bigl(A_p(\Gamma)\bigr)$. 
Since here we apply $\form$ to the multiformula~$A_p(\Gamma)$ with 1~being its only label,  we have $\calM, \calI \models \mho$ if{f} $\calM, \calI(1) \models \form(\mho)$ for such multiformulas~$\mho$.
\ifarxiv(As usual, we define the empty disjunction to be false, which in this format means $\bigspdisj \emptyset \ce 1 : \bot$.) \fi

The construction of~$A_p(\Gamma)$ is well-defined (modulo a chosen order) because it terminates w.r.t.~the following ordering on nested sequents. For a nested sequent~$\Gamma$, let $d(\Gamma)$~be the number of its distinct diamond subformulas. Let  $\ll$~be the ordering in which the rules of~$\NK$ terminate (see Lemma~\ref{lem:terminationK}). Consider the lexicographical ordering based on the pair~$(d, \ll)$. For each row in Table~\ref{table:Ap}, $d$~stays the same but the recursive calls are for premise(s) lower w.r.t.~ordering~$\ll$. The recursive call in~\eqref{eq:ap_step_sat} for $\K$-saturated sequents, on the other hand, decreases~$d$ because the set of diamond subformulas of $\bigvee_{\tau : \Diamond \psi \in \Gamma} \psi$ is strictly  smaller than that of~$\Gamma$. When $d(\Gamma)=0$ for a $\K$-saturated~$\Gamma$, the second disjunct of~\eqref{eq:ap_step_sat} is empty and, thus, no new recursive calls are generated.

\begin{example}
\label{ex:important}
We use 
Lemmas~\ref{Lemma: Multiformula properties}~and~\ref{lem:sdnf} as necessary.
\begin{enumerate}
\item 
\ifarxiv The algorithm for $A_p(\Box p, \Box \overline{p})$ calls the calculation of $A_p\left(\Box p, \Box \overline{p}, [p]_{11}\right)$, which in turn calls $A_p\left(\Box p, \Box \overline{p}, [p]_{11}, [\overline{p}]_{12}\right)$. The latter sequent is $\K$-saturated, and the algorithm returns $1 : \bot \spdisj 1 : \bot$, the first disjunct corresponding to the empty disjunction of literals other than~$p$~and~$\overline{p}$ and the second one representing the absent diamond formulas. Computing its SCNF we get $A_p\left(\Box p, \Box \overline{p}, [p]_{11}, [\overline{p}]_{12}\right)\equiv1: \bot \spdisj 11 : \bot \spdisj 12 : \bot$. Applying the transformation from the penultimate row of Table~\ref{table:Ap}, we first get \[
A_p\left(\Box p, \Box \overline{p}, [p]_{11}\right)=1: \bot \spdisj 11 : \bot \spdisj 1 : \Box \bot\equiv 1 : \Box \bot \spdisj 11 : \bot,
\] 
and finally $A_p\left(\Box p, \Box \overline{p}\right) = 1 : \Box \bot \spdisj 1 : \Box \bot \equiv 1 : \Box \bot$.
It is easy to check that  $1 : \Box \bot$~is indeed a bisimulation nested uniform interpolant of the nested sequent $\Box p, \Box \overline{p}$ w.r.t.~$p$, and, accordingly, $\Box \bot$~is a uniform interpolant of the formula $\Box p \lor \Box \overline{p}$. 
\else
For the sequent $\Box p, \Box \overline{p}$, the algorithm  saturates $A_p(\Box p, \Box \overline{p})$ to $A_p\left(\Box p, \Box \overline{p}, [p]_{11}\right)$ and, finally, to $A_p\left(\Box p, \Box \overline{p}, [p]_{11}, [\overline{p}]_{12}\right)=1 : \bot \spdisj 1 : \bot$  (the two $1 : \bot$'s are for the absent literals  from ${\sf Lit} \setminus\{p,\overline{p}\}$  and diamond formulas respectively). Computing the SCNF we get $A_p\left(\Box p, \Box \overline{p}, [p]_{11}, [\overline{p}]_{12}\right)=1: \bot \spdisj 11 : \bot \spdisj 12 : \bot$. Applying the transformation from  row~5 of Table~\ref{table:Ap} twice,  $
A_p\left(\Box p, \Box \overline{p}, [p]_{11}\right)=1: \bot \spdisj 11 : \bot \spdisj 1 : \Box \bot\equiv 1 : \Box \bot \spdisj 11 : \bot$
and, finally, $A_p\left(\Box p, \Box \overline{p}\right) = 1 : \Box \bot \spdisj 1 : \Box \bot \equiv 1 : \Box \bot$.
It is easy to check that   $\Box \bot$ is a uniform interpolant of the formula $\Box p \lor \Box \overline{p}$. 
\fi
\item 
Consider the nested sequent $\Gamma=\overline{p}, \Diamond q \land \Diamond p, [q]$. In the absence of boxes, the algorithm amounts to processing the $\K$-saturated sequents in the leaves of the  proof-search tree
\begin{center}
\AXC{$\overline{p}, \Diamond q \land \Diamond p, \Diamond q, [q]_{11}$}
\AXC{$\overline{p}, \Diamond q \land \Diamond p, \Diamond p, [q,p]_{11}$}
\UIC{$\overline{p}, \Diamond q \land \Diamond p, \Diamond p, [q]_{11}$}
\BIC{$\overline{p}, \Diamond q \land \Diamond p, [q]_{11}$}
\DisplayProof
\end{center}
We have 
\begin{align*}
A_p(\overline{p}, \Diamond q \land \Diamond p, \Diamond q, [q]_{11}) &=  11 : q \spdisj 1 : \Diamond A_p^\form(q) \\
A_p(\overline{p}, \Diamond q \land \Diamond p, \Diamond p, [q,p]_{11}) &=   11 : q \spdisj 1 : \Diamond A_p^\form(p).
\end{align*}
Since formulas~$A^\form_p(q)$~and~$A^\form_p(p)$ can be simplified to~$q$~and~$\bot$ respectively, putting everything together yields
$
A_p(\Gamma) \equiv (11 : q \spdisj 1 : \Diamond q  ) \spconj (11 : q \spdisj 1: \Diamond \bot  )$, which is equivalent to~$11:q$ since $\Diamond \bot$~can never be true. Again, it is easy to verify that $11: q$~is a bisimulation nested uniform interpolant of $\overline{p}, \Diamond q \land \Diamond p, [q]_{11}$ w.r.t.~$p$. \ifarxiv For instance, if $q$~is false at~$\calI(11)$, then one can falsify the sequent by making $p$~true at~$\calI(1)$ and false everywhere else in the irreflexive intransitive finite treelike model.\fi
\end{enumerate}
\end{example}

\begin{theorem} \label{thm:uipK}
The nested calculus\/~$\NK$ has the BNUIP. 
\end{theorem}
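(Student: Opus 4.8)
The plan is to verify that the multiformulas $A_p(\Gamma)$ defined in Table~\ref{table:Ap} and in~\eqref{eq:ap_step_sat} satisfy the BNUIP conditions, by induction along the well-founded measure $(d(\Gamma),\ll)$ that already underlies the termination of the construction (with $\ll$ as in Lemma~\ref{lem:terminationK}). Condition NUIP\eqref{NUIP:1} is a routine induction: every clause either copies the labels and variables of a premise or, in the $\Box$-row, reabsorbs the fresh child label $\sigma\ast n$ into $\sigma:\Box\delta_i$; the atom $p$ is discarded in the two identity rows and, for saturated $\Gamma$, both in the literal disjunction (which ranges over ${\sf Lit}\setminus\{p,\overline p\}$) and in the diamond disjuncts via the recursive call $A_p^\form$. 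For NUIP\eqref{NUIP:2} I argue, again by induction, that truth of $A_p(\Gamma)$ forces truth of $\Gamma$: in each non-saturated row the premise sequent contains every labelled formula of $\Gamma$, so the induction hypothesis transfers; the $\Box$-row additionally uses that a successor witnessing $\sigma\ast n:\delta_i$ witnesses $\sigma:\Box\delta_i$; and in the saturated case a true literal disjunct, or a true $\tau:\Diamond A_p^\form(\bigvee\psi)$, directly makes $\Gamma$ true (for the latter, via the hypothesis a successor satisfying $A_p^\form(\bigvee\psi)$ satisfies some $\psi$).

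The substance of the theorem is condition~(iii)$'$ of Def.~\ref{def:Nested uip 2}, established by the same induction. For the propositional and $\mathsf{k}$-rows it is immediate: since $A_p(\Gamma)$ equals (a conjunct of) the interpolant of a premise whose labelled formulas include those of $\Gamma$, falsity of $A_p(\Gamma)$ yields falsity of that premise's interpolant, the hypothesis produces a $\sim_p$-equivalent $\L$-model falsifying the premise, and hence also $\Gamma$. The two identity rows are vacuous, since $A_p(\Gamma)=\sigma:\top$ can never be falsified. The $\Box$-row needs one extra move: from $\calM,\calI\not\models A_p(\Gamma)$ some conjunct fails, so $\calI(\sigma)\not\models\Box\delta_{i_0}$ furnishes a successor $v$ with $\calM,v\not\models\delta_{i_0}$; extending $\calI$ by $\sigma\ast n\mapsto v$ falsifies the corresponding conjunct of the premise's SCNF, after which the hypothesis and restriction back to $\calL(\Gamma)$ finish the case.

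The heart of the argument, and the step I expect to be hardest, is condition~(iii)$'$ for a $\K$-saturated $\Gamma$, where the counter-model must be built directly. Falsity of~\eqref{eq:ap_step_sat} tells us that at each image world every non-$p$ literal of $\Gamma$ is already false and that every successor of each diamond node $\calI(\tau)$ fails $A_p^\form(\bigvee_{\tau:\Diamond\psi\in\Gamma}\psi)$. The construction proceeds in stages using Lemma~\ref{lem: Bisimilarity of Submodels}: first I duplicate subtrees to pass to a $\sim_p$-equivalent $\K$-model on which $\calI$ is injective and distinct children sit in disjoint subtrees, so that the valuation of $p$ can be reset independently at each $\calI(\sigma)$ to falsify the $p$-literals of $\Gamma$ (saturation guarantees that no node carries both $p$ and $\overline p$, removing the only possible clash); second, for every diamond node $\tau$ and every successor of $\calI(\tau)$ that is \emph{not} an image of a child label, I replace its subtree---which by the tree structure and injectivity contains no image world---with a $\sim_p$-equivalent $\L$-model falsifying $\bigvee\psi$, obtained from the hypothesis applied to the single-node sequent $\bigvee\psi$, whose measure $d$ is strictly smaller. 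The key conceptual point is then a separate structural induction on formula complexity showing that, once all literals are falsified and all \emph{non-child} successors of diamond nodes falsify the diamond contents, saturation closure forces \emph{every} labelled formula of $\Gamma$ to be false: disjunctions and conjunctions reduce to their present immediate subformulas, a box $\sigma:\Box\varphi$ is killed by its saturation witness $\sigma\ast n:\varphi$, and a diamond $\sigma:\Diamond\varphi$ is killed because child successors falsify $\varphi$ by the structural hypothesis (as $\sigma\ast n:\varphi\in\Gamma$) while non-child successors do so by the replacement. This yields $\calM',\calI'\not\models\Gamma$ with $(\calM',\calI')\sim_p(\calM,\calI)$, completing~(iii)$'$; the UIP for $\K$ then follows via Lemma~\ref{lem:condition(iii)$'$} and Corollary~\ref{lem:bnuip equiv}. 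The main difficulty to manage carefully is coordinating these model surgeries so that resetting $p$, replacing successor subtrees, and preserving the image worlds do not interfere, and checking that each operation keeps the model within the class of $\K$-models as required by Lemma~\ref{lem: Bisimilarity of Submodels}.
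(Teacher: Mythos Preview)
Your proposal is correct and follows essentially the same approach as the paper's proof: the same induction on $(d,\ll)$, the same case split via Table~\ref{table:Ap} and~\eqref{eq:ap_step_sat}, and for the saturated case of~(iii)$'$ the same three-stage construction (injectify by duplication, replace out-of-range successor subtrees via the inductive hypothesis on the strictly smaller~$d$, and reset the valuation of~$p$ at image worlds), followed by the structural induction on formulas using saturation. One minor slip to fix when you write this up: in your NUIP\eqref{NUIP:2} sketch for the $\Box$-row you say ``a successor witnessing $\sigma\ast n:\delta_i$ witnesses $\sigma:\Box\delta_i$,'' but the argument runs the other way---truth of $\sigma:\Box\delta_i$ forces $\sigma\ast n:\delta_i$ to hold for \emph{every} chosen successor, which is what lets you verify the premise's interpolant and invoke the hypothesis.
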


\begin{proof}
It is easy to see that BNUIP\eqref{NUIP:1} is satisfied.
To prove BNUIP\eqref{NUIP:2}, let $\Gamma$~be a nested sequent and $\calI$~be a multiworld interpretation of~$\Gamma$ into a $\K$-model $\calM=(W,R,V)$ such that $\calM, \calI \models A_p(\Gamma)$ (by BNUIP\eqref{NUIP:1},  $\calI$~is suitable for~$A_p(\Gamma)$). We show $\calM, \calI \models \Gamma$
by induction on the nested sequent ordering~$(d, \ll)$.   Considering the construction of~$A_p(\Gamma)$, we treat the cases of Table~\ref{table:Ap} first  and deal with the case of $\K$-saturated~$\Gamma$ last.
\begin{itemize}
 \ifarxiv 
\item For rows~1--2 of Table~\ref{table:Ap}, both $\Gamma = \Gamma' \{p, \overline{p} \}_\sigma$ and $\Gamma= \Gamma'\{\top\}_\sigma$ hold in all models, under all interpretations.

\item For row~3, if $\Gamma=\Gamma'\{\varphi \lor \psi \}_\sigma$ and $\calM, \calI \models A_p(\Gamma'\{\varphi \lor \psi, \varphi, \psi \}_\sigma)$, by induction hypothesis, we have $\calM, \calI \models \Gamma'\{\varphi \lor \psi, \varphi, \psi \}_\sigma$. Then $\calM, \calI \models \Gamma'\{\varphi \lor \psi\}$ since either of $\calM, \calI(\sigma) \models \varphi$ or $\calM, \calI(\sigma) \models \psi$ implies $\calM, \calI(\sigma) \models \varphi \lor \psi$.

\item For row~4, if $\Gamma=\Gamma'\{\varphi \land \psi \}$ and  $\calM, \calI \models A_p(\Gamma'\{\varphi \land \psi, \varphi \}) \spconj A_p(\Gamma'\{\varphi \land \psi, \psi \})$, by induction hypothesis,  $\calM, \calI \models \Gamma'\{\varphi \land \psi, \varphi \}$ and $\calM, \calI \models \Gamma'\{\varphi \land \psi, \psi \}$. Hence, $\calM, \calI \models \Gamma'\{\varphi \land \psi\}$. 
\else
\item Cases in rows~1--2 of Table~\ref{table:Ap} are trivial. Those in rows~3, 4,~and~6 are similar, so we only discuss row 6.
\fi

\item For row~6, if $\Gamma=\Gamma'\{\Diamond \varphi, [\Delta]_{\sigma\ast n}  \}$ and $\calM, \calI \models A_p(\Gamma'\{\Diamond \varphi, [\Delta, \varphi]_{\sigma\ast n}  \})$, by induction hypothesis, $\calM, \calI \models \Gamma'\{\Diamond \varphi, [\Delta, \varphi]_{\sigma\ast n}  \}$. Since $\calM, \calI(\sigma \ast n) \models \varphi$ implies $\calM, \calI(\sigma) \models \Diamond \varphi$,  it follows that $\calM, \calI \models \Gamma'\{ \Diamond \varphi, [\Delta]_{\sigma\ast n} \}$.

\item For row~5, let $\Gamma=\Gamma' \{ \Box \varphi\}_\sigma$, and $A_p\bigl(\Gamma'\vlfill{\Box \varphi, [\varphi]_{\sigma \ast n}}\bigr)\equiv\bigspconj_{i=1}^m \left(\sigma \ast n : \delta_i  \spdisj \bigspdisj_{\tau \neq \sigma \ast n} \tau : \gamma_{i,\tau} \right)$ for some $\sigma \ast n \notin\calL(\Gamma)$, and  
\begin{equation}
\label{eq:K_box}
\calM, \calI \models \bigspconj_{i=1}^m \left(\sigma : \Box \delta_i \spdisj \bigspdisj_{\tau \neq \sigma \ast n} \tau : \gamma_{i,\tau} \right).
\end{equation}
For any~$v$ such that~$\calI(\sigma) R v$, define a multiworld interpetation $\calI_v \ce \calI \sqcup \{ (\sigma \ast n, v)\}$ of 
$\Gamma'\vlfill{\Box \varphi, [\varphi]_{\sigma \ast n}}$ into~$\calM$. 
It follows from~\eqref{eq:K_box} that, for each~$i$, either $\calM, \calI_v(\tau) \models \gamma_{i,\tau}$ for some $\tau \in \calL(\Gamma)$ or $\calM, \calI_v(\sigma \ast n) \models \delta_i$, 
meaning that $
\calM, \calI_v \models A_p(\Gamma'\vlfill{\Box \varphi, [\varphi]_{\sigma \ast n}})
$. By induction hypothesis, 
$
\calM, \calI_v  \models \Gamma'\vlfill{\Box \varphi, [\varphi]_{\sigma \ast n}}
$ whenever~$\calI(\sigma) R v$. 
Clearly, $\calM, \calI \models \Gamma$ if $\calM, \calI(\sigma) \models \Box \varphi$. 
Otherwise, there exists a~$v$ such that $\calI(\sigma) R v$~and $\calM, v \not\models \varphi$. For this world  $
\calM, \calI_v  \models \Gamma'\vlfill{\Box \varphi, [\varphi]_{\sigma \ast n}}
$  implies $
\calM, \calI_v  \models \Gamma'\vlfill{\Box \varphi}_\sigma
$, which yields $
\calM, \calI  \models \Gamma
$ because  $\calI_v$~agrees with~$\calI$ on all labels from~$\Gamma$. 

\item 
Finally, let $\Gamma$~be $\K$-saturated and $\calM, \calI \models A_p(\Gamma)$ from~\eqref{eq:ap_step_sat}.
Clearly, $\calM, \calI \models \Gamma$ if we have $\calM, \calI(\sigma) \models \ell$ for some~$\sigma : \ell \in \Gamma$. Thus, it only remains to consider the case when $\calM, \calI(\tau) \models \Diamond A_p^\form\left(\bigvee_{\tau : \Diamond \psi \in \Gamma} \psi\right)$
for some  $\tau \in \calL(\Gamma)$. 
Then $\calM, v \models A_p^\form\left(\bigvee_{\tau : \Diamond \psi \in \Gamma} \psi\right)$ for some~$v$ such that~$\calI(\tau)Rv$ and, accordingly, $
\calM, \calJ  \models A_p\left(\bigvee_{\tau : \Diamond \psi \in \Gamma} \psi\right)
$ for  $\calJ\ce \{(1,v)\}$.
By induction hypothesis (for smaller~$d$),  
$
\calM, \calJ \models  \bigvee_{\tau : \Diamond \psi \in \Gamma} \psi$,  and, hence, $\calM, v\models\psi$ for some  $\tau : \Diamond \psi \in \Gamma$. Now $\calM, \calI \models \Gamma$ follows from~$\calI(\tau)Rv$. This case concludes the proof for~\eqref{NUIP:2}.\looseness=-1
\end{itemize}

It only remains to  prove BNUIP(iii)$'$. 
Let  $\calI$~be a multiworld interpretation  of~$\Gamma$ into a $\K$-model~$\calM$ such that $\calM, \calI \not \models A_p(\Gamma)$. We must find another multiworld interpretation~$\calI'$ into some $\K$-model~$\calM'$   such that $(\calM',\calI') \sim_p (\calM,\calI)$ and $\calM', \calI' \not \models \Gamma$. We construct these~$\calM'$~and~$\calI'$ while simultaneously  proving BNUIP(iii)$'$ by induction on the lexicographic order~$(d, \ll)$. Recall that $\K$-models (and their submodels)  are irreflexive intransitive trees.
\begin{itemize}
\item 
Let $\Gamma$~be $\K$-saturated and $\calM, \calI \not\models A_p(\Gamma)$ for~$A_p(\Gamma)$ from~\eqref{eq:ap_step_sat}.
We first briefly sketch the construction and the proof. 
The labeled literals~$\sigma : \ell$ from~\eqref{eq:ap_step_sat} are used to determine the requisite truth values of atomic propositions other than~$p$ in the worlds from $\mathrm{Range}(\calI)$. 
With that in place, saturation conditions typically take care of the appropriate truth values for compound formulas, with the exception of diamond formulas. 
By contrast, truth values of~$p$ are not (and cannot be) specified in~$A_p(\Gamma)$. To refute~$\Gamma$, they  must generally be adjusted on a world-by-world basis, which prompts the additional requirement that $\calI'$~be injective\footnote{It must be injective as a function, i.e.,~$\calI'(\sigma) = \calI'(\tau)$ implies $\sigma = \tau$.} in order to avoid incompatible requirements on the truth value of~$p$ in a world $\calI(\sigma) =\calI(\tau)$ that originates from distinct nodes~$\sigma$~and~$\tau$. 
Finally, for~$\Diamond \varphi$ to be false at a world $w \in \mathrm{Range}(\calI)$, one must  falsify~$\varphi$ in all children of~$w$, including those  outside $\mathrm{Range}(\calI)$. 
This is achieved by replacing subtrees rooted in these ``out-of-range'' children with bisimilar models obtained by the induction hypothesis from the right disjunct of~\eqref{eq:ap_step_sat},
as  schematically depicted in Fig.~\ref{fig:construction}. We now describe it in detail and prove that it falsifies~$\Gamma$.\looseness=-1
\begin{enumerate}[(1)]
\item
\label{step:injectify}
First, we make the interpretation injective.
It is easy to see (though  tedious to describe in detail) that by a breadth-first recursion on nodes~$\sigma$ in~$\Gamma$, one can duplicate~$\calM_{\calI(\sigma\ast n)}$ according to Def.~\ref{def:copymodels} whenever $\calI(\sigma*n) = \calI(\sigma\ast m)$ for some $m <n$ to  obtain a model~$\calN$ and an injective multiworld interpretation~$\calJ$ of~$\Gamma$ into it such that $(\calN,\calJ) \sim_p (\calM,\calI)$. Thus, $\calJ(\sigma) \ne \calJ(\tau)$ whenever $\sigma \ne \tau$ and $\calN, \calJ \not \models A_p(\Gamma)$ by Lemma~\ref{lem:invariantsequent}.

\begin{figure}[t]
\begin{center}
\ \ \ \ \ \ \ \ \ \ \ \ \ 
\begin{tikzpicture}[modal]
\node[point] (z) [label=below:{\footnotesize{$\calI(\sigma)$}}] { };
\node[world] (z1) [above =of z, yshift=-9.5mm ] {};
\node[point] (v) [above left =of z, xshift=-2mm,label=below:{\footnotesize{$v$}}] {};
\node[itria] (Mv) [above =of v, yshift=6mm]{\footnotesize{$\calM_v$} };
\node[point] (y) [above right =of z, xshift=2mm, label={[right, yshift=1.5mm, xshift=1mm]\footnotesize{$\calI(\sigma*n)$}},label={[right, yshift=-2mm, xshift=1mm]\footnotesize{$\calI(\sigma*m)$}}]{ };
\node[world] (y1) [above =of y, yshift=-9.5mm ] {};
\node[itria] (Yw) [above =of y, yshift=6mm]{\footnotesize{$\calM_w$} };

\path[->] (z) edge  (v);
\path[->] (z) edge  (y);

\node[ ] (model) [below =of z] {in $\calM $};
\end{tikzpicture} \ \ \ 
\begin{tikzpicture}[modal]
\node[point] (z) [label=below:{\footnotesize{$\calI'(\sigma)$}}] { };
\node[world] (z1) [above =of z, yshift=-9.5mm ] {};
\node[point] (v) [above left =of z, xshift=-12mm,label=below:{\footnotesize{$\rho_{\sigma,v}$}}] {};
\node[itria] (Mv) [above =of v, yshift=6mm]{\footnotesize{$\calN_{\sigma,v}$} };
\node[point] (w) [above =of z, yshift=-2.5mm , label={[right, xshift=1mm]\footnotesize{$\calI'(\sigma*n)$}}]{ };
\node[world] (w1) [above =of w, yshift=-9.5mm ] {};
\node[itria] (Mw) [above =of w, yshift=6mm]{\footnotesize{$\calM_w$} };
\node[point] (y) [above right =of z, xshift=12mm,label={[right, xshift=1mm]\footnotesize{$\calI'(\sigma*m)$}}]{ };
\node[world] (y1) [above =of y, yshift=-9.5mm ] {};
\node[itria] (Yw) [above =of y, yshift=6mm]{\footnotesize{$\calM^c_w$} };

\path[->] (z) edge  (v);
\path[->] (z) edge  (w);
\path[->] (z) edge  (y);

\node[ ] (model) [below =of z] {in $\calM'$};
\node[ ] (construction) [left =of v] {\ $\leadsto$ \ };
\end{tikzpicture}

\caption{Main transformations for constructing  model $\calM'$: circles represent worlds in ${\sf Range}(\calI)$.}
\label{fig:construction}
\end{center}
\end{figure}

\item
\label{step:stray_nodes}
Then we deal with out-of-range children.
A model~$\calN'$ is constructed from~$\calN$  by applying the following $\Diamond$-processing step for each node $\tau \in \calL(\Gamma)$ that contains at least one formula of the form~$\Diamond \phi$ (nodes can be chosen in any order). Start by setting $\calN^0 \ce \calN$ and $j\ce 0$: 
\begin{itemize}
 \item{\textbf{$\Diamond$-processing step for~$\tau$:}}
 Since $\calN^j, \calJ \not \models A_p(\Gamma)$, it follows from the second disjunct in~\eqref{eq:ap_step_sat} that $\calN^j, \calJ(\tau) \not\models \Diamond A_p^\form\left(\bigvee_{\tau : \Diamond \psi \in \Gamma} \psi\right)$. 
 Thus, $\calN^j, v \not\models A_p^\form\left(\bigvee_{\tau : \Diamond \psi \in \Gamma} \psi\right)$ for any child~$v$ of~$\calJ(\tau)$ in~$\calN^j$, and, accordingly,  $\calN^j_v, \calI_v \not \models A_p\left(\bigvee_{\tau : \Diamond \psi \in \Gamma} \psi\right) $
 for the multiworld interpretation $\calI_v\ce \{(1,v)\}$ of sequent $\bigvee_{\tau : \Diamond \psi \in \Gamma} \psi$ into the  subtree~$\calN^j_v$ of~$\calN^j$ with root~$v$. 
By the induction hypothesis for a smaller~$d$, there exists a $\K$-model~$\calN_{\tau, v}$ with root~$\rho_{\tau,v}$ such that $(\calN^j_v, v) \sim_p (\calN_{\tau,v}, \rho_{\tau,v})$ and $\calN_{\tau,v}, \rho_{\tau,v} \not\models \bigvee_{\tau : \Diamond \psi \in \Gamma} \psi$. 
 Let $\calN^{j+1}$~be the result of replacing each subtree~$\calN^j_v$ for  children~$v$ of~$\calJ(\tau)$ not in ${\sf Range}(\calJ)$ with~$\calN_{\tau,v}$ in~$\calN^j$ according to Def.~\ref{def:copymodels}. Note that all these subtrees are disjoint because the models are intransitive trees and, hence, these replacements do not interfere with one another. Note also that since ${\sf Range}(\calJ)$ is downward closed and the roots of the replaced subtrees are outside, no world from the range is modified. Thus, $\calJ$~remains an injective interpretation into~$\calN^{j+1}$. Finally, it follows from  Lemma~\ref{lem: Bisimilarity of Submodels} that $(\calN^j,\calJ) \sim_p (\calN^{j+1},\calJ)$. Hence, $\calN^{j+1},\calJ \not \models A_p(\Gamma)$.
\end{itemize}
Let $\calN' = (W',R',V')$ be  the model obtained after  replacements for all~$\tau$'s are completed (again they do not interfere with each other). 
Then $(\calN, \calJ) \sim_p (\calN', \calJ)$
and,  for each out-of-range child~$v$ of~$\calJ(\tau)$ in~$\calN$, the world~$\rho_{\tau,v}$ is  a child of~$\calJ(\tau)$ in~$\calN'$ and $\calN', \rho_{\tau,v} \not \models\bigvee_{\tau : \Diamond \psi \in \Gamma} \psi$. This accounts for all children of~$\calJ(\tau)$ in~$\calN'$.

\item 
\label{step:wiggle_p}
It remains to adjust the truth values of~$p$.
We define $\calM'\ce (W',R', V'_p)$  by modifying the valuation~$V'$ of~$\calN'$ as follows:
\[
V_p'(q) \ce 
\begin{cases}
V'(q) & \text{if $q \ne p$;}
\\
V'(p) \cap \bigl(W' \setminus {\sf Range}(\calJ)\bigr) \sqcup \{v \in W' \mid \exists \sigma (v = \calJ(\sigma) \& \sigma \colon \overline{p} \in \Gamma)\}
& \text{if $q = p$.}
\end{cases}
\]
For $\calI' \ce \calJ$, it immediately follows from the definition that 
\begin{align}
\label{case:iiibpbar}
\calM', \calI'(\sigma) \not\models  \overline{p} &\text{ whenever } \sigma : \overline{p} \in \Gamma;
\\
\label{case:iiibp}
\calM', \calI'(\sigma) \not\models p &\text{ whenever } \sigma : p \in \Gamma
\end{align}
(the latter follows from the injectivity of~$\calI'$ and $\Gamma$~being $\K$-saturated). Moreover, since subtrees~$\calM'_{\rho_{\tau,v}}$ are disjoint from ${\sf Range}(\calI')$, 
\begin{equation}
\label{case:iiibstray}
\calM', \rho_{\tau,v} \not \models \psi \text{ whenever $\tau : \Diamond \psi \in \Gamma$}. 
\end{equation}
\end{enumerate}

After these three steps, we have a model  $(\calM',\calI') \sim_p (\calN', \calJ) \sim_p (\calN,\calJ) \sim_p (\calM,\calI)$ that satisfies~\eqref{case:iiibpbar}, \eqref{case:iiibp},~and~\eqref{case:iiibstray}.
It remains to prove that  $\calM',\calI' \not \models \Gamma$ by showing that $\calM',\calI'(\sigma) \not \models  \varphi$ for all $\sigma : \varphi \in \Gamma$, which is done by induction on the structure of~$\varphi$. 
For $\varphi=\bot$ this is trivial, while $\top$~cannot occur in a $\K$-saturated sequent. 
For $\varphi \in \{p,\overline{p}\}$, this follows from~\eqref{case:iiibpbar}~and~\eqref{case:iiibp}. For any other literal $\phi\in{\sf Lit} \setminus\{p,\overline{p}\}$, according  to~\eqref{eq:ap_step_sat}, $\calM, \calI(\sigma) \not\models \varphi$ because $\calM, \calI \not\models A_p(\Gamma)$, 
which transfers to~$\calM'$~and~$\calI'$ by bisimilarity up to~$p$. For compound formulas other that diamonds, the statement follows by the saturation of~$\Gamma$. For instance, if $\sigma : \Box \psi \in \Gamma$, we get $\sigma \ast n : \psi\in\Gamma$ for some label~$\sigma \ast n$ by $\K$-saturation. 
By induction hypothesis, $\calM', \calI'(\sigma \ast n) \not \models  \psi$. Since $\calI'(\sigma) R' \calI'(\sigma \ast n)$, we conclude  $\calM', \calI'( \sigma) \not \models  \Box \psi$ as required. 
Finally, let $\sigma: \Diamond \psi \in \Gamma$. To falsify~$\Diamond \psi$ at~$\calI'(\sigma)$, we need to  show that $\calM', u \not \models \psi$ whenever~$\calI'(\sigma) R' u$. If $u = \calI'(\sigma \ast n)$ for some  label $\sigma \ast n \in \calL(\Gamma)$, saturation ensures that $\sigma \ast n : \psi \in \Gamma$, hence, $\calM', u \not \models \psi$ by  induction hypothesis. 
The only other children of~$\calI'(\sigma)$ are $u=\rho_{\sigma, v}$, for which $\calM', u \not \models \psi$ follows from~\eqref{case:iiibstray}. This completes the proof of BNUIP(iii)$'$ for $\K$-saturated sequents.

\item  Now we treat all sequents that are not $\K$-saturated based on Table~\ref{table:Ap}.
$A_p(\Gamma'\{ \top\}_\sigma)=A_p(\Gamma'\{ p, \overline{p}\}_\sigma)=\sigma : \top$, which cannot be false, thus, BNUIP(iii)$'$ for them is vacuously true. 

\item For non-saturated $\Gamma'\{ \varphi \lor \psi\}$, $\Gamma'\{ \varphi \land \psi\}$, and $\Gamma'\{ \Diamond \varphi, [\Delta]\}$, the requisite statement easily follows by induction hypothesis. For instance, for the last of the three, one obtains  $(\calM',\calI') \sim_p (\calM, \calI)$ such that $\calM', \calI' \not \models \Gamma'\{ \Diamond \varphi, [\Delta, \varphi]\}$. Since $\Gamma'\{ \Diamond \varphi, [\Delta]\}$ consists of some of these formulas in the same nodes, clearly it is also falsified by $\calM',\calI'$.

\item For the remaining case, assume $\calM, \calI \not \models A_p(\Gamma' \{\Box \varphi \}_\sigma)$, i.e., 
\begin{equation}
\label{eq:axone}
\calM, \calI \not \models \bigspconj_{i=1}^m \left(\sigma : \Box \delta_i \spdisj \bigspdisj_{\tau \neq \sigma \ast n} \tau : \gamma_{i,\tau} \right)
\end{equation}
where 
\begin{equation}
\label{eq:axtwo}
A_p\bigl(\Gamma'\vlfill{\Box \varphi, [\varphi]_{\sigma \ast n}}\bigr)\equiv\bigspconj_{i=1}^m \left(\sigma \ast n : \delta_i  \spdisj \bigspdisj_{\tau \neq \sigma \ast n} \tau : \gamma_{i,\tau} \right).
\end{equation}
By~\eqref{eq:axone}, for some~$i$, we have
$\calM, \calI(\sigma) \not \models \Box \delta_i$ and $\calM, \calI(\tau) \not \models \gamma_{i,\tau}$ for all $\tau \ne \sigma \ast n$. The former means that 
$\calM, v \not \models \delta_i$ for some~$v$ such that~$\calI(\sigma)Rv$. Therefore,  a multiworld interpretation $\calJ \ce\calI \sqcup \{(\sigma \ast n, v) \}$ of $\Gamma'\vlfill{\Box \varphi, [\varphi]_{\sigma \ast n}}$ into~$\calM$ falsifies~\eqref{eq:axtwo}, and, by  induction hypothesis,  there is a multiworld interpretation~$\calJ'$ into a $\K$-model~$\calM'$   such that $(\calM', \calJ')\sim_p(\calM, \calJ)$ and $\calM', \calJ' \not \models \Gamma'\vlfill{\Box \varphi, [\varphi]_{\sigma \ast n}}$. 
For $\calI' \ce \calJ' \upharpoonright {\sf Dom}(\calI)$, it is easy to see that $(\calM,\calI) \sim_p (\calM', \calI')$ and $\calM', \calI' \not \models \Gamma'\{\Box \varphi \}_\sigma$ because all formulas from $\Gamma'\{\Box \varphi \}_\sigma$ are present in $\Gamma'\vlfill{\Box \varphi, [\varphi]_{\sigma \ast n}}$.
\end{itemize}
This concludes the proof of BNUIP(iii)$'$, as well as of BNUIP.
\end{proof}

This implies the UIP for~$\K$,  first proved by Ghilardi~\cite{Ghilardi95}.\looseness=-1

\begin{corollary}
Logic\/~$\K$ has the uniform interpolation property.
\end{corollary}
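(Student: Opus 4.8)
The plan is to read off the statement as an immediate consequence of the two results established directly above it. First I would invoke Theorem~\ref{thm:uipK}, which asserts that the nested sequent calculus~$\NK$ has the BNUIP. Then I would apply Corollary~\ref{lem:bnuip equiv}, which states that any nested calculus~$\mathsf{NL}$ with the BNUIP has an underlying logic~$\L$ with the UIP. Since~$\NK$ is sound and complete for~$\K$ (Br\"unnler, cf.\ the discussion around Theorem~\ref{lem:terminationK}), instantiating Corollary~\ref{lem:bnuip equiv} with~$\mathsf{NL}=\NK$ and~$\L=\K$ delivers exactly the claim that~$\K$ enjoys the UIP in the sense of Def.~\ref{def:UniformInterpolation}.

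There is essentially no obstacle at this point: all the genuine content has already been discharged. The substantive work lives entirely in Theorem~\ref{thm:uipK}, whose proof supplies the explicit recursive construction of the nested uniform interpolants~$A_p(\Gamma)$ via Table~\ref{table:Ap} and~\eqref{eq:ap_step_sat} and then verifies BNUIP\eqref{NUIP:1}, BNUIP\eqref{NUIP:2}, and condition~(iii)$'$; the last of these is where the Kripke-semantic machinery---the model transformations of Def.~\ref{def:copymodels} and the bisimulation-invariance lemmas---is actually used. Corollary~\ref{lem:bnuip equiv} in turn rests on Lemma~\ref{lem:uip equiv} (NUIP entails UIP) together with the lemma showing that condition~(iii)$'$ implies NUIP\eqref{NUIP:3}.

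Accordingly, the corollary is a bookkeeping step that merely records the composition of these two named results, and the only thing I would write is the one-line chain ``by Theorem~\ref{thm:uipK} the calculus~$\NK$ has the BNUIP, hence by Corollary~\ref{lem:bnuip equiv} its logic~$\K$ has the UIP.'' The sole point worth a sentence of justification is the matching of vocabulary---confirming that the UIP guaranteed by Corollary~\ref{lem:bnuip equiv} is precisely the property in Def.~\ref{def:UniformInterpolation}, with~$\forall p\varphi$ extracted from the single-label interpolant~$A_p(\varphi)$ as in the proof of Lemma~\ref{lem:uip equiv}---but this identification is already internal to that lemma, so nothing further is needed.
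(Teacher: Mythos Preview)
Your proposal is correct and matches the paper's approach exactly: the corollary is stated without proof, immediately after the sentence ``This implies the UIP for~$\K$, first proved by Ghilardi,'' so it is indeed just the composition of Theorem~\ref{thm:uipK} with Corollary~\ref{lem:bnuip equiv}.
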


\ifarxiv
\begin{remark}
Note that the structure of models as irreflexive intransitive trees was substantially used to ensure that the replacements applied to the original model do not interfere with each other. The fact that each world has at most one parent provided the modularity necessary to implement various requirements on the sequent-refuting model.
\end{remark}
\fi

\begin{example}
 In Example~\ref{ex:important} we saw that $A_p(\Box p, \Box \overline{p}) \equiv 1:\Box \bot$. We now use this example to demonstrate the importance of injectivity in BNUIP(iii)$'$. Indeed, suppose $\calM, \calI \not \models 1 : \Box \bot$, i.e., $\calI(1)$~has at least one child. Assume this is the only child, as in a  model  depicted on the left:\looseness=-1
\begin{center}
\begin{tikzpicture}[modal]
\node[point] (z) [label=right:{\footnotesize{$\calI(1)$}}] {};
\node[world] (z1) [above =of z, yshift=-9.5mm ] {};
\node[point] (w) [above =of z, yshift=-2.5mm]{ };
\node[itria] (Mw) [above =of w, yshift=6mm]{ } ;

\path[->] (z) edge  (w);

\end{tikzpicture}
\ \ \ \ \ \ \ \ \ \ 
\begin{tikzpicture}[modal]
\node[point] (z) [label=right:{\footnotesize{$\calJ(1)$}}] {};
\node[world] (z1) [above =of z, yshift=-9.5mm ] {};
\node[point] (w) [above =of z, yshift=-2.5mm, label={[right, yshift=1mm, xshift=1mm]\footnotesize{$\calJ(11)$}},label={[right, yshift=-2mm, xshift=1mm]\footnotesize{$\calJ(12)$}}]{ };
\node[world] (w1) [above =of w, yshift=-9.5mm ] {};
\node[itria] (Mw) [above =of w, yshift=6mm]{ } ;

\path[->] (z) edge  (w);

\end{tikzpicture}
\end{center}
For a saturation $\Box p, \Box \overline{p}, [p]_{11}, [\overline{p}]_{12}$ of this sequent, we found  an interpolant in SCNF: namely, $1: \bot \spdisj 11 : \bot \spdisj 12 : \bot$. A multiworld interpretation~$\calJ$ mapping both~$11$~and~$12$ to the only child of $\calJ(1) \ce \calI(1)$ yields the picture on the right. Clearly, the SCNF is false:  $\calM, \calJ \not \models 1: \bot \spdisj 11 : \bot \spdisj 12 : \bot$. But, without forcing~$\calJ$ to be injective, it is impossible to make $\Box p, \Box \overline{p}$ false at~$\calJ(1)$: whichever truth value $p$~has at~$\calJ(11)$, it makes one of the boxes true.
\end{example}

\subsection{Uniform interpolation for~$\D$~and~$\T$}

\begin{table}[t]
  \begin{center}
  \renewcommand*{\arraystretch}{1.4}
    \begin{tabular}{| l l |} \hline
    $\Gamma$~matches \quad \quad \quad 
    & 
    $A_p(\Gamma)$~equals
    \\
    \hline
    $\Gamma'\vlfill{\Diamond \varphi}$ in logic~$\T$
    &
    $A_p(\Gamma'\vlfill{\Diamond \varphi, \varphi})$ 
   \\
     $\Gamma'\vlfill{\Diamond \varphi}_{\sigma}$ in logic~$\D$
    &
    $\bigspdisj_{i=1}^m \left(\sigma : \Diamond \delta_i \spconj \bigspconj_{\tau \neq \sigma \ast 1} \tau : \gamma_{i,\tau}\right)$  where  the SDNF of 
   \\
   & $A_p(\Gamma'\vlfill{\Diamond \varphi, [\varphi]_{\sigma \ast 1}})$ is $\bigspdisj_{i=1}^m \left( \sigma \ast 1 : \delta_i \spconj \bigspconj_{\tau \neq \sigma \ast 1}\tau : \gamma_{i,\tau} \right)$
    \\ \hline
    \end{tabular}
    \end{center}
\caption{Additional recursive rules for constructing~$A_p(\Gamma)$  for~$\Gamma$ that are not $\T$-saturated (top row) or not $\D$-saturated (bottom row).}
\label{table:ApDT} 
\end{table}

The proof for~$\K$ can be adjusted to prove the same result for~$\D$~and~$\T$.

\begin{theorem} \label{thm:uipDT}
The nested sequent calculi\/~$\ND$~and\/~$\NT$ have the BNUIP.
\end{theorem}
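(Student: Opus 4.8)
The plan is to run the proof of Theorem~\ref{thm:uipK} essentially unchanged, treating $\ND$ and $\NT$ like $\NK$ with three edits: the interpolant $A_p(\Gamma)$ is computed using also the two extra rows of Table~\ref{table:ApDT} (invoked when the displayed $\Diamond\varphi$ fails to be $\D$-saturated, resp.\ $\T$-saturated), the saturated base case~\eqref{eq:ap_step_sat} being kept verbatim; ``$\K$-saturated'' becomes ``$\D$-saturated''/``$\T$-saturated''; and ``$\K$-model'' becomes ``$\D$-model''/``$\T$-model''. Termination survives because neither new row changes the diamond count $d$ while each recurses on a premise smaller in the terminating order $\ll$ of $\ND$ (resp.\ $\NT$), so the order $(d,\ll)$ still descends. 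BNUIP\eqref{NUIP:1} is immediate, and BNUIP\eqref{NUIP:2} gains two cases: the $\T$-row is a structural reduction, as $\Gamma'\vlfill{\Diamond\varphi,\varphi}$ contains the formulas of $\Gamma'\vlfill{\Diamond\varphi}$; the $\D$-row is the $\spdisj$/$\spconj$-dual of the box row, where a true disjunct yields via $\calI(\sigma)\models\Diamond\delta_i$ a child $v$ with $\calM,v\models\delta_i$, one extends $\calI$ to $\calJ\ce\calI\sqcup\{(\sigma\ast 1,v)\}$, applies the induction hypothesis to $\Gamma'\vlfill{\Diamond\varphi,[\varphi]_{\sigma\ast 1}}$, and restricts back.

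For BNUIP(iii)$'$ the $\T$-row again follows by induction, since refuting $\Gamma'\vlfill{\Diamond\varphi,\varphi}$ refutes $\Gamma'\vlfill{\Diamond\varphi}$. The $\D$-row mirrors the box case, the one new idea being the source of the witness child. Assume $\calM,\calI\not\models A_p(\Gamma'\vlfill{\Diamond\varphi}_\sigma)$; as a $\spdisj$-disjunction it fails in every disjunct, so for each $i$ either $\calM,\calI(\sigma)\not\models\Diamond\delta_i$ or $\calM,\calI(\tau)\not\models\gamma_{i,\tau}$ for some $\tau\ne\sigma\ast 1$. Now the needed child is supplied not by a false box but by \emph{seriality}: choose any child $v$ of $\calI(\sigma)$ and put $\calJ\ce\calI\sqcup\{(\sigma\ast 1,v)\}$. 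A false $\Diamond\delta_i$ falsifies $\delta_i$ at every child, hence at $v=\calJ(\sigma\ast 1)$, while the $\gamma_{i,\tau}$-witnesses sit at labels shared with $\calI$; so every disjunct of the SDNF of $A_p(\Gamma'\vlfill{\Diamond\varphi,[\varphi]_{\sigma\ast 1}})$ fails at $\calJ$, and the induction hypothesis plus restriction close the case.

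The substance is the saturated case, where I would replay the same three-step surgery --- make $\calI$ injective, replace each out-of-range diamond child by a bisimilar refuting submodel from the induction hypothesis, then retune the valuation of $p$ --- now over $\D$-models (resp.\ $\T$-models), calling the $\D$- (resp.\ $\T$-)clauses of Lemma~\ref{lem: Bisimilarity of Submodels} so that every intermediate model stays in the right frame class. The only genuinely new point in the final structural induction is that a diamond $\sigma:\Diamond\psi$ must now be killed along the reflexive edge as well: to falsify it at $\calI'(\sigma)$ one must refute $\psi$ at the self-loop child $\calI'(\sigma)$ itself, and this is exactly what $\T$-saturation ($\sigma:\psi\in\Gamma$) --- resp.\ the interplay of $\D$-saturation with $\K$-saturation at the in-range children --- delivers through the induction on $\psi$; in-range and replaced out-of-range children are handled verbatim as for $\K$.

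The hard part will be the injectivity step under reflexive self-loops. In the irreflexive $\K$-setting two labels of $\Gamma$ collide only as siblings, and duplication separates them; under reflexivity this breaks, since in a $\T$-model every world and in a $\D$-model every leaf is its own child, so a child label $\sigma\ast n$ can be interpreted at the world of its parent $\sigma$ --- a parent--child collision duplication cannot undo. For $\T$ I would clone the offending subtree: this is sound precisely because every $\T$-world is reflexive, whence by Lemma~\ref{lem: Bisimilarity of Submodels} cloning preserves $\T$-models, and the clone becomes a bona fide child carrying $\sigma\ast n$. For $\D$ cloning is forbidden --- it would make a reflexive leaf into a reflexive internal node, no longer a $\D$-model --- so I would instead unfold each reflexive leaf bearing a structural child, replacing it by an irreflexive node feeding a reflexive-leaf copy; this replacement preserves $\D$-models and bisimilarity up to $p$, after which $\sigma\ast n$ is re-aimed at the new leaf. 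That the frame-preservation clauses of Lemma~\ref{lem: Bisimilarity of Submodels} are split exactly along ``cloning keeps $\T$ but not $\D$'' strongly suggests this is the intended route, but checking that a breadth-first combination of sibling-duplication and these reflexive fixes terminates with a genuinely injective interpretation --- including when the collision reaches the root --- is where I expect the real care to be needed.
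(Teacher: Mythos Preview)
Your proposal is essentially correct and follows the paper's approach closely: the extra rows of Table~\ref{table:ApDT}, the seriality-based extension for the $\D$-row in BNUIP(iii)$'$, cloning for $\T$, and unfolding reflexive leaves for $\D$ are exactly what the paper does.

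Two small corrections. First, the $\T$-row of BNUIP\eqref{NUIP:2} is not a ``structural reduction'': since sequent satisfaction is existential over formulas, $\calM,\calI\models\Gamma'\vlfill{\Diamond\varphi,\varphi}$ does \emph{not} automatically yield $\calM,\calI\models\Gamma'\vlfill{\Diamond\varphi}$; you need reflexivity to turn $\calM,\calI(\sigma)\models\varphi$ into $\calM,\calI(\sigma)\models\Diamond\varphi$. Second, and more importantly, your handling of the reflexive edge in the $\D$ diamond case is off. You claim that ``the interplay of $\D$-saturation with $\K$-saturation at the in-range children'' lets you refute $\psi$ at the self-loop $\calI'(\sigma)$, but neither saturation clause puts $\sigma:\psi$ into $\Gamma$, so the induction hypothesis gives you nothing at $\calI'(\sigma)$ itself. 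The actual point is that this case \emph{does not arise}: $\D$-saturation of $\sigma:\Diamond\psi$ forces $\sigma$ to have a sequent-child $\sigma\ast n$, and after your injectification (which is precisely the leaf-unfolding you described) $\calI'(\sigma)$ acquires a genuine model-child $\calI'(\sigma\ast n)\ne\calI'(\sigma)$, so $\calI'(\sigma)$ is not a leaf of the $\D$-model and hence is \emph{irreflexive}. There is no self-loop to kill. Your construction already guarantees this; only your explanation needs fixing.
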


\begin{proof} 
We follow the structure of the proof of Theorem~\ref{thm:uipK} for~$\K$ and  only describe deviations from it. 
If $\Gamma$~is not $\D$-saturated~($\T$-saturated), then cases in Table~\ref{table:Ap} are appended with the bottom row~(top~row) of Table~\ref{table:ApDT}, which is applied only if $\Diamond \varphi$~is not $\D$-saturated~($\T$-saturated) in~$\Gamma$. For $\D$-/$\T$-saturated~$\Gamma$, we define~$A_p (\Gamma)$ by~\eqref{eq:ap_step_sat} as in the previous section.  BNUIP\eqref{NUIP:1} is clearly satisfied by either row in Table~\ref{table:ApDT}.

Let us first show BNUIP\eqref{NUIP:2} for~$\NT$. Although $\T$-models are reflexive, this does not affect the reasoning for either saturated sequents or  non-saturated box formulas. The only new case is applying the top row of Table~\ref{table:ApDT} to a non-$\T$-saturated $\sigma : \Diamond \varphi$ in~$\Gamma$.  Assume $\calM,\calI \models A_p(\Gamma'\vlfill{\Diamond \varphi, \varphi}_\sigma)$ for a $\T$-model~$\calM$.  By induction hypothesis, $\calM, \calI \models \Gamma'\vlfill{\Diamond \varphi, \varphi}_\sigma$. Since $\calM, \calI(\sigma) \models \varphi$ implies $\calM, \calI(\sigma) \models \Diamond \varphi$ by reflexivity, the desired
 $\calM, \calI \models\Gamma'\vlfill{\Diamond \varphi}_\sigma$ follows.\looseness=-1

  For  BNUIP(iii)$'$ for $\T$-saturated sequents, we have to modify the construction in step~\eqref{step:injectify} on p.~\pageref{step:injectify} of an injective multiworld interpretation~$\calJ$ into a new $\T$-model~$\calN$ out of the given~$\calI$ into~$\calM$ where $\calM, \calI \not\models A_p(\Gamma)$. 
In the case of~$\K$, the breadth-first order of injectifying the interpretations of sequent nodes could only yield one situation of~$\sigma \ast n$ being conflated with some already processed~$\tau$: namely, when $\tau = \sigma \ast m$ is a sibling. 
This  can still happen  for $\T$-models and is processed the same way. But, due to  reflexivity,  there is now another possibility: conflating with the parent $\tau = \sigma$. 
In this case, cloning is used\ifarxiv{} (see Fig.~\ref{fig:constructionT})\fi{}  instead~of or in addition~to duplication, which produces a bisimilar $\T$-model by Lemma~\ref{lem: Bisimilarity of Submodels}. 
Having intransitive trees that are reflexive rather than irreflexive  in  step~\eqref{step:stray_nodes} on p.~\pageref{step:stray_nodes}  does not affect the argument.  
The proof that  $\calM', \calI' \not\models \Gamma$ for the given $\T$-saturated~$\Gamma$ in step~\eqref{step:wiggle_p} on p.~\pageref{step:wiggle_p} requires an adjustment only  for the case of $\sigma : \Diamond \psi \in \Gamma$. It is additionally necessary to show that $\calM', \calI'(\sigma) \not\models \psi$ for the reflexive loop at~$\calI'(\sigma)$.
This  is resolved by observing that $\sigma : \psi \in \Gamma$ due to $\T$-saturation and, hence, $\psi$~must also be false in~$\calI'(\sigma)$ by induction hypothesis. 
\ifarxiv
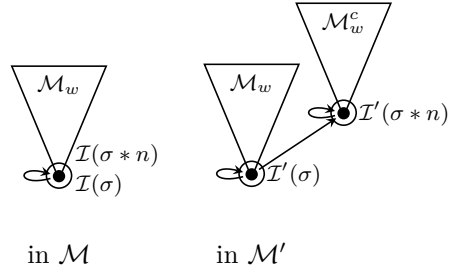
\begin{figure}[t]
\begin{center}
\ \ \ \ \ \ \ \ \ \ \ \ \ 
\begin{tikzpicture}[modal]
\node[point] (w) [label={[right, yshift=2mm, xshift=1mm]\footnotesize{$\calI(\sigma*n)$}},label={[right, yshift=-2mm, xshift=1mm]\footnotesize{$\calI(\sigma)$}}]{ };
\node[world] (w1) [above =of w, yshift=-9.5mm ] {};
\node[itria] (Mw) [above =of w, yshift=6mm]{\footnotesize{$\calM_w$}} ;

\path [->] (w) edge[loop left] (w);

\node[ ] (model) [below =of w] {in $\calM $};
\end{tikzpicture} \ \ \ 
\begin{tikzpicture}[modal]
\node[point] (w) [label=right:{\footnotesize{$\calI'(\sigma)$}}]{ };
\node[world] (w1) [above =of w, yshift=-9.5mm ] {};
\node[itria] (Mw) [above =of w, yshift=6mm]{\footnotesize{$\calM_w$}} ;
\node[point] (y) [above right =of w, xshift=6mm, yshift=2mm, label=right:{\footnotesize{$\calI'(\sigma*n)$}}] { };
\node[world] (y1) [above =of y, yshift=-9.5mm ] {};
\node[itria] (My) [above =of y, yshift=6mm]{\footnotesize{$\calM^c_w$}} ;

\path[->] (w) edge  (y);
\path [->] (w) edge[loop left] (w);
\path [->] (y) edge[loop left] (y);

\node[ ] (model) [below =of w] {in $\calM' $};
\end{tikzpicture}

\caption{Additional transformation for constructing $\T$-model~$\calM'$ for reflexive nodes: cloning.}
\label{fig:constructionT}
\end{center}
\end{figure}
\fi

Finally, for BNUIP(iii)$'$ for non-$\T$-saturated sequents, we gain a new case  when the top row of Table~\ref{table:ApDT} is used, but it is clear that $\calM', \calI' \not \models\Gamma'\vlfill{\Diamond \varphi, \varphi}$ obtained by induction hypothesis directly implies $\calM', \calI' \not \models \Gamma'\vlfill{\Diamond \varphi}$. This completes the proof of BNUIP for~$\NT$.

For BNUIP\eqref{NUIP:2} for~$\ND$, the only new case is applying the bottom row of Table~\ref{table:ApDT} to a  non-$\D$-saturated $\sigma : \Diamond \varphi$ in $\Gamma = \Gamma'\vlfill{\Diamond \varphi}_{\sigma}$. Let 
\ifarxiv\[\else$\fi
\calM, \calI \models \bigspdisj_{i=1}^m \left(\sigma : \Diamond \delta_i \spconj \bigspconj_{\tau \neq \sigma \ast 1} \tau : \gamma_{i,\tau}\right)
\ifarxiv\]\else$\fi{}
for some multiworld interpretation~$\calI$ into a $\D$-model $\calM=(W,R,V)$ where 
\ifarxiv\[\else$\fi
A_p(\Gamma'\vlfill{\Diamond \varphi, [\varphi]_{\sigma \ast 1}})\equiv \bigspdisj_{i=1}^m \left( \sigma \ast 1 : \delta_i \spconj \bigspconj_{\tau \neq \sigma \ast 1}\tau : \gamma_{i,\tau} \right).
\ifarxiv\]\else$\fi{} 
Then, for some~$i$, we have  $\calM, \calI(\tau) \models \gamma_{i, \tau}$ for all $\tau \in \calL(\Gamma)$ and $\calM,\calI(\sigma) \models \Diamond \delta_i$. Thus, $\calM, v \models \delta_i$ for some~$v$ such that~$\calI(\sigma) R v$. 
 Since $\Diamond \varphi$~is not $\D$-saturated in $\Gamma'\{\Diamond \varphi \}_\sigma$, it follows that $\calI_v \ce \calI \sqcup \{ (\sigma \ast 1, v)\}$ is a multiworld interpretation of $\Gamma'\vlfill{\Diamond \varphi, [\varphi]_{\sigma \ast 1}}$ into~$\calM$ such that $\calM, \calI_v \models A_p(\Gamma'\vlfill{\Diamond \varphi, [\varphi]_{\sigma \ast 1}})$. By  induction hypothesis, $\calM, \calI_v \models \Gamma'\vlfill{\Diamond \varphi, [\varphi]_{\sigma \ast 1}}$, from which it easily follows that $\calM, \calI \models \Gamma'\vlfill{\Diamond \varphi}_\sigma$.
 
 \ifarxiv
\begin{figure}[t]
\begin{center}
\ \ \ \ \ \ \ \ \ \ \ \ \ 
\begin{tikzpicture}[modal]
\node[point] (w) [label={[right, yshift=2mm, xshift=1mm]\footnotesize{$\calI(\sigma*n)$}},label={[right, yshift=6mm, xshift=1mm]\footnotesize{$\calI(\sigma*n*k)$}},label={[right, yshift=-2mm, xshift=1mm]\footnotesize{$\calI(\sigma*m)$}},label={[below, yshift=-2mm, xshift=1mm]\footnotesize{$\calI(\sigma)$}}]{ };
\node[world] (w1) [above =of w, yshift=-9.5mm ] {};

\path [->] (w) edge[loop left] (w);

\node[ ] (model) [below =of w] {in $\calM $};
\end{tikzpicture} \ \ \ 
\begin{tikzpicture}[modal]
\node[point] (w) [label=below:{\footnotesize{$\calI_i(\sigma)$}}]{ };
\node[world] (w1) [above =of w, yshift=-9.5mm ] {};
\node[point] (y) [above right =of w, xshift=6mm, yshift=2mm, label={[right,yshift=-2mm,xshift=1mm]\footnotesize{$\calI_i(\sigma*n)$}},label={[right,yshift=2mm,xshift=1mm]\footnotesize{$\calI_i(\sigma*n*k)$}},label={[below,yshift=-2mm]\footnotesize{$w_{\sigma*n}$}}] { };
\node[world] (y1) [above =of y, yshift=-9.5mm ] {};
\node[point] (z) [above left =of w, xshift=-6mm, yshift=2mm, label={[right,yshift=2mm,xshift=1mm]\footnotesize{$\calI_i(\sigma*m)$}},label={[below,yshift=-2mm]\footnotesize{$w_{\sigma*m}$}}] { };
\node[world] (z1) [above =of z, yshift=-9.5mm ] {};

\path[->] (w) edge  (y);
\path [->] (y) edge[loop left] (y);
\path[->] (w) edge  (z);
\path [->] (z) edge[loop left] (z);

\node[ ] (model) [below =of w] {in $\calM_i $};
\end{tikzpicture}

\caption{Additional transformation for constructing $\D$-model $\calM_i$ for reflexive leaves.}
\label{fig:constructionT}
\end{center}
\end{figure}
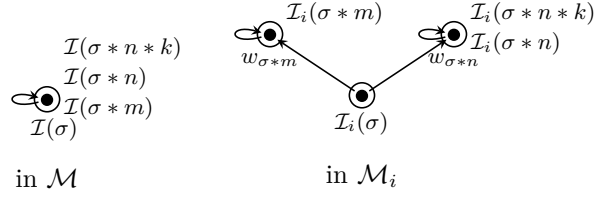
\fi

 For BNUIP(iii)$'$ for $\D$-saturated sequent, we must change step~\eqref{step:injectify}   to preserve $\D$-models.  By Lemma~\ref{lem: Bisimilarity of Submodels}, duplication used for~$\K$ preserves $\D$-models when applied to  non-leaves of  $\D$-models  because they are irreflexive. Now consider the case when $w=\calI(\sigma)$ is a leaf of a model $\calM=(W,R,V)$, but node~$\sigma$ has children in the sequent tree, which $\calI$~can only map to~$w$. To ensure injectivity, we construct an intermediate model~$\calM_i$ separating~$\sigma$ from its children as follows\ifarxiv{} (see Fig.~\ref{fig:constructionT})\fi: 
 \ifarxiv
 \begin{align*}
 W_i &\ce W \sqcup \{w_{\sigma \ast n } \mid  \sigma \ast n \in \calL(\Gamma)\}
 \\
 R_i &\ce R \setminus \{(w,w)\} \sqcup \{(w, w_{\sigma \ast n}), (  w_{\sigma \ast n}, w_{\sigma \ast n})  \mid  \sigma \ast n \in \calL(\Gamma)\}
 \\
 V_i(q) &\ce 
 \begin{cases}
 V(q) \sqcup \{ w_{\sigma \ast n} \mid \sigma \ast n \in \calL(\Gamma)\} & \text{if $w \in V(q)$},
 \\
  V(q) & \text{if $w \notin V(q)$}.
 \end{cases}
 \end{align*}
 \else
 $W_i \ce W \sqcup \{w_{\sigma \ast n } \mid  \sigma \ast n \in \calL(\Gamma)\}$, $R_i \ce R \setminus \{(w,w)\} \sqcup \{(w, w_{\sigma \ast n}), (  w_{\sigma \ast n}, w_{\sigma \ast n})  \mid  \sigma \ast n \in \calL(\Gamma)\}$, and $V_i(q) \ce V(q) \sqcup \{ w_{\sigma \ast n} \mid \sigma \ast n \in \calL(\Gamma)\}$ if $w \in V(q)$ or $V_i(q) \ce V(q)$ if $w \notin V(q)$. 
 \fi 
 Accordingly, $\calI_i(\tau) \ce w_{\sigma \ast n}$ if $\tau$~is a descendant 
of this~$\sigma \ast n$ (or $\sigma \ast n$~itself) or $\calI_i(\tau) \ce \calI(\tau)$ if $\tau$~is not a descendant of any of~$\sigma \ast n$. By reasoning similar to Lemma~\ref{lem: Bisimilarity of Submodels}, it is easy to show that $\calM_i$~is a $\D$-model and $(\calM_i, \calI_i) \sim_p (\calM, \calI)$ with all~$w_{\sigma \ast n}$ being bisimilar to~$w$.  The replacements of step~\eqref{step:stray_nodes}  preserve  $\D$-models by Lemma~\ref{lem: Bisimilarity of Submodels}. Step~\eqref{step:wiggle_p} requires no changes either.
The only subtlety in the proof that  $\calM', \calI' \not\models \Gamma$ for a $\D$-saturated~$\Gamma$ is for  $\sigma : \Diamond \psi \in \Gamma$.
The argument for $\calM', \calI'(\sigma) \not\models \Diamond \psi$ does work  the same way as in~$\K$ for the following reason. Since this~$\Diamond \psi$ is $\D$-saturated,  node~$\sigma$ must have a child in the sequent tree. Injectivity of the constructed~$\calI'$ means that $\calI'(\sigma)$~is not a leaf in the $\D$-model~$\calM'$ and, hence, not reflexive. 

The only remaining new case is the application of the bottom row of Table~\ref{table:ApDT} for a non-$\D$-saturated $\sigma : \Diamond \varphi$, i.e.,~when node~$\sigma$ is a leaf of the sequent tree, in BNUIP(iii)$'$. Let 
\ifarxiv\[\else$\fi{} 
\calM, \calI \not \models \bigspdisj_{i=1}^m \left(\sigma : \Diamond \delta_i \spconj \bigspconj_{\tau \neq \sigma \ast 1} \tau : \gamma_{i,\tau}\right).
\ifarxiv\]\else$\fi{} 
By seriality of~$\calM$, there exists a world $v \in W$ such that~$\calI(\sigma) R v$. 
Then $\calJ \ce \calI' \sqcup \{(\sigma \ast 1, v)\}$ is a multiworld interpretation of $\Gamma'\vlfill{\Diamond \varphi, [\varphi]_{\sigma \ast 1}}$ into~$\calM$ such that 
\ifarxiv\[\else$\fi
\calM, \calJ \not \models \bigspdisj_{i=1}^m \left( \sigma \ast 1 : \delta_i \spconj \bigspconj_{\tau \neq \sigma \ast 1}\tau : \gamma_{i,\tau} \right).
\ifarxiv\]\else$\fi 
By induction hypothesis, there is a multiworld interpretation~$\calJ'$ of $\Gamma'\vlfill{\Diamond \varphi, [\varphi]_{\sigma \ast 1}}$ into some $\D$-model~$\calM'$ such that $(\calM', \calJ')\sim_p(\calM, \calJ)$ and   $\calM', \calJ' \not \models \Gamma'\vlfill{\Diamond \varphi, [\varphi]_{\sigma \ast 1}}$. Similar to the case of  $\Box \varphi$ for~$\K$, restricting this~$\calJ'$ to the labels of~$\Gamma$ yields a multiworld interpretation bisimilar to~$\calI$ and refuting $\Gamma = \Gamma'\vlfill{\Diamond \varphi}_\sigma$.
\end{proof}

\ifarxiv
\section{Uniform interpolation for~{\sf S5}}\label{sec:uipS5}

The uniform interpolation property easily follows for logics satisfying local tabularity and the Craig interpolation property~\cite{ChagZak97BookModalLogic}. A logic is \emph{locally tabular} if there are only finitely many pairwise nonequivalent formulas for each finite set of atomic propositions. Examples of locally tabular logics are classical propositional logic and~$\Sfive$.
In this case, the left interpolant~$\forall p \varphi$ can be taken to be the disjunction of all formulas~$\psi$ without~$p$ implying~$\varphi$ (accordingly, the right interpolant~$\exists p \varphi$ is the conjunction of all formulas~$\psi$ without~$p$ implied by~$\varphi$).

Although proving uniform interpolation for~$\Sfive$ is therefore simple, we want to use our method applied to a hypersequent calculus for~$\Sfive$, which  provides a direct construction for the interpolants. Important for our method are the form of Kripke models and the structure of the proof system. For~$\K$, $\T$,~and~$\D$ we used intransitive treelike models and nested sequents mimicking this treelike structure, which fit well with the recursive step of our method. $\Sfive$~is complete with respect to single finite clusters, i.e.,~finite models with the total accessibility relation. In the rest of this section we only work with these kinds of models, i.e.,~it is assumed that $R =  W \times W$.

Cut-free hypersequent calculi for~$\Sfive$ were first (independently) introduced in~\cite{Avr96,Min71PSIM,Pot83JSL}. A hypersequent has the form $\calG= \Gamma_1 \mid \cdots \mid \Gamma_n$ where $\Gamma_i$'s~are multisets of formulas in negation normal form, 
and its corresponding formula $
\iota(\calG) \ce \Box \big( \bigvee \Gamma_1 \big) \vee \dots \vee \Box \big( \bigvee \Gamma_n \big)$.
We use letters~$\calG$~and~$\calH$ to denote hypersequents.  Among the many existing hypersequent calculi, we use the one closest to tableaus. The
hypersequent rules for~$\Sfive$ used here are presented in Fig.~\ref{fig:rulesS5}. These modal rules  can be found (as derived rules) in~\cite{Fit07MLH}. They are the sequent-style equivalent of what Fitting called there the ``Simple $\Sfive$~Tableau System,'' i.e.,~prefixed tableaus with prefixes being integers rather than sequences of integers, and are used to reduce hypersequent completeness to tableau completeness. The same rules can be obtained by Kleene'ing the $\Sfive$~hypersequent calculus from~\cite{Restall:2007} as explained in~\cite[Sect.~5]{KuzLel15LJIGPL} (strictly speaking, rules in~\cite{KuzLel15LJIGPL} are grafted hypersequent rules for~{\sf K5}, but the crown rules for these grafted hypersequents are exactly the hypersequent rules for~$\Sfive$; another minor difference is that we are using one-sides sequents and negation normal form). Being Kleene'd, these rules form a terminating calculus for~$\Sfive$ under the proviso that ${\sf k}$~and~$\rulet$~be applied only if the principal~$\Diamond \varphi$ in their conclusion is saturated w.r.t.~the component of the active formula~$\varphi$ and that all the other rules are  applied  only when their principal formula is not saturated in the conclusion, as defined presently. 

\begin{figure}[ht]
\centering
  \fbox{\parbox{.7\textwidth}{
    \begin{center}
      $\vlinf{\idTnd}{}{\calG \mid \Gamma, {p,\overline p} }{}$
      \qquad
      $\vlinf{\idTop}{}{\calG \mid \Gamma, {\top} }{}$
      \end{center}  
      \begin{center}
      \quad
      $\vlinf{\lor}{}{\calG \mid \Gamma, \varphi \vee \psi }{\calG \mid \Gamma, \varphi \vee \psi, \varphi, \psi }$
      \qquad
      $\vliinf{\land}{}{\calG \mid \Gamma, \varphi \wedge \psi }{\calG \mid \Gamma, \varphi \wedge \psi, \varphi }{\calG \mid \Gamma, \varphi \wedge \psi, \psi }$
      \end{center}
      \begin{center}
      $\vlinf{\Box}{}{\calG \mid \Gamma, \Box \varphi }{\calG \mid \Gamma, \Box \varphi \mid \varphi}$
      \qquad
      $\vlinf{\mathsf{k}}{}{\calG \mid \Gamma, \Diamond \varphi \mid \Delta }{\calG \mid \Gamma, \Diamond \varphi \mid \Delta, \varphi }$
      \qquad
      $\vlinf{\rulet}{}{\calG \mid \Gamma, \Diamond \varphi }{\calG \mid \Gamma, \Diamond \varphi, \varphi }$
    \end{center}
  }}
\caption{Terminating hypersequent rules for~$\Sfive$}
\label{fig:rulesS5}
\end{figure}

\begin{definition}[Saturation in hypersequents]
A formula~$\theta$ is \emph{saturated} in  a hypersequent~\mbox{$\calH \mid \Gamma, \theta$} if it satisfies the following conditions according to the form of~$\theta$: 
\begin{itemize}
\item
$\theta$~is an atomic formula; 
\item
if $\theta = \varphi \lor \psi$, then both~$\varphi$~and~$\psi$ are in\/~$\Gamma$;
\item
if $\theta = \varphi \land \psi$, then at least one of~$\varphi$~or~$\psi$ is in\/~$\Gamma$;
 \item
if $\theta = \Box \varphi$, then $\varphi$~is either in~$\calH$ or in\/~$\Gamma$;
\end{itemize}
The formula $\theta = \Diamond \varphi$ is saturated with respect to a sequent component of~$\calH$ if $\varphi$~is in that sequent component.
A hypersequent~$\calG$ is \emph{saturated} if all diamond formulas in it are saturated w.r.t.~each sequent component of~$\calG$, all other formulas  are saturated, and, additionally,  $\calG$~is neither of the form $\calH \mid \Gamma, \top $ nor of the form $\calH \mid \Gamma, p,\overline{p} $ for any atomic proposition $p \in {\sf Prop}$.
\end{definition}

\emph{Labels} for hypersequents are natural numbers. For a hypersequent $\calG = \Gamma_1 \mid \cdots \mid \Gamma_n$ we use the set of labels $\calL(\calG) = \{1, \dots, n\}$. We define  multiworld interpretations and multiformulas for hypersequents by analogy with nested sequents, but now using natural numbers as labels.\footnote{Strictly speaking, these labels impose an ordering on the sequent components turning it into a sequence of sequents rather than a multiset of sequents. Since permuting sequent components is both trivial and tedious, we continue with the multiset representation, stating labels explicitly if necessary.}

\begin{definition}\label{def: Multiworld Interpretation}
A \emph{cluster-like multiworld interpretation of a hypersequent~$\calG = \Gamma_1 \mid \cdots \mid \Gamma_n$ into an $\Sfive$-model $\calM = (W,W\times W,V)$}   is a function $\calI : \{1,\dots,n\} \to W$.
\end{definition}

Within this section, by ``multiworld interpretation'' we always mean ``cluster-like multiworld interpretation.'' Note that there is no restriction on the image of~$\calI$, because we work with $\Sfive$-models where all worlds are related to each other. For a fixed multiworld interpretation~$\calI$, we usually write $w_i$ instead of~$\calI(i)$ and represent the whole~$\calI$ by $w_1, \dots, w_n$. A multiworld interpretation $w_1, \dots, w_n$ is injective if the worlds~$w_i$ are pairwise disjoint. The rest of the definitions and results for hypersequents are completely analogous to the nested sequent setting (modulo the change of labels into natural numbers). The analog of Def.~\ref{def:truth_seq} is\looseness=-1

\begin{definition}
Let $\calM$ be a model with worlds $w_1, \dots, w_n$ and let $\calG = \Gamma_1 \mid \dots \mid \Gamma_n$ be a hypersequent. We say that $\calM, w_1, \dots, w_n \models \calG$  if{f} 
\[
\calM, w_i \models \varphi \text{ for some } i  \text{ and } \varphi \in \Gamma_i.
\]
A hypersequent~$\calG$ is \emph{valid in a model~$\calM$}, denoted $\calM \models \calG$, when $\calM, w_1, \dots, w_n \models \calG$ for all multiworld interpretations $w_1, \dots, w_n$ of~$\calG$ into~$\calM$. 
\end{definition}

 We have completeness for the validity of hypersequents, i.e.,~$
\calM \models \calG \text{ if{f} } \calM \models \iota(\calG),
$ 
for all hypersequents~$\calG$ and $\Sfive$-models~$\calM$.

A multiformula is similarly defined as in Def.~\ref{def:multiformula}, where we now use natural numbers as labels instead of sequences of natural numbers, i.e.,~use~$n$ instead of~$\sigma$. All definitions and lemmas about multiformulas based on nested sequents also apply to the hypersequent setting (Def.~\ref{def:interpretationmultiformula} until Lemma~\ref{lem:sdnf}).

Uniform interpolation for hypersequents is defined in the same way as for nested sequents. All definitions and lemmas between Def.~\ref{def:uip nested sequents} and Cor.~\ref{lem:bnuip equiv} are naturally adapted to the hypersequent setting. Instead of NUIP and BNUIP we now speak of the \emph{hypersequent uniform interpolation property~(HUIP)} and the \emph{bisimulation hypersequent uniform interpolation property~(BHUIP)} respectively.

So far, everything goes analogously to the nested sequent case. Even defining the uniform interpolants seems to work analogously. However, when performing the inductive proof (analogous to Theorem~\ref{thm:uipK}) ensuring that those are actual uniform interpolants, one runs into a problem in the recursive case for saturated sequents. Roughly speaking, the problem is caused by the fact that in $\Sfive$-models, the truth of a formula in one world generally depends on all the worlds, including its immediate ``parent.'' Contrast this with treelike models where the truth of a formula in a world is fully determined by its descendants which are disjoint from its parent, as well as from its siblings and their descendants. The reason this feature of cluster-like models is problematic is that changing the valuation of~$p$ in a later recursive call may conflict with valuations of~$p$ necessitated by the preceding one.

To circumvent this problem, we use a special property of~$\Sfive$: every modal formula is $\Sfive$-equivalent to a formula of modal depth~1 (see~\cite[Sect.~5.13]{Fit83}, where Fitting proved this in order to establish Craig interpolation for~$\Sfive$). This means that we can restrict ourselves to formulas where each literal~$q$~or~$\overline{q}$ is under the scope of at most one modality. 
Therefore, after stripping this one modality away,  the resulting formulas are purely propositional, meaning that no further recursive calls are needed and, at the same time, that their truth values depends on the valuation in only one world instead of all worlds in the model. This resolves the aforementioned conflict between recursive calls.\looseness=-1

\begin{table}[t]
  \begin{center}
  \renewcommand*{\arraystretch}{1.4}
    \begin{tabular}{| l l |} \hline
    $\calG$ matches \quad \quad \quad 
    & 
    $A_p(\calG)$ equals
    \\
    \hline
    $\calG' \mid \{\Gamma, \top \}_k $ 
    &
    $k : \top$
    \\
     $\calG' \mid \{\Gamma, p,\overline p \}_k $ 
    &
    $k : \top$
    \\
    $\calG' \mid \Gamma, \varphi \lor \psi $ 
    &
    $A_p(\calG' \mid \Gamma, \varphi, \psi, \varphi \lor \psi )$
    \\
    $\calG' \mid \Gamma, \varphi \land \psi $ 
    &
    $A_p(\calG' \mid \Gamma, \varphi, \varphi \land \psi ) \spconj A_p(\calG' \mid \Gamma, \psi, \varphi \land \psi )$
    \\
    $\calG' \mid \{ \Gamma, \Box \varphi  \}_k $
    &
    
$\bigspconj\nolimits_{i=1}^m \left(k: \Box \delta_i \spdisj \bigspdisj\nolimits_{j \leq k} (j : \gamma_{i,j})    \right)$ where the SCNF of 
    \\
    &  $A_p(\calG' \mid \{ \Gamma, \Box \varphi  \}_k \mid \varphi )$ is $\bigspconj\nolimits_{i=1}^m \left(k+1: \delta_i \spdisj \bigspdisj\nolimits_{j \leq k} (j : \gamma_{i,j}) \right)$
    \\
    $\calG' \mid  \Gamma, \Diamond \varphi  $
    &
    $A_p(\calG' \mid \Gamma, \Diamond \varphi, \varphi )$
    \\
    $\calG' \mid  \Gamma, \Diamond \varphi  \mid \Delta $
    &
    $A_p(\calG' \mid \Gamma, \Diamond \varphi \mid \Delta, \varphi )$\\ \hline
    \end{tabular}
    \end{center}
\caption{Recursive construction of~$A_p(\Gamma)$ for $\Sfive$-hypersequents for~$\calG$ that are not saturated.}
\label{table:ApS5}
\end{table}

So from now on, we only consider hypersequents $\calG = \Gamma_1 \mid \cdots \mid \Gamma_n$, where each $\Gamma_i$~contains only formulas of modal depth~$\leq1$.  With that in mind, we define multiformula interpolants~$A_p(\calG)$ for hypersequents~$\calG$. If $\calG$~is not saturated, $A_p(\calG)$~is defined in Table~\ref{table:ApS5} following the finite proof-search tree of the hypersequent. In particular, $\varphi \lor \psi$, $\varphi \land \psi$,~and~$\Box \varphi$  must be non-saturated; in the rule for~$\Box \varphi$, w.l.o.g.~we assume~$k$ to be the largest label;  the penultimate row  is applied only if $\Diamond \varphi$~is not saturated w.r.t.~its own component; and the last row is only applied if $\Diamond \varphi$~is not saturated w.r.t.~the component containing the displayed~$\Delta$.

For saturated~$\calG$, we define 
\begin{equation}
\label{eq:iiib_S5}
A_p(\calG) \qquad\ce\qquad \bigspdisj_{\substack{ {k : \ell} \in \calG \\\ell \in {\sf Lit} \setminus \{p,\overline{p}\} }} k : \ell \quad\spdisj\quad
1 : \Diamond \forall p\left(\bigvee\nolimits_{ \Diamond \psi \in \calG} \psi\right)
\end{equation}
where $(\forall p) \xi$~represents the uniform interpolant of a propositional formula~$\xi$ w.r.t.~classical propositional logic. Any known algorithm for its computation can be used. The construction of~$A_p(\calG)$ is well-defined because the recursion in Table~\ref{table:ApS5} terminates by the termination of the rules.

\begin{theorem}
Logic\/~$\Sfive$ has the BHUIP. 
\end{theorem}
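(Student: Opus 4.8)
The plan is to mirror the proof of Theorem~\ref{thm:uipK} for~$\K$, running an induction on the termination ordering~$\ll$ of the rules of Fig.~\ref{fig:rulesS5}. The essential structural difference from~$\K$ is that here the saturated case is a genuine base case: because the modal-depth-$1$ normal form makes every diamond body~$\psi$ and every box body propositional, the $\forall p$ occurring in~\eqref{eq:iiib_S5} is the \emph{classical propositional} uniform interpolant used as a black box, so no recursive call to~$A_p$ on a smaller hypersequent is generated (contrast the~$\K$ construction, whose saturated case recurses with a smaller~$d$). Condition~(i) of the BHUIP is immediate, since $\form$ and propositional~$\forall p$ never introduce~$p$ and labels are inherited. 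The two substantial parts are truth preservation (condition~(ii)) and the bisimulation-refutation step (condition~(iii)$'$), and throughout one must keep in mind that $\Sfive$-models are single clusters, so $R=W\times W$ and every modality quantifies over all worlds.

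For condition~(ii) I would proceed by cases following Table~\ref{table:ApS5}. The rows for~$\lor$, $\land$, the reflexive~$\Diamond$, and the transfer~$\Diamond$ go through exactly as for~$\K$. For the~$\Box$ row, given $\calM,\calI\models A_p(\calG'\mid\{\Gamma,\Box\varphi\}_k)$, I would choose an arbitrary world~$v$, extend~$\calI$ by $k+1\mapsto v$, and read off from the SCNF in Table~\ref{table:ApS5} that $\calM,\calI_v\models A_p(\calG'\mid\{\Gamma,\Box\varphi\}_k\mid\varphi)$; the induction hypothesis satisfies the enlarged hypersequent, and either $\Box\varphi$ already holds at~$\calI(k)$ or some world falsifies~$\varphi$, forcing an original component true. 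For saturated~$\calG$ with $A_p(\calG)$ as in~\eqref{eq:iiib_S5}, a true literal disjunct satisfies its component directly, while truth of $1:\Diamond\forall p(\bigvee\psi)$ yields a world~$v$ satisfying $\forall p(\bigvee\psi)$, hence~$\bigvee\psi$ by validity of $\forall p\chi\imp\chi$, hence some~$\psi$ with $\Diamond\psi\in\calG$ true at~$v$; totality of~$R$ then makes this~$\Diamond\psi$ true at the world of its component.

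The core is condition~(iii)$'$ for saturated~$\calG$, assuming $\calM,\calI\not\models A_p(\calG)$. I would first make~$\calI$ injective, replacing the tree duplication of step~\eqref{step:injectify} by duplication of worlds inside the cluster: whenever $\calI(k)=\calI(j)$, adjoin a valuation-identical copy related to all worlds, which stays an $\Sfive$-model and is bisimilar up to~$p$ by the cluster analogue of Lemma~\ref{lem: Bisimilarity of Submodels}. In place of steps~\eqref{step:stray_nodes} and~\eqref{step:wiggle_p}, I would then fix the truth value of~$p$ world-by-world: at each out-of-range world~$v$ the hypothesis $\calM,\calI\not\models 1:\Diamond\forall p(\bigvee\psi)$ together with totality of~$R$ gives that $\forall p(\bigvee\psi)$ is false at~$v$, so some $p$-value falsifies the propositional~$\bigvee\psi$ there, which I install; at in-range worlds the $p$-literals of the component fix the value (injectivity prevents clashes), and otherwise it is free. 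Finally I would prove $\calM',\calI'\not\models\calG$ by structural induction on the formulas of each component as for~$\K$: literals by construction and bisimilarity; $\Box\psi$ because saturation places its propositional body in some component, whose copy is falsified; and $\Diamond\psi$ because its body is false at every world. The decisive point enabling the diamond case under total accessibility is that saturation forces \emph{every} diamond body into \emph{every} component, so falsifying each component already falsifies~$\bigvee\psi$ at every in-range world, irrespective of the~$p$-value dictated there; with modal depth~$1$ (each~$\psi$ propositional, its truth depending on a single world), this eliminates any conflict between the $p$-choices of different diamonds. The non-saturated rows of Table~\ref{table:ApS5} then reduce to the induction hypothesis as in~$\K$.

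The main obstacle is precisely this saturated case: since $R=W\times W$, one diamond~$\Diamond\psi$ sees all worlds, so falsifying it demands~$\psi$ false everywhere, and a $p$-adjustment made for one diamond could a priori spoil another --- a difficulty absent from the disjoint subtrees of the~$\K$ proof. The resolution is entirely structural, resting on the two facts highlighted above (the modal-depth-$1$ normal form, so that stripped diamond bodies need no further recursion and depend on a single world; and the saturation condition replicating each diamond body in every component), and the care lies in verifying that together these make all per-world $p$-choices simultaneously consistent. Everything else transfers from Theorem~\ref{thm:uipK} essentially verbatim.
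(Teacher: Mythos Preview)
Your proposal is correct and follows essentially the same approach as the paper's own proof: both mirror the $\K$ argument, handle the non-saturated rows of Table~\ref{table:ApS5} by direct analogy (with the reflexive~$\Diamond$ row treated as for~$\T$), and for the saturated case perform the same three-step construction---injectify by duplicating worlds in the cluster, adjust~$p$ at out-of-range worlds using the propositional~$\forall p$ to falsify~$\bigvee\psi$, then adjust~$p$ at in-range worlds according to the $p$-literals---followed by the same structural induction exploiting that saturation places every diamond body in every component. Your explicit remark that the saturated case is a genuine base case (no recursive call to~$A_p$, only the propositional~$\forall p$) and your diagnosis of why the per-world $p$-choices cannot conflict are spelled out more fully than in the paper, but the argument is the same.
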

\label{thm:uipS5}
\begin{proof}
We follow the proof of Theorem~\ref{thm:uipK} showing the three condition for BHUIP. It is easily seen that $A_p(\calG)$~does not contain~$p$ and that its labels are from~$\calG$.

For BHUIP\eqref{NUIP:2}, let $w_1, \dots, w_n$ be a multiworld interpretation of a hypersequent~$\calG$, and of the multiformula~$A_p(\calG)$, into an $\Sfive$-model $\calM = (W,W\!\times\!W,V)$.  We use induction to show \looseness=-1
\[
\calM, w_1, \dots, w_n \models A_p(\calG) \qquad\text{ implies }\qquad \calM, w_1, \dots, w_n \models \calG.
\] 
First we treat some cases from Table~\ref{table:ApS5} and then we consider the case where $\calG$~is saturated.
\begin{itemize}
\item 
Both $\calG \mid \{\Gamma, p, \overline{p}\}_k  $ and $\calG \mid \{\Gamma, \top \}_k $ hold in all models, under all interpretations. 
\item Boolean cases work the same way as for nested sequents.
\item 
The case of~$\Box \varphi$ is also very similar. The only difference from the nested case for~$\K$ is that instead of considering only children of the node where $\Box \varphi$~needs to be true in a treelike model, here we have to consider all worlds in the model. Otherwise, the reasoning is the same. \looseness=-1
\item The penultimate row of Table~\ref{table:ApS5} can be processed the same way as the row for~$\T$ in Table~\ref{table:ApDT} because $\Sfive$-models are similarly reflexive. 
\item The last row of Table~\ref{table:ApS5} works the same way as the last row of Table~\ref{table:Ap} because the interpretation of the label with~$\varphi$ is in both cases accessible from the interpretation of the label with~$\Diamond \varphi$.

\item 
Finally, if $\calG$~is saturated, let 
$
\calM, w_1, \dots, w_n \models A_p(\calG)$ for~$A_p(\calG)$ from~\eqref{eq:iiib_S5}.  As for nested sequents, the case of $\calM, w_1, \dots, w_n \models k : \ell$ with $k : \ell \in \calG$ is straightforward.  It remains to consider the case when, $\calM, w_1\models \Diamond \forall p\left(\bigvee_{\Diamond \psi \in \calG} \psi\right).$ This means that there is a $v \in W$ such that $\calM, v \models \forall p\left(\bigvee_{\Diamond \psi \in \calG} \psi\right)$. Since $\forall p \xi \to \xi$ is a  propositional tautology for any~$\xi$ by Def.~\ref{def:UniformInterpolation}, we have $\calM, v \models \psi$ for some $\Diamond \psi \in \calG$.
Therefore $\calM, w_k \models \Diamond \psi$ for all~$k$, including the label of the component containing~$\Diamond \psi$. Thus, $\calM, w_1, \dots, w_n \models \calG$.
\end{itemize}

For BHUIP(iii)$'$, let $w_1, \dots, w_n$ a multiworld interpretation of~$\calG$ into an $\Sfive$-model $\calM = (W,W\!\times\!W,V)$ such that 
$
\calM, w_1, \dots, w_n \not \models A_p(\calG).
$
We need to find worlds $w'_1, \dots, w'_n$ from another $\Sfive$-model $\calM' = (W',W'\!\times\! W',V')$ such that $(\calM, w_1, \dots, w_n) \sim_p (\calM', w'_1, \dots, w'_n)$ and
$
\calM', w'_1, \dots, w'_n\not \models \calG.
$ We define~$\calM'$ and $w'_1, \dots. w'_n$ and prove BHUIP(iii)$'$ by simultaneous recursion. We first consider the case where $\calG$~is saturated, then we show several cases following Table~\ref{table:ApS5}.
\begin{itemize}
\item For $\calG$~being saturated, we assume $\calM, w_1, \dots, w_n \not \models A_p(\calG)$   for~$A_p(\calG)$ from~\eqref{eq:iiib_S5}.
We have three steps in the construction of model~$\calM'$, which can be compared to the steps of the construction in Theorem~\ref{thm:uipK}.\looseness=-1
\begin{enumerate}[(1)]
\item 
Whenever $w_i=w_j$, duplicate this world, until all~$w_i$'s are distinct. Clearly, this yields a $p$-bisimilar model $\calN=(W',W'\!\times\! W',V_\calN)$ with $W'\supseteq W$ and an injective multiworld interpretation $w'_1,\dots,w'_n$ of~$\calG$ into~$\calN$ such that $\calN, w'_1,\dots,w'_n \not\models A_p(\calG)$. \looseness=-1
\item 
\label{constr:S5_2} 
Now we construct a model~$\calN' $ from~$\calN$ by changing valuations of~$p$ in all worlds $v\notin\{w'_1,\dots,w'_n\}$. It follows from the last disjunct in~\eqref{eq:iiib_S5} that $\calN, v \not\models \forall p\left(\bigvee\nolimits_{ \Diamond \psi \in \calG} \psi\right)$ for all such~$v$. It is a straightforward consequence of Def.~\ref{def:UniformInterpolation} for the purely propositional formula $\bigvee\nolimits_{ \Diamond \psi \in \calG} \psi$ that it is possible to modify the valuation~$V_\calN(p)$ in such a way that for the resulting $\calN'\ce(W',W'\!\times\! W',V'_\calN)$ we have $\calN', v \not\models \bigvee\nolimits_{ \Diamond \psi \in \calG} \psi$ for all worlds $v\notin\{w'_1,\dots,w'_n\}$. 
Changing only truth values of~$p$ results in a $p$-bisimilar model.
\item \label{constr:S5_3} 
Finally, we define model $\calM'\ce(W',W'\!\times\! W',V'_p)$ to be the same as model~$\calN'$ except for valuations of~$p$ as follows:
$V'_p(p) \ce V'_\calN(p)
\sqcup  \{w'_k \mid k : \overline{p} \in \calG\} \setminus \{w'_k \mid k: p \in \calG\}$.
Note that the resulting model is still $p$-bisimilar and, moreover, 
$\calM', v \not \models \bigvee_{\Diamond \psi \in \calG} \psi$ still holds for all $v\notin\{w'_1,\dots,w'_n\}$.
\end{enumerate}
This finishes the construction.

Now we prove that $\calM',w'_k \not \models \varphi$ whenever $k : \varphi \in \calG$ by induction on the structure of~$\varphi$. 
\begin{itemize}
\item 
We leave the cases for~$\top$, $\bot$, $p$, $\overline{p}$, $\psi \lor \psi'$,~and~$\psi \land \psi'$, which are analogous to~$\K$, to the reader. 
\item 
If $k : \Box \psi \in \calG$, then by saturation, there is a label~$l$  such that~$l : \psi \in \calG$. By induction hypothesis, $\calM', w'_l\not \models  \psi$. Therefore, $\calM', w'_k \not \models  \Box \varphi$.
\item 
If $k : \Diamond \psi \in \calG$, then for each $v \in W'$ we have to prove $\calM', v \not \models \psi$. First, consider $v=w'_l$ for some~$l$. Since $\calG$~is saturated, $l : \psi \in \calG$. By induction hypothesis $\calM', w'_l\not \models  \psi$. Otherwise, if $v\notin\{w'_1,\dots,w'_n\}$, the falsity of~$\psi$ was assured in step~\eqref{constr:S5_3}. Thus, $\calM', w'_k \not \models \Diamond \psi$.
\end{itemize}
\end{itemize}

 There is nothing new for non-saturated cases from Table~\ref{table:ApS5}. Most of them work the same way as for~$\K$, with the exception of the penultimate row that works the same way as for~$\T$ and uses reflexivity of~$\Sfive$-models. 
\end{proof}
\fi

\section{Conclusion}
\label{sect:concl}
We have developed a constructive method of proving uniform interpolation based on 
\ifarxiv 
generalized sequent calculi such as nested sequents and hypersequents. 
\else 
nested sequent calculi.
\fi
While this is an important and natural step to further exploit these formalisms, much remains to be done. This method works well for the non-transitive logics~$\K$, $\D$,~and~$\T$ but meets with difficulties, e.g.,~for~$\Sfive$, which is also known to enjoy uniform interpolation. 
\ifarxiv
And while we successfully adapted the method to hypersequents to cover this logic, the adaptation relies on the reduction to uniform interpolation for classical propositional logic and, thus, is not fully recursive. 
\fi
There are other logics in the so-called modal cube between $\K$ and $\Sfive$ with the UIP, for which it remains to find the right formalism and adaptation of our method. Another natural direction of future work is intermediate logics, where exactly seven logics are known to have the UIP. \looseness=-1

\bibliographystyle{plainurl}
\providecommand{\noopsort}[1]{}

\end{document}